\newcommand{\unconstrained}{{\sc Unconstrained}}
\newcommand{\constrained}{{\sc $k$-Constrained}}
\newcommand{\matroidconstrained}{{\sc Matroid-Constrained}}
\newcommand{\unconstrainedonline}{{\sc Online-Unconstrained}}
\newcommand{\constrainedonline}{{\sc Streaming-$k$-Constrained}}
\newcommand{\cov}{{{\ensuremath{f}}}}
\newcommand{\cost}{{{\ensuremath{c}}}}
\newcommand{\opt}{{{\ensuremath{\mathrm{OPT}}}}}
\newcommand{\costscaledgreedy}{{\texttt{CSG}}}
\newcommand{\onlinecostscaledgreedy}{{\texttt{Online-CSG}}}
\newcommand{\onlinekcostscaledgreedy}{{\texttt{Streaming-CSG}}}
\newcommand{\costscaledlazygreedy}{{\texttt{CSLG}}}
\newcommand{\matroidcostscaledgreedy}{{\texttt{MCSG}}}
\newcommand{\distortedgreedy}{{\texttt{DistortedGreedy}}}
\newcommand{\stochasticdistortedgreedy}{{\texttt{StochasticDistortedGreedy}}}
\newcommand{\unconstraineddistortedgreedy}{{\texttt{UnconstrainedDistortedGreedy}}}
\newcommand{\topkexperts}{{\texttt{TopK}}}
\newcommand{\greedy}{{\texttt{Greedy}}}
\newcommand{\const}{{\ensuremath{s}}}
\newcommand{\GuruDataset}{{\textit{Guru}}}
\newcommand{\InfluenceDataset}{{\textit{NetHEPT}}}
\newcommand{\YelpDataset}{{\textit{Yelp}}}
\newcommand{\MovielensDataset}{{\textit{Movielens}}}
\newcommand{\etal}{{\emph{et al.}}}
\providecommand{\definitionname}{Definition}
\providecommand{\lemmaname}{Lemma}
\providecommand{\theoremname}{Theorem}
\newtheorem{problem}{Problem}
\newtheorem{thm}{\protect\theoremname}[section]
\newtheorem{lem}[thm]{\protect\lemmaname}
\providecommand{\algorithmname}{Algorithm}
\newcommand{\spara}[1]{\smallskip\noindent{\bf{#1}}}
\newcommand*{\belowrulesepcolor}[1]{%
	\noalign{%
		\kern-\belowrulesep
		\begingroup
		\color{#1}%
		\hrule height\belowrulesep
		\endgroup
	}%
}
\newcommand*{\aboverulesepcolor}[1]{%
	\noalign{%
		\begingroup
		\color{#1}%
		\hrule height\aboverulesep
		\endgroup
		\kern-\aboverulesep
	}%
}
\newcommand{\squishlist}{\begin{list}{$\bullet$}
  { \setlength{\itemsep}{0pt}
     \setlength{\parsep}{3pt}
     \setlength{\topsep}{3pt}
     \setlength{\partopsep}{0pt}
     \setlength{\leftmargin}{1.5em}
     \setlength{\labelwidth}{1em}
     \setlength{\labelsep}{0.5em} } }
\newcommand{\squishend}{
  \end{list}  }
\begin{document}

\title{An Efficient Framework for Balancing Submodularity and Cost \thanks{Department of Computer Science, Boston University. \{smnikol,aene,evimaria\}@bu.edu}}
\author{Sofia Maria Nikolakaki \and Alina Ene \and Evimaria Terzi}
\date{}

\maketitle

\begin{abstract}
In the classical \emph{selection problem}, the input consists of a collection of elements and the goal is to
pick a subset of elements from the collection such that some objective function $\cov$ is maximized.
This problem has been studied extensively in the data-mining community 
and it has multiple applications including influence maximization in social networks, team formation and 
recommender systems.
A particularly popular formulation that captures the needs of many such applications is one where
the objective function $\cov$ is a monotone and non-negative submodular function.  In these cases, 
the corresponding computational problem can be solved using a simple greedy
$(1-\frac{1}{e})$-approximation algorithm. 

In this paper, we consider a generalization of the above formulation
where the goal is to optimize a function that maximizes the 
submodular function $\cov$ minus a linear cost function $\cost$. 
This formulation appears as a more natural one, particularly when one needs to 
strike a balance between the value of the objective function and the cost being paid in order to pick 
the selected elements.   
We address variants of this problem both in an offline setting, where the 
collection is known apriori, 
as well as in online settings, where the elements of the collection arrive in an online fashion.   
We demonstrate that by using simple variants of the standard greedy algorithm (used for submodular optimization) we can design algorithms that have provable approximation guarantees, are extremely efficient and work very well in practice.   
\end{abstract}

\section{Introduction}\label{sec:introduction}
The \emph{element selection} problem is central in the data-mining community and it essentially seeks to pick
a set of 
elements from a collection so that some objective is optimized. The applications of such a general formulation abound.
The most relevant to this work are those related to \emph{influence maximization} in social networks~\cite{kempe2003maximizing, leskovec2007cost,  li2018influence}, \emph{team formation}~\cite{anagnostopoulos2012online,  kargar2013finding, lappas2009finding} and \emph{recommender systems}~\cite{borodin2012max, dasgupta2013summarization,kazemi2020regularized}.
For example, in influence maximization the goal is to pick a subset of the nodes of the network so that once they adopt an  item (e.g., idea or product) the spread of this item in the network is maximized.
Similarly, in team formation, given a collection of experts the goal is to pick a subset of them such that they cover the skills
required for a task and also optimize a social objective.
Finally, in recommender systems the goal is to pick a subset of items from a collection (e.g., restaurants or movies) such that the selected items best summarize the collection or best match the users' interests.

In many of the above examples the problem is formulated as a submodular-optimization problem where the goal is 
to pick a subset of $k$ elements $Q$ from a collection $V$ such that $\cov(Q)$ is maximized, where $\cov(Q)$ is a monotone and non-negative submodular function. Subsequently, an easy-to-implement and practical  
greedy algorithm is used to solve such problems
and provide a solution with approximation guarantee $(1-\frac{1}{e})$.
 
\spara{Conceptual contributions:} In this paper, we consider a generalization of this framework, where the goal is again to pick a set of elements $Q$ from 
an input collection $V$. However, our goal is to not only maximize the function $\cov(Q)$, but also to strike a balance between 
the benefits of choosing $Q$, as quantified by $\cov(Q)$, and the cost of picking $Q$, denoted as $\cost(Q)$.  Therefore, our goal is to find $Q\subseteq V$ to maximize the combined function:
\begin{equation}\label{eq:optimization}
g(Q) = {\cov}(Q) - {\cost}(Q),
\end{equation}
where $\cov(Q)$ is the monotone and non-negative submodular function and $\cost(Q)$ is the \emph{sum} of the costs of the
elements in the solution, i.e., a non-negative linear function. 

In the case of influence maximization, the goal is to optimize the expected spread of a product or an idea for a seed of nodes $Q$ minus the cost of picking such nodes.
Similarly, in the team-formation scenario, this would mean that we want to optimize the coverage of the task skills that the experts
in $Q$ cover minus the cost of hiring these experts.  
Finally, in recommender systems the goal could be to maximize the diversity between proposed movies minus the cost of their
distance from a particular year of being produced.

In order to capture the demands of such
application domains we consider two variants of the general problem outlined in Equation~\eqref{eq:optimization}: 
the \emph{constrained} and the \emph{unconstrained}.  
The former refers to cases where the maximum number of elements  we aim to pick is given as part of the input; the
latter finds the optimal number of elements to be picked as part of the solution. 

Although for the constrained version we only discuss the cardinality constraint, our methods extend
to handle general matroid constraints as well. We note
that this version is relevant to scenarios where the elements in the collection are partitioned into groups; 
then, a matroid constraint would impose
an upper bound on the number of elements that can be selected from every group. 

Finally, we also consider the online version of the above problems.  
In this setting the elements in the collection 
become available in an online fashion.  The
constant addition of data in online platforms
makes this setting increasingly relevant.

\spara{Algorithmic contributions:} 
The structure of the combined objective in Equation~\eqref{eq:optimization} in the offline setting and under the cardinality constraint has been addressed by previous works~\cite{feldman2019guess, harshaw2019submodular} as well.
What distinguishes us from these works is that we intentionally design algorithms with slightly weaker theoretical approximation guarantees that allow the use of runtime acceleration techniques, such as lazy greedy evaluations.
As a result, we experimentally show that our proposed algorithms can achieve significant speedups compared to previous approaches in the constrained and unconstrained settings, respectively, while we also show that in practice they perform equally well.
In addition, we propose algorithms with provable guarantees for the online and streaming settings as well as algorithms for more general constraints such as a matroid constraint.

\spara{Experimental results:} We experimentally evaluate our algorithms on real datasets
from a variety of domains such as social networks, crowdsourcing platforms and recommender systems. For our experiments
we  use different instances of the submodular function $\cov$ and the cost function $\cost$ -- chosen appropriately
for the specific application domain we experiment with.
Our experiments show that our algorithms obtain solutions with quality at least as good as existing algorithms while being
significantly faster.
 
\section{Related work}\label{sec:related}

In this section, we highlight the relationship between our work and research done 
in application and theoretical domains.

\spara{Influence maximization:} The seminal work of Kempe {\etal}~\cite{kempe2003maximizing} ignited a lot of subsequent research on influence maximization on social graphs~\cite{li2018influence}. 
In all of these works, there is an underlying information propagation model and a social network that
captures the degree of influence that every node has on others.  
The goal is to identify $k$ nodes in the network to maximize the expected spread of any item (e.g., idea or product) that these nodes adopt.  
Based on this general idea, there has been a set of 
followup works~\cite{borgs2014maximizing, chen2009efficient, leskovec2007cost,li2015real}. All these works 
assume some diffusion model such that the expected spread is a submodular function and therefore a greedy algorithm can be deployed to maximize it. 
Our work generalizes all these works as we want to maximize the expected spread minus the cost for convincing these nodes to adopt the particular item.

\spara{Team formation:} The classic team-formation problem~\cite{anagnostopoulos18algorithms,anagnostopoulos2012online,bhowmik2014submodularity,kargar2013finding,lappas2009finding,yin2018social} 
assumes that there is a pool of experts and a subset of them is selected to cover the requirements of a task, while some  criteria related to the team functionality 
(e.g., communication cost as captured in their collaboration network) are optimized. 
At the heart of all team-formation problems defined today is a \emph{set cover} problem where the goal is to cover the skills of the input task. 
In our formulation we consider an extension of this setting where not all skills need to be covered
and we seek to strike a balance between the covered skills and the cost of building a team.

\spara{Recommender systems:} Recommendations in recommender systems have often been formulated as a submodular function optimization problem, with the goal being to maximize coverage (e.g., of product attributes being addressed by reviews), while maximizing the diversity of the items being recommended~\cite{ashkan2014diversified, borodin2012max, parambath2018saga}, or improving the summarization of a set of items (e.g., restaurants available)~\cite{dasgupta2013summarization, mirzasoleiman2016fast}.
In all of the above cases, the goal has been to maximize a submodular function and not the combined function of the benefit
minus the cost associated with these recommendations. Kazemi {\etal}~\cite{kazemi2020regularized} model recommender systems as maximizing benefit minus cost, and we consider their applications in our experimental evaluation. We discuss these applications in Section~\ref{sec:preliminaries}.

\spara{Maximizing submodularity minus cost:} 
Several algorithms have been developed for maximizing both monotone and general submodular functions, but they achieve provable approximation guarantees only for non-negative functions, whereas the objective in Equation~\eqref{eq:optimization} is potentially negative. Existing hardness results imply that no multiplicative approximation guarantees are possible in polynomial time for maximizing a potentially negative submodular function with or without constraints \cite{papadimitriou1991optimization,Feige1998}.\footnote{One can observe that it is \textbf{NP}-hard to decide whether the optimum value of a submodular objective is positive or not, since we could use such a subroutine and binary search over the optimum value to obtain arbitrarily good approximate solutions, which contradicts existing hardness of approximation results for problems such as maximum cut and maximum coverage \cite{papadimitriou1991optimization,Feige1998}.} Nevertheless, the objective function we consider has some structure that has been exploited  in previous works~\cite{feldman2019guess,harshaw2019submodular,sviridenko2017optimal}.  These works have shown that in this case we should aim for a weaker notion of  approximation and find a solution $Q$ satisfying
\[
\cov(Q)-\cost(Q)\geq \alpha \cdot \cov(\opt)-\cost(\opt),
\]
for some $\alpha\leq 1$. 
The aforementioned works  propose algorithms that achieve $\alpha=(1-1/e)$ in the offline setting, which is the best guarantee achievable in polynomial time for a cardinality constraint~\cite{Nemhauser1978,Feige1998}. One of the main downsides of these algorithms is that the running time can be prohibitive. The works \cite{sviridenko2017optimal,feldman2019guess} propose algorithms based on the continuous greedy algorithm that maximizes the multilinear extension, a continuous function extending the submodular function to the domain $[0, 1]^n$. The multilinear extension is expensive to evaluate and the continuous greedy algorithm requires many iterations to converge. As a result, the algorithms have very high running times. The algorithm of \cite{feldman2019guess} is based on the standard discrete greedy algorithm, which is much more efficient, but it applies the greedy approach to a distorted objective that changes throughout the algorithm and we cannot use techniques such as lazy evaluations to speed up the algorithm. Thus the running time of the algorithm of  \cite{feldman2019guess} is $\Theta(n^2)$, where $n$ is the size of the ground set, whereas the standard greedy algorithm can be implemented to run in nearly-linear time using approximate lazy evaluations. Moreover, the implementation of the standard greedy algorithm using exact lazy evaluations achieves significant speedups in practice without affecting the approximation guarantee \cite{minoux1978accelerated}. Additionally, these algorithms are only for the offline problem.

In this paper, we give a novel approach that overcomes these limitations. Our main insight is very simple but very effective: instead of maximizing the original objective $ \cov(Q)-\cost(Q)$, we maximize a \emph{scaled} objective $ \cov(Q)- s \cdot \cost(Q)$, where $s > 1$ is an absolute constant. Unlike the approach of \cite{feldman2019guess}, our objective does not change throughout the algorithm and thus we can use lazy evaluations. Moreover, we can leverage a wide-range of existing algorithmic approaches, such as the standard greedy algorithm in the offline setting and variants of greedy in the online setting. As a result, we obtain faster offline algorithms for the cardinality-constrained problem, algorithms for the online and streaming settings, and algorithms for more general constraints such as a matroid constraint.

Our problem formulation can be viewed as a Lagrangian relaxation of the problem of maximizing a submodular function subject to a  knapsack constraint. Several algorithms have been proposed for the latter problem, including algorithms achieving the optimal $1-1/e$ approximation guarantee~\cite{sviridenko2004note}. However, these algorithms have very high running times and they are primarily of theoretical interest. For example, the algorithm of \cite{sviridenko2004note} has running time $\Theta(n^5)$, where $n$ is the size of the ground set. Thus, even if we used lazy evaluations to speed up the algorithm, the enumeration will still be a significant bottleneck. Obtaining fast and practical algorithms for the knapsack problem remains an outstanding open problem (see e.g. \cite{ene2019nearly} and references therein). Thus, our problem formulation also comes with significant algorithmic benefits compared to the formulation with a hard budget constraint.

In contemporaneous work, Kazemi  {\etal}~\cite{kazemi2020regularized} develop streaming and distributed algorithms for the cardinality-constrained problem. The streaming algorithms developed in their work and ours are conceptually very similar and they achieve the same approximation guarantee.

\section{Problem definition}\label{sec:preliminaries}
Throughout the paper we will assume a set of $n$ elements $V = \{1,\ldots, n\}$. We also assume two functions:
$\cov:2^V\rightarrow \mathbb{R}$ and $\cost:2^V\rightarrow \mathbb{R}$, such that 
$\cov$ is a monotone and non-negative submodular function  and $\cost$ is a non-negative linear function. 

Recall that a set function $h: 2^V \rightarrow \mathbb{R}$ is \emph{monotone} if
\begin{align*}
h(S) \leq h(T) \quad \forall S \subseteq T\subseteq V
\end{align*}
The set function $h: 2^V \rightarrow \mathbb{R}$ is \emph{submodular} if it satisfies the following diminishing returns property:
\begin{align*}
h(T\cup \{u\}) - h(T) \leq h(S\cup \{u\}) - h(S) \quad \forall S \subseteq T, u \in V \setminus T 
\end{align*}
An equivalent definition of submodularity is the following:
\begin{align*}
h(S) + h(T) \geq h(S\cap T) + h(S \cup T) \quad \forall S, T \subseteq V
\end{align*}
Given the above, we define our \emph{objective function} $g:2^V\rightarrow \mathbb{R}$ as follows:
\begin{equation}\label{eq:objective}
g(Q) = \lambda\cdot \cov(Q) - \cost(Q).
\end{equation}
Note that function $g$ is also submodular but it can take both negative and positive values and it is not monotone.  
The problems we aim to solve in this paper are related to optimizing this function and can be defined as follows.

\begin{problem}[{\constrained}]\label{problem:constrained}
Given a set of elements $V$ and an integer $k$, find $Q\subseteq V$ such that
$|Q|\leq k$ and
\begin{equation}
g(Q) = \lambda \cdot \cov(Q) - \cost(Q)
\label{pb:kproblem}
\end{equation}
is maximized.
\end{problem}

Our approach extends to a general matroid constraint. The algorithm we discuss in Section~\ref{sec:alg-cardinality} can be minimally modified to work 
for general matroid constraints, including its running time and approximation bounds. 

We also consider the unconstrained version of the above problem, which is defined as follows. 

\begin{problem}[{\unconstrained}]\label{problem:unconstrained}
Given a set of elements $V$ find $Q\subseteq V$ such that
\begin{equation}
g(Q) = \lambda\cdot \cov(Q) - \cost(Q)
\end{equation}
is maximized.
\end{problem}

\spara{Online problems:}
In addition to the offline setting, we study the above problems in online and streaming models of computation. We consider Problem~\ref{problem:unconstrained} in the online model where the ground set elements arrive in an online fashion, one at a time, in an arbitrary (adversarial) order. When an element arrives, we need to decide whether to add it to the solution, and this decision is irrevocable. We refer to this problem as {\unconstrainedonline}.

We also consider Problem~\ref{problem:constrained} in the streaming model where the elements arrive one at a time as in the online setting but we are allowed to store a small set of elements in memory and select the final solution from this set. We refer to this problem as {\constrainedonline}.  

\spara{The normalization coefficient $\lambda$:}
In the above definitions, $\lambda$ is a normalization coefficient that encodes our bias between the prizes and the costs. 
One can also think of $\lambda$ as a way to convert the two quantities into the same units.
Determining its value is application-dependent and is discussed in Section~\ref{sec:picklambda}.
Our algorithmic analysis is independent of this coefficient, and therefore from now on we will use
$\cov(Q)$ to refer to $\lambda\cdot\cov(Q)$.
We will also refer to 
$g(Q) = \cov(Q) - \cost(Q)$ as the \emph{combined objective function}.

\spara{Approximation guarantees:} Note that while function $\cov$ is monotone submodular and non-negative,  the combined objective function $g$ is a \emph{potentially negative} submodular function. As discussed in the introduction, no multiplicative factor approximation is possible for the problem of maximizing a submodular function that is potentially negative. Similarly to previous work (see Section~\ref{sec:related}), our algorithms construct solutions with the following kind of weaker approximation guarantees: 
\begin{equation*}
\cov(Q)-\cost(Q)\geq \alpha\; \cov(\opt)-\cost(\opt),
\end{equation*}
where $\opt$ is an optimal solution to the problem and $\alpha\leq 1$.   

\spara{Problem instances:}  In our experimental evaluation, we consider several instantiations of the monotone submodular function $\cov$ and the linear function $\cost$, arising in influence maximization in social networks, team formation, and recommender systems. 

\emph{Influence maximization in social networks:} The ground set $V$ corresponds to social network nodes and the
goal is to pick a subset of the nodes $Q\subseteq V$ such that the spread of a product (or an idea) in the network is maximized.
The function $\cov(Q)$ corresponds to the expected number of people that adopt the product given seed set $Q$.
The way the expectation is computed depends on the \emph{information propagation model} being used.
In this paper, we focus on the independent cascade and linear-threshold models~\cite{kempe2003maximizing} which makes $\cov(Q)$ submodular.
We also use a non-negative linear function $\cost(Q)$ to 
quantify the sum of the costs of convincing each node to adopt a product. 
Assuming that influential nodes are more expensive to convince, in our experiments, we model  the cost for convincing each individual as being proportional to the node's degree.

\emph{Recommender systems:} Kazemi {\etal}~\cite{kazemi2020regularized} consider several applications to recommender systems and show that they can be modeled as instances of the problem {\constrained}.  In our experimental evaluation, we use the following two problem instances proposed by them, which we describe here for completeness. In both applications, the ground set $V$ corresponds to items --- e.g., restaurants or movies --- and each item $i$ is associated with a set of features which are then used to compute the distance between two items $d(i,j)$. A similarity matrix $\mathbf{M}$ between items is formed by setting $\mathbf{M}(i,j) = e^{-d(i,j)}$.

The first application considers restaurant recommendations. The items are restaurants. The submodular function {\cov} is defined as:
\begin{equation}\label{eq:location}
\cov(Q) = \sum_{i=1}^n\max_{j\in Q} \mathbf{M}(i,j).
\end{equation}
For the linear cost function $\cost(Q) = \sum_{i\in Q} c_i$, the cost $c_i$ corresponds to the distance of restaurant $i$ to the center of the city.

The second application considers movie recommendations. The items are movies. The submodular function {\cov} is defined as:
\begin{equation}\label{eq:determinant}
\cov (Q)= \log \det (\mathbf{I} + \alpha \mathbf{M}_Q),
\end{equation}
where $\mathbf{M}_Q$ is the principal submatrix of $\mathbf{M}$ indexed by $Q$, $\mathbf{I}$ is the identity matrix and $\alpha$ is a positive scalar. 
Informally, this objective aims to diversify the vectors in $Q$.
For the linear cost function $\cost(Q) = \sum_{i\in Q} c_i$, the costs are given by $c_i$ = 10 - rating$_i$, where rating$_i$ denotes the average rating that movie $i$ has received.

\emph{Team formation:} The ground set $V$ is a set of experts and each expert $i$ is associated with 
a set of skills $S_i\subseteq S$, where $S$ is  a universe of skills. 
Given a task $T\subseteq S$ and a set of experts $Q\subseteq V$ we define the \emph{coverage} function to be the number of skills in $T$ that is covered by at least one expert in $Q$. Thus
 \begin{equation}\label{eq:skills}
 \cov(Q) = \left | \left(\cup_{i\in Q}S_i\right)\cap T \right|.
 \end{equation}
Each expert $i$ is also associated with a \emph{cost} $c_i$ needed to hire the expert. The cost of hiring a team is the sum of the expert costs, i.e., $\cost(Q) = \sum_{i\in Q} c_i$.

\section{The {\tt Cost-Scaled Greedy} Algorithm}\label{sec:alg-cardinality}

\begin{algorithm}[t]
\begin{flushleft}
\textbf{Input:} Ground set $V$, scaled objective $\tilde{g}(Q)=\cov(Q)-2\cost(Q)$, cardinality $k$. \\
\textbf{Output:} Solution $Q$.
\end{flushleft}
\begin{algorithmic}[1]
\STATE $Q\gets\emptyset$
  \FOR{$i=1,\ldots,k$}
    \STATE $e_{i} = \arg\max_{e\in V}\tilde{g}(e|Q)$\label{line:condition}
    \IF{$\tilde{g}(e_i \vert Q) \leq 0$}
	    \STATE break \label{line:terminate}
		\ENDIF
	  \STATE $Q\gets Q\cup\{e_{i}\}$
  \ENDFOR
\RETURN{$Q$}
\end{algorithmic}
\caption{\label{algo:cardinality}The {\costscaledgreedy} algorithm for the cardinality-constrained problem {\constrained}.}
\end{algorithm}

In this section, we consider the cardinality-constrained problem {\constrained}. Our algorithm for this problem is shown in Algorithm~\ref{algo:cardinality}. Throughout the paper, by elements we mean the elements of the ground set $V$. For a set function $h$, we use the notation $h(e\vert Q):=h(Q\cup\{e\})-h(Q)$ to denote the marginal gain of $e$ on top of $Q$. Our approach is very simple but effective: we apply the standard Greedy algorithm to the scaled objective $\tilde{g}(Q) = \cov(Q)- 2\cost(Q)$, and we stop adding elements once the marginal gains become negative (line~\ref{line:terminate}).  As we discuss below, the algorithm can be implemented using lazy evaluations, which leads to a very efficient and practical algorithm.  In Appendix~\ref{app:offline}, we show the following guarantee:
\begin{thm}
\label{thm:cardinality}
Algorithm \ref{algo:cardinality} returns a solution $Q$ of size at most $k$ satisfying $\cov(Q)-\cost(Q)\geq\frac{1}{2}\cov(\opt)-\cost(\opt)$.
\end{thm}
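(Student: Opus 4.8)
The plan is to reduce the stated guarantee to a single clean inequality and then prove that inequality by splitting on how the greedy loop terminates. The starting point is a purely algebraic identity relating the true objective to the scaled objective that the algorithm actually optimizes: since $g(Q)=\cov(Q)-\cost(Q)$ and $\tilde{g}(Q)=\cov(Q)-2\cost(Q)$, averaging gives
\[
g(Q)=\tfrac{1}{2}\cov(Q)+\tfrac{1}{2}\tilde{g}(Q).
\]
Consequently the target $\cov(Q)-\cost(Q)\ge\tfrac{1}{2}\cov(\opt)-\cost(\opt)$ is \emph{equivalent} to
\[
\cov(Q)+\tilde{g}(Q)\ge\cov(\opt)-2\cost(\opt),
\]
and the entire proof comes down to establishing this last inequality. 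A useful free fact is that $\tilde{g}(Q)\ge 0$ always: the algorithm only ever appends elements of strictly positive marginal gain to the initially empty set, so $\tilde{g}(Q)$ is a sum of positive terms. I would then case on the two mutually exhaustive ways the loop can exit, writing $Q_i$ for the partial solution after $i$ iterations.

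\textbf{Case 1 (early termination).} Here the loop breaks because $\tilde{g}(e\mid Q)\le 0$ for the best $e$, hence for every expert, and in particular $\cov(o\mid Q)\le 2\cost(o)$ for all $o\in\opt$. Summing this over $o\in\opt\setminus Q$, applying submodularity of $\cov$ to the marginals and then monotonicity, yields $\cov(\opt)-\cov(Q)\le\cov(\opt\cup Q)-\cov(Q)\le 2\cost(\opt)$, i.e. $\cov(Q)\ge\cov(\opt)-2\cost(\opt)$. Combined with the free fact $\tilde{g}(Q)\ge 0$, this already gives the target inequality.

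\textbf{Case 2 (the loop runs all $k$ iterations).} Here I invoke the greedy selection rule. For each step $i$, optimality of $e_i$ gives $\tilde{g}(e_i\mid Q_{i-1})\ge\tilde{g}(o\mid Q_{i-1})$ for every $o\in\opt$; summing over $o\in\opt\setminus Q_{i-1}$, using submodularity on the coverage marginals and nonnegativity of the costs, and bounding $|\opt\setminus Q_{i-1}|\le k$ produces the per-step inequality $k\,\tilde{g}(e_i\mid Q_{i-1})\ge\cov(\opt)-\cov(Q_{i-1})-2\cost(\opt)$. Summing this over $i=1,\dots,k$ and bounding each $\cov(Q_{i-1})\le\cov(Q)$ by monotonicity collapses the terms into $k\,\tilde{g}(Q)\ge k\big(\cov(\opt)-\cov(Q)-2\cost(\opt)\big)$, which rearranges to exactly $\cov(Q)+\tilde{g}(Q)\ge\cov(\opt)-2\cost(\opt)$.

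Combining the two cases with the identity finishes the argument. The step I expect to require the most care is the per-step inequality in Case 2: I must check that $\tilde{g}(e_i\mid Q_{i-1})\ge 0$ (true precisely because the loop did not break) is what legitimizes replacing $|\opt\setminus Q_{i-1}|$ by $k$, and I must track the constants so that the scaling factor $2$ on the cost is exactly what lets the $2\cost(\opt)$ term survive the division by $|\opt\setminus Q_{i-1}|$ and the subsequent summation and division by $k$ without picking up an extra factor of $k$. This is the place where the choice $s=2$ in the scaled objective is pinned down by the desired $\tfrac{1}{2}$ guarantee; the early-termination case, by contrast, is short once one reads $\cov(o\mid Q)\le 2\cost(o)$ off the stopping condition.
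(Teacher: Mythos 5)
Your proof is correct, but it takes a genuinely different route from the paper's. The paper analyzes the Greedy ordering of $Q \cup \opt$: it constructs a bijection $\pi$ from $\opt$ onto the first $\ell = |\opt|$ elements $S^{(\ell)}$ of that ordering, charges each optimal element against the greedy element it is matched to, telescopes to obtain $\tilde{g}(S^{(\ell)}) \geq \frac{1}{2}\tilde{g}(S^{(\ell)} \cup \opt)$, then separately relates $Q$ to $S^{(\ell)}$ (splitting on $|Q|\geq \ell$ versus $|Q|<\ell$) before unpacking $\tilde{g}$ into coverage and cost. You instead split on how the loop terminates and use two ingredients that the paper reserves for its online and streaming sections: the stopping condition (all scaled marginal gains nonpositive, which yields $\cov(Q)\geq \cov(\opt)-2\cost(\opt)$ --- this is essentially the whole proof of the online theorem) and the classical per-iteration greedy inequality summed over the $k$ steps, glued together by the identity $g=\frac{1}{2}\cov+\frac{1}{2}\tilde{g}$. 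Your argument is more elementary --- no hybrid ordering, no bijection --- and it makes transparent why the scaling constant $2$ is forced. The step you flagged is indeed the delicate one, and your resolution is right: replacing $|\opt\setminus Q_{i-1}|$ by $k$ is legitimate only because each selected gain is positive in the no-break case (the degenerate situation $\opt\subseteq Q_{i-1}$ is harmless since the right-hand side is then nonpositive); note also that the telescoping sum equals $\tilde{g}(Q)-\tilde{g}(\emptyset)$, which is at most $\tilde{g}(Q)$ because $\tilde{g}(\emptyset)=\cov(\emptyset)\geq 0$. What the paper's heavier machinery buys is reuse: the bijection-charging argument is exactly what survives under a matroid constraint, where, unlike the cardinality case, not every optimal element is a feasible candidate at every step and the matching must instead come from Brualdi's exchange theorem, so the paper's cardinality proof doubles as a template for Theorem~\ref{thm:matroid}. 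Your per-step comparison of $e_i$ against all of $\opt$ would not transfer to that setting.
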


\spara{Extension to matroid constraints:} Similarly to the standard Greedy algorithm, our algorithm and its analysis readily extends to the more general matroid-constrained problem. We give this extension in Section~\ref{app:matroid}. The matroid algorithm can also be sped up using the lazy evaluation technique as described below. 

\spara{Running time  and speedups:}
Similarly to the standard Greedy algorithm, the running time of {\costscaledgreedy} is $O(nk)$ evaluations of the functions $\cov$ and $\cost$, where $n$ is the number of elements and $k$ is the cardinality constraint: there are $k$ iterations and, in each iteration, we spend $O(n)$ function evaluations to compute all of the marginal gains and find the element with maximum marginal gain.  

The computational bottleneck of the algorithm is in finding the element with maximum marginal gain $\tilde{g}(e|Q)$ in every iteration. To speed up these computations and avoid unnecessary evaluations, we 
deploy the lazy evaluations technique introduced by Minoux \cite{minoux1978accelerated} for the standard Greedy algorithm.
That is, we 
store each element in a maximum priority queue with a key $v(e)$. We initialize the keys to $v(e)=\tilde{g}(e|\emptyset)$. The keys are storing potentially outdaded marginal gains and the algorithm updates them in a lazy fashion. Since $\tilde{g}$ is submodular, marginal gains can only decrease as the solution $Q$ grows and, as a result, the keys are always an upper bound on the corresponding marginal gains. In each iteration, the algorithm uses the queue to find the element with maximum marginal gain as follows. We remove from the queue the element $e$ with maximum key and evaluate its marginal gain $\tilde{g}(e|Q)$ with respect to the current solution $Q$. We then compare the marginal gain  $\tilde{g}(e|Q)$ to the key $v(e')$ of the element $e'$ that is now at the top of the queue (before removing $e$ from the queue, $e'$ was the element with the second-largest key). If $\tilde{g}(e|Q)\geq v(e')$, then $e$ is the element with largest marginal gain, since the key of every element is an upper bound on its current marginal gain. Otherwise, we reinsert $e$ into the queue with key $\tilde{g}(e|Q)$ and repeat. 

We use {\costscaledlazygreedy} to refer to the implementation of {\costscaledgreedy} with lazy evaluations. The correctness of {\costscaledlazygreedy} follows directly from submodularity, and the solution constructed is the same as that of {\costscaledgreedy}. While the worst-case running time of {\costscaledlazygreedy} and {\costscaledgreedy} are the same, lazy evaluations lead to significant speedups in practice.

We note that there is also an approximate version of the lazy evaluations technique that allows us to obtain worst-case running time that is nearly-linear at a small loss in the approximation guarantee \cite{badanidiyuru2014fast}. We do not consider this variant in this paper.

\section{The {\tt online} algorithm}
 \label{sec:alg-online}

We now turn our attention to the {\unconstrained} problem  in the \emph{online} model where the elements arrive one at a time. When an element arrives, we need to decide whether to add it to the solution, and this decision is irrevocable.  Algorithm~\ref{algo:online} considers the scaled objective $\tilde{g}(Q)=\cov(Q)-2\cost(Q)$ and it accepts every element that has positive marginal gain with respect to this scaled objective. The following theorem states our approximation guarantee. We give the analysis in Section~\ref{app:online}.

\begin{algorithm}[t]
	\begin{flushleft}
	\textbf{Input:} Stream of elements $V$, scaled objective $\tilde{g} = \cov - 2\cost$. \\
	\textbf{Output:} Solution $Q$.
	\end{flushleft}
	\begin{algorithmic}[1]
	\STATE $Q\gets\emptyset$
        \FOR{each arriving element $e$}
        		\IF{$\tilde{g}(e|Q)>0$}
			\STATE $Q\gets Q\cup\{e\}$
		\ENDIF
        \ENDFOR
        \RETURN{$Q$}
      \end{algorithmic}
	\caption{\label{algo:online}The {\onlinecostscaledgreedy} algorithm.}
\end{algorithm}

\begin{thm}
Algorithm~\ref{algo:online} returns a solution $Q$ satisfying $\cov(Q)-\cost(Q)\geq\frac{1}{2}\cov(\opt)-\cost(\opt)$.
\end{thm}

\section{The {\tt streaming} algorithm} 
\label{sec:alg-streaming}

This section considers the  {\constrained} problem in the streaming model. The algorithm is an extension of the online algorithm from Section~\ref{sec:alg-online}. As before, we consider the scaled objective $\tilde{g}(Q)=\cov(Q)-\const\cdot \cost(Q)$, where $\const\geq1$ is an absolute constant (the right choice for $\const$ is no longer $2$, see Theorem~\ref{thm:streaming} below). Now, instead of picking elements whose scaled marginal gain is positive, we pick elements whose scaled marginal gain is above a suitable threshold. In other words, we apply the single-threshold Greedy algorithm \cite{badanidiyuru2014streaming,kumar2015fast} to the scaled objective.  The resulting algorithm is shown in Algorithm~\ref{algo:streaming}. The following theorem shows that there is a way to set $\tau$ and $\const$ so that Algorithm~\ref{algo:streaming} returns a good approximate solution, and we give its proof in Appendix~\ref{app:streaming}.

\begin{algorithm}[t]
	\begin{flushleft}
	\textbf{Input:} Stream of elements $V$, scaled objective $\tilde{g} = \cov - \const \cdot \cost$ ($\const \geq 1$ is an absolute constant), cardinality $k$, threshold $\tau$. \\
	\textbf{Output:} Solution $Q$.
	\end{flushleft}
	\begin{algorithmic}[1]
	\STATE $Q\gets\emptyset$
        \WHILE{stream not empty}
        		\STATE $e\gets$next stream element
        		\IF{$\tilde{g}(e|Q)\geq\tau$ and $|Q|<k$}
			\STATE $Q\gets Q\cup\{e\}$
		\ENDIF
        \ENDWHILE
        \RETURN{$Q$}
      \end{algorithmic}
	\caption{\label{algo:streaming}The {\onlinekcostscaledgreedy} algorithm.}
\end{algorithm}

\begin{thm}
\label{thm:streaming}
When run with scaling constant $\const=\frac{1}{2}\left(3+\sqrt{5}\right)$ and threshold $\tau=\frac{1}{k}\left(\frac{1}{2}(3-\sqrt{5})\cov(\opt)-\cost(\opt)\right)$, Algorithm \ref{algo:streaming} returns a solution $Q$ such that $|Q| \leq k$ and
\[ \cov(Q)-\cost(Q)\geq \frac{1}{2}\left(3-\sqrt{5}\right)\cov(\opt)-\cost(\opt) \]
\end{thm}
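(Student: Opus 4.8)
The plan is to combine the threshold analysis of the single-threshold Greedy algorithm with the charging argument from the online algorithm of Section~\ref{sec:alg-online}. I would split the analysis according to how Algorithm~\ref{algo:streaming} terminates, writing $q = |Q|$ for the size of the returned set and noting that $\tilde{g}(\emptyset) = \cov(\emptyset) - \const\cdot\cost(\emptyset) = 0$. Throughout I will repeatedly use that, since $\const \geq 1$ and $\cost(Q) \geq 0$, we have $\cov(Q) - \cost(Q) \geq \cov(Q) - \const\cdot\cost(Q) = \tilde{g}(Q)$, so in several places it suffices to lower bound the scaled objective $\tilde{g}(Q)$.

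First I would dispose of the easy regime, namely when either $q = k$ or $\tau \leq 0$. Every element the algorithm accepts has scaled marginal gain at least $\tau$, so telescoping gives $\tilde{g}(Q) = \sum_{i=1}^{q}\tilde{g}(e_i \mid Q^{(i-1)}) \geq q\tau$. If $q = k$ this is exactly $k\tau$; if $\tau \leq 0$ then $q\tau \geq k\tau$ since $q \leq k$. In either case $\tilde{g}(Q) \geq k\tau$, and hence $\cov(Q) - \cost(Q) \geq \tilde{g}(Q) \geq k\tau = \frac{1}{2}(3-\sqrt{5})\cov(\opt) - \cost(\opt)$, which is the claimed bound.

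The substance is the remaining regime, where $q < k$ and $\tau > 0$. Here the cardinality bound was never the reason an element was rejected, so every $o \in \opt \setminus Q$ was rejected because its scaled marginal gain fell below $\tau$ on arrival; by submodularity its final marginal gain satisfies $\tilde{g}(o \mid Q) < \tau$ as well. Summing over $\opt\setminus Q$ and telescoping exactly as in the online proof yields $\tilde{g}(Q\cup\opt) - \tilde{g}(Q) \leq \sum_{o\in\opt\setminus Q}\tilde{g}(o\mid Q) < |\opt\setminus Q|\,\tau \leq k\tau$, where the last step uses $|\opt\setminus Q| \leq |\opt| \leq k$ and $\tau > 0$. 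Expanding $\tilde{g}$, using monotonicity of $\cov$ (so $\cov(Q\cup\opt) \geq \cov(\opt)$), non-negativity of $\cost$ (so $\cost(\opt\setminus Q) \leq \cost(\opt)$), and rearranging, I obtain a coverage lower bound of the form $\cov(Q) > \cov(\opt) - \const\cdot\cost(\opt) - k\tau = \frac{\sqrt{5}-1}{2}\cov(\opt) - \frac{1+\sqrt{5}}{2}\cost(\opt)$. Separately, since $\tau > 0$ the accepted elements give $\tilde{g}(Q) \geq q\tau \geq 0$, i.e. $\cost(Q) \leq \frac{1}{\const}\cov(Q)$, whence $\cov(Q) - \cost(Q) \geq \left(1 - \tfrac{1}{\const}\right)\cov(Q) = \frac{\sqrt{5}-1}{2}\cov(Q)$. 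Substituting the coverage lower bound (and using $\tfrac{\sqrt{5}-1}{2} > 0$) then yields $g(Q) > k\tau$, closing this regime.

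The main obstacle --- and the real reason for the peculiar constants --- is choosing $\const$ and $\tau$ so that the two regimes yield the \emph{same} final guarantee. Writing the target as $\beta\cov(\opt) - \cost(\opt)$ with $k\tau = \beta\cov(\opt) - \cost(\opt)$, the easy regime is automatic, while matching the $\cost(\opt)$ coefficients produced by the two inequalities in the hard regime forces $\left(1-\tfrac{1}{\const}\right)(\const-1) = 1$, i.e. $(\const-1)^2 = \const$, whose root that is at least $1$ is $\const = \frac{1}{2}(3+\sqrt{5})$; matching the $\cov(\opt)$ coefficients then fixes $\beta = \frac{1}{2}(3-\sqrt{5})$. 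I would verify the final arithmetic using the golden-ratio identities $1 - \tfrac{1}{\const} = \frac{\sqrt{5}-1}{2}$, $\left(\frac{\sqrt{5}-1}{2}\right)^2 = \frac{3-\sqrt{5}}{2}$, and $\frac{\sqrt{5}-1}{2}\cdot\frac{1+\sqrt{5}}{2} = 1$, which make the product of the two bounds collapse exactly to $k\tau$. A secondary point requiring care is precisely the sign of $\tau$: the threshold is defined in terms of $\opt$ and need not be non-negative, but as noted the regime $\tau \leq 0$ is subsumed by the easy argument, and when $\tau > 0$ the coverage lower bound is automatically positive, so the final multiplication is valid.
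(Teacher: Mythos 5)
Your proof is correct and follows essentially the same route as the paper: the same two-case split ($|Q|=k$ handled by telescoping above the threshold, $|Q|<k$ handled by summing $\tilde{g}(o\vert Q)$ over rejected elements of $\opt$ via submodularity), the same two resulting inequalities, and the same balancing that leads to $\const^{2}-3\const+1=0$ and the golden-ratio constants. The one genuine improvement is your explicit treatment of $\tau\leq 0$ (folded into the easy regime): the paper's Case-2 steps $-\tau|\opt\setminus Q|\geq-\tau k$ and $\tilde{g}(Q)\geq\tau|Q|\geq 0$ silently assume $\tau\geq 0$, which the paper's standing assumption of a positive optimum does not guarantee, so your version closes that small gap.
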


Setting the threshold as suggested by the above theorem requires knowing $\hat{g}(\opt)$, where $\hat{g}(Q):=\frac{1}{2}(3-\sqrt{5})\cov(Q)-\cost(Q)$. To remove this assumption, we use the standard technique introduced by \cite{badanidiyuru2014streaming} and run several copies of the basic algorithm in parallel with different guesses for $\hat{g}(\opt)$. We only lose $\epsilon$ in the approximation due to guessing and we use $O\left(k \log{k} / \epsilon \right)$ total space to store the $O(\log{k}/\epsilon)$ solutions.  We refer the reader to Appendix~\ref{app:streaming} for more details.

\begin{thm}
There is a streaming algorithm {\onlinekcostscaledgreedy} for the cardinality-constrained problem $\max_{|Q| \leq k} \cov(Q) - \cost(Q)$ that takes as input any $\epsilon > 0$ and it returns a solution $Q$ satisfying
\[ \cov(Q)-\cost(Q)\geq \left(\frac{1}{2}\left(3-\sqrt{5}\right) - \epsilon \right)\cov(\opt)-\cost(\opt) \]
The algorithm uses $O\left(k \log{k} / \epsilon \right)$ space.
\end{thm}

The result presented in this section was obtained independently and concurrently by \cite{kazemi2020regularized} and a preliminary version of our work \cite{ene2020team}.

\section{Experiments}
\label{sec:experiments}

In this section, we experimentally evaluate our algorithms on real-world datasets. 
\subsection{Algorithms of the comparative study}
We compare our algorithms to a variety of baseline methods, as well as to the state-of-the-art algorithms.

\spara{Algorithms used for the {\constrained} problem:}
We experimentally evaluate our main algorithms for the {\constrained} problem: the cost-scaled Greedy algorithm ({\costscaledgreedy}) and its variant with lazy evaluations ({\costscaledlazygreedy}) (see Section~\ref{sec:alg-cardinality}). Recall that the two algorithms return the same solution, but {\costscaledlazygreedy} is expected to be significantly faster due to the lazy evaluations. Thus, in the objective evaluation plots we only include  {\costscaledlazygreedy}. However, in the running-time evaluation plots we present both algorithms to demonstrate the speedups achieved due to lazy evaluations.

We also  evaluate our streaming algorithm, {\onlinekcostscaledgreedy}, described in Section~\ref{sec:alg-streaming}. The algorithm is designed for the more challenging setting where the elements arrive in a stream, one at a time, and the algorithm can make only one pass over the elements and store only a small number of elements in memory. We evaluate the  performance
of  {\onlinekcostscaledgreedy} against offline algorithms that  have complete knowledge of the input datasets and can make many passes over the elements.

The baselines we consider are algorithms proposed in prior work 
as well as some intuitive heuristics. We list them below:
\squishlist
\item
\emph{{\distortedgreedy}~\cite{harshaw2019submodular}}: Builds on the Greedy approach but, instead of considering a constant scaled objective like we do, the authors design a \emph{distorted objective} which changes throughout the algorithm. The distorted objective initially places higher relative importance on the modular cost term $c$, and gradually increases the relative importance of the coverage function as the algorithm progresses. {\distortedgreedy} makes O($nk$) evaluations and returns a solution $Q$ of size at most $k$ satisfying $\cov(Q)-\cost(Q)\geq(1-\frac{1}{e})\cov(\opt)-\cost(\opt)$.

\item
\emph{{\stochasticdistortedgreedy}~\cite{harshaw2019submodular}}: Uses the same distorted objective as {\distortedgreedy} but has faster asymptotic runtime because it optimizes over a random sample in each iteration.

\item
\emph{{\greedy}}: This is the greedy algorithm for maximizing submodular set functions~\cite{nemhauser1978analysis}. The only difference from {\costscaledgreedy} is that instead of computing the marginal value with respect to the scaled objective we use the original objective $g = \cov - \cost$.

\item
\emph{{\topkexperts}}: This is a natural heuristic baseline algorithm that runs as follows. The algorithm gives each element $e\in V$ a linear weight $w(e)= \cov(\{e\})-\cost(e)$ and it selects the (at most) $k$ elements with the largest positive weights. If there are fewer than $k$ elements with positive weight, the algorithm selects all of the elements with positive weights; otherwise, the algorithm selects the $k$ elements with largest weights. 
\squishend

\spara{Algorithms used for the {\unconstrained} problem:} For the {\unconstrained} problem
we evaluate all the methods that we used for the {\constrained} problem by setting $k=n$. So all algorithms
described above are included in the comparison.

Additionally, we evaluate our online algorithm, {\onlinecostscaledgreedy}, described in Section~\ref{sec:alg-online}.
This algorithm addresses the harder online problem where the elements are presented in an online fashion, and the algorithm needs to irrevocably decide whether to include the element in the solution when the element arrives. We evaluate the algorithm's performance against offline algorithms that have complete knowledge of the input datasets.

In terms of baselines, we also consider the following:
\squishlist
\item 
{\unconstraineddistortedgreedy}~\cite{harshaw2019submodular}: A linear-time algorithm for the unconstrained problem that runs for $n$ iterations and in each iteration evaluates the marginal gain of a single element sampled uniformly at random.
\squishend

\subsection{Experimental setup}
For all our experiments we evaluate the algorithmic performances on different subsets of the original data and we report the average performance value of each algorithm over these 15 samples, denoted as its line, as well as the confidence interval of the result, denoted as the bar around the line.
Our code is in Python and for all our experiments we use single-process implementations on a 64-bit MacBook Pro with an Intel Core i7 CPU at 2.6GHz and 16 GB RAM.
For replication purposes we make the code, the datasets and the chosen hyperparameters available online.\footnote{https://www.dropbox.com/sh/vu87zte0p4hrybz/AACs0liWgCejxj5R9FEowhRza?dl=0}

\spara{Selecting the value of the normalization coefficient $\lambda$:}
\label{sec:picklambda}
The combined objective introduced in Section \ref{sec:preliminaries} compares the gain and the cost;
the gain corresponds to the value of a submodular function and the cost is the value of a linear function.
The purpose of the parameter $\lambda$ is to transform these two quantities into comparable units and we set it as follows.
First,  we use the well-known greedy algorithm \cite{nemhauser1978analysis} to find the set of elements $Q^\ast$ that maximize the submodular function.
Then, we define $\lambda=\beta\frac{\cost(Q^\ast)}{\cov(Q^\ast)}$, with $\beta \in \{2, 4\}$ depending on the dataset such that the gain and the cost are in a comparable scale.

\spara{Setting the algorithmic parameter $\epsilon$:}
Recall that algorithms {\stochasticdistortedgreedy} and {\onlinekcostscaledgreedy} require an error parameter $\epsilon$ as part of their input.
This is a trade-off parameter between the quality of the solution and the algorithm's running time.
To select an appropriate value  $\epsilon$ we performed a set of experiments for different $\epsilon$ values and picked the one that achieves the best (for the algorithm) solution, without sacrificing the running time.
Due to lack of space we omit these plots. 
Throughout the experiments we fix $\epsilon$=0.01 and $\epsilon$=0.05 for  {\stochasticdistortedgreedy} and {\onlinekcostscaledgreedy}  respectively.

\spara{Applications and datasets:} We experimentally evaluate
 the proposed algorithms on datasets from 
application domains we discussed in Section~\ref{sec:preliminaries}.  

\emph{Influence Maximization:} In these experiments, we follow the experimental setup of \cite{chen2009efficient, kempe2003maximizing}.  We use the academic collaboration network from  the ``High Energy Physics-Theory'' section of the e-print arXiv that was used in these prior works.
We consider a network which contains 1077 nodes and 3505 edges (one of the largest components of the whole dataset) and we refer to this as the
{\InfluenceDataset} dataset.
In our experiments,
we use  the independent cascade model,  but similar results hold for the linear threshold model as well.
We treat the multiplicity of edges as weights and similar to the experiments of~\cite{chen2009efficient,kempe2003maximizing} we assign a uniform probability of $p=0.01$ to each edge.
The specific instantiations of functions $\cov$ and $\cost$ in this application are the ones described in Section~\ref{sec:preliminaries}.

\emph{Team Formation:} In these experiments, we follow the experimental setup of \cite{anagnostopoulos18algorithms, golshan14profit}.
We use a real-world dataset from the online expertise-management platform {\texttt{guru.com}} that henceforth we refer to as {\GuruDataset}. 
All expert-related data used in this work are obtained from anonymized profiles of members registered in the marketplace. 
Our dataset has 6120 experts and 20 tasks, each requiring 15 skills. For each expert we also know the set of skills
the expert has.
Since we consider multiple tasks, we evaluate our algorithms for each task separately and report the average performance value of each algorithm over all tasks as well as the confidence interval of the results.
The  instantiations of functions $\cov$ and $\cost$ in this application are the ones described in Section~\ref{sec:preliminaries} (see Eq.~\eqref{eq:skills}).

\emph{Recommender Systems:} 
For the recommender systems application we consider two separate applications from the work of \cite{kazemi2020regularized}:
(i) restaurant summarization, and (ii) movie recommendation. We follow the experimental setup of \cite{kazemi2020regularized}.

For  the restaurant summarization task we use a subset of restaurant businesses obtained from the Yelp Academic dataset.\footnote{\url{https://www.yelp.com/dataset}}
We use features that cover a range of restaurant attributes.\footnote{To extract the features of each restaurant we use the script provided at \url{https://github.com/vc1492a/Yelp-Challenge-Dataset}.}
We focus on the restaurants from the metropolitan area of Las Vegas and the goal is 
to identify representative restaurants in that area. 
To evaluate our algorithms we consider random restaurant subsets of size 651 (5\% of the data) and report the average performance of our algorithms over all subsets as well as their confidence interval. We refer to this dataset as {\YelpDataset}. The  instantiations of functions $\cov$ and $\cost$ in this application are the ones described in Section~\ref{sec:preliminaries} (see Eq.~\eqref{eq:location}).

For the movie recommendation task we use the MovieLens~\cite{harper2015movielens} dataset.
In this dataset we are given user ratings for different movies.
To extract the feature vector for each movie we use gradient descent~\cite{koren2009matrix}.
At the end of this process we obtain 40 latent factors for each movie.
After filtering out movies with less than $50$ ratings , we consider random subsets of 658 movies to evaluate our algorithms over different inputs. We refer to this dataset as {\MovielensDataset}.
The instantiations of functions $\cov$ and $\cost$ in this application are the ones described in Section~\ref{sec:preliminaries} (see Eq.~\eqref{eq:determinant}).

\subsection{Evaluation for \large{{\constrained}}}
\label{exp:constrained}
\begin{figure*}
\centering
\begin{minipage}[t]{0.24\linewidth}
    \includegraphics[width=\linewidth]{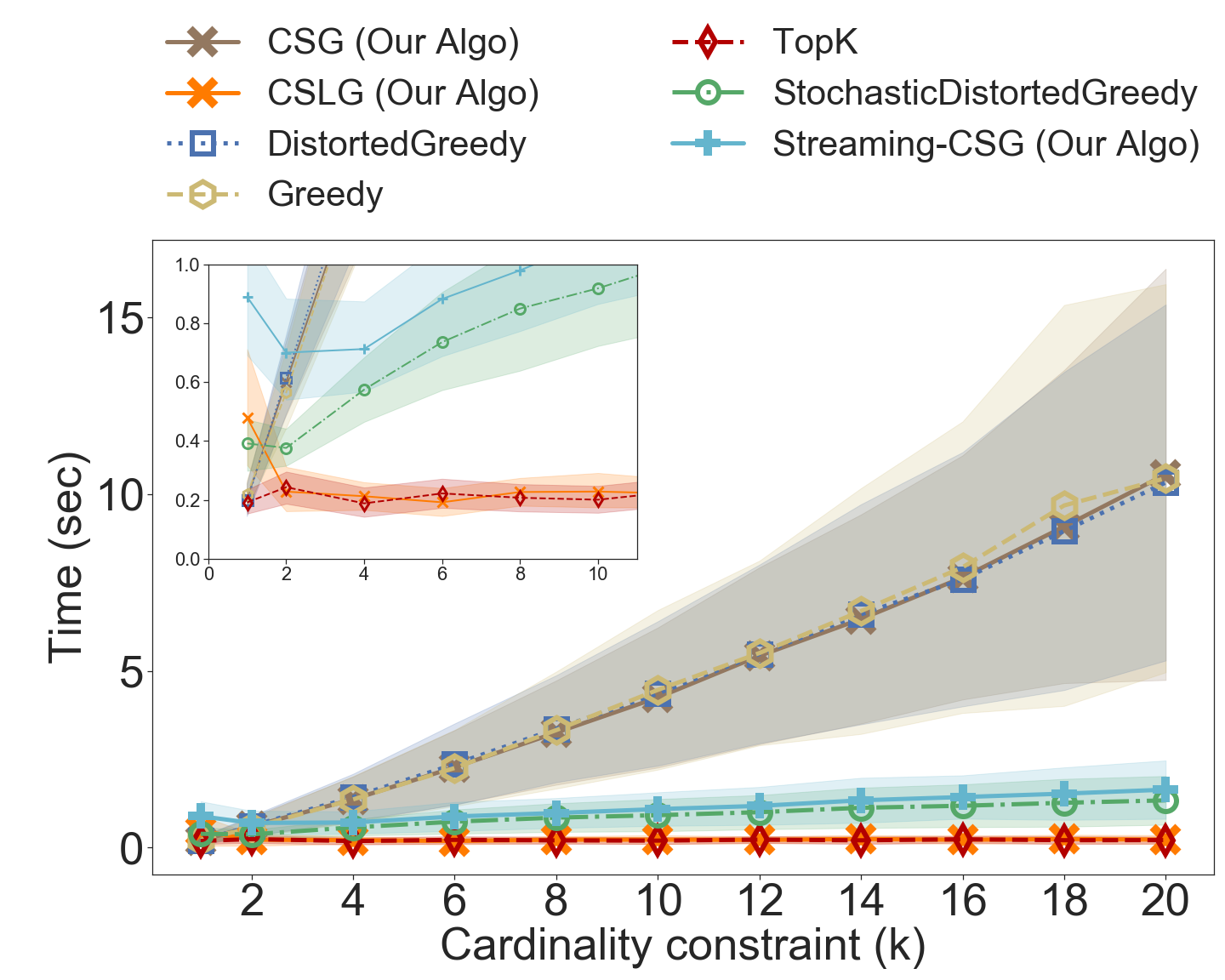}
    \subcaption{\label{fig:time_inf}\scriptsize{{\InfluenceDataset}}}
\end{minipage}
\begin{minipage}[t]{0.24\linewidth}
    \includegraphics[width=\linewidth]{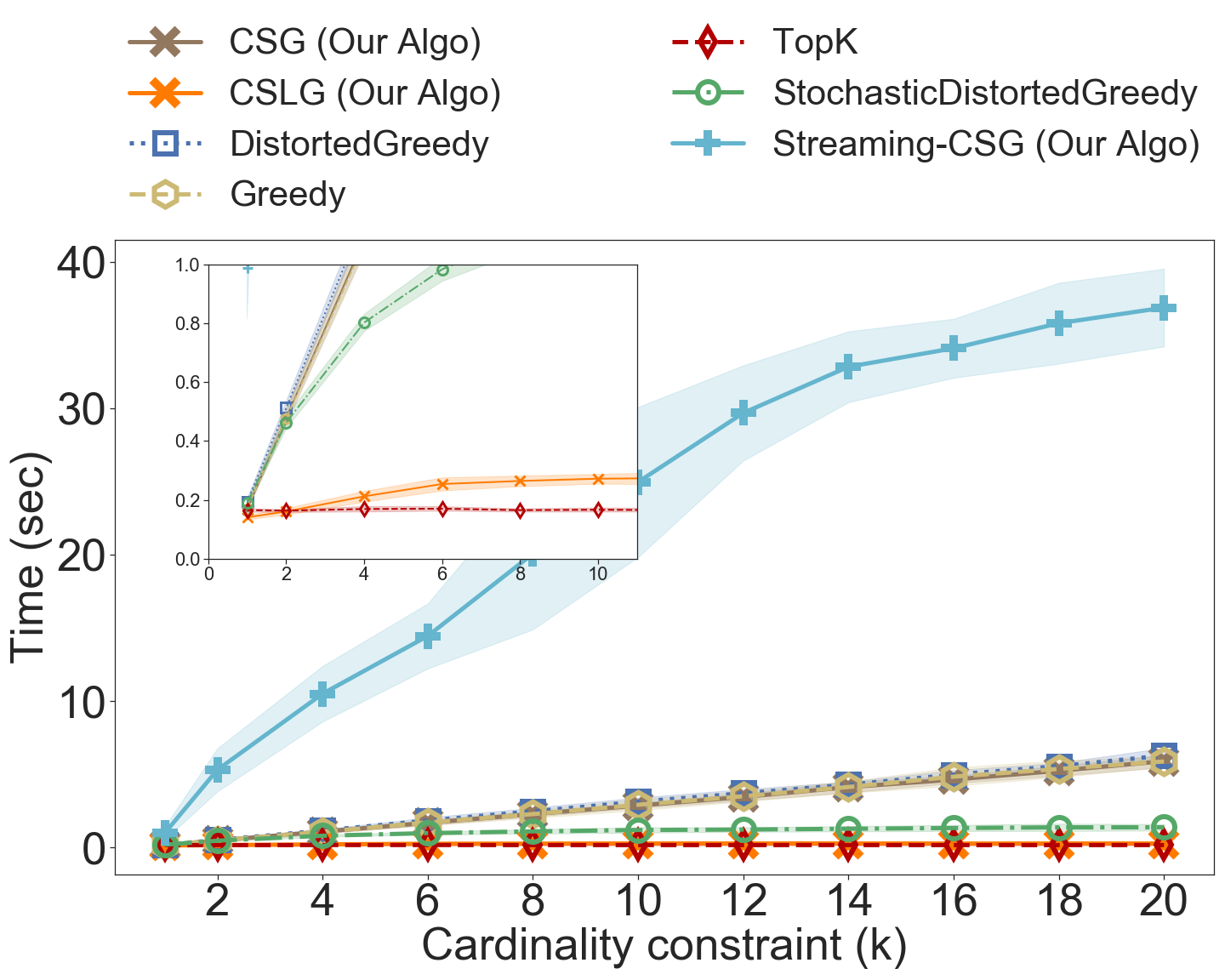}
    \subcaption{\label{fig:time_guru}\scriptsize{{\GuruDataset}}}
\end{minipage}
\begin{minipage}[t]{0.24\linewidth}
    \includegraphics[width=\linewidth]{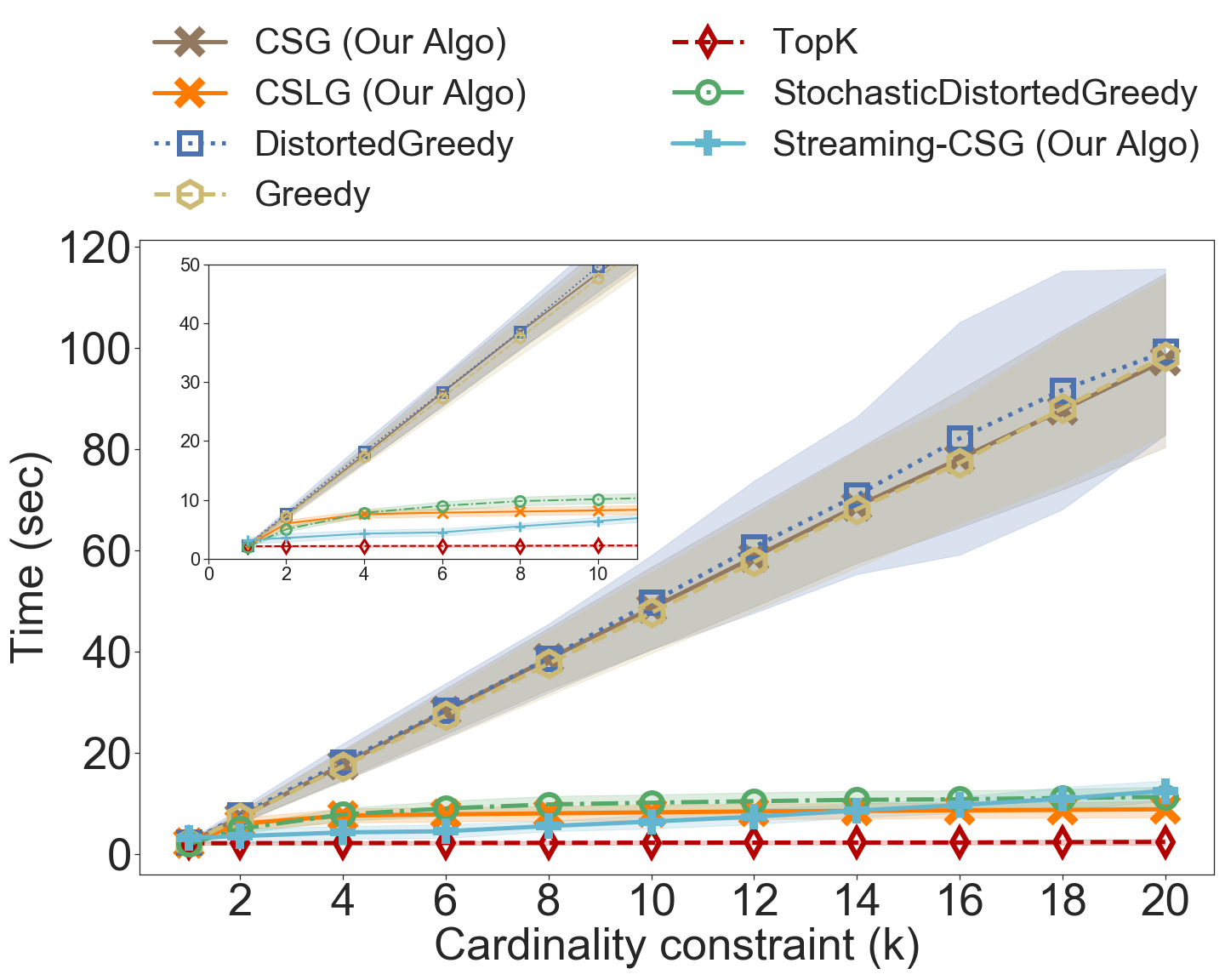}
    \subcaption{\label{fig:time_yelp}\scriptsize{{\YelpDataset}}}
\end{minipage}
\begin{minipage}[t]{0.24\linewidth}
    \includegraphics[width=\linewidth]{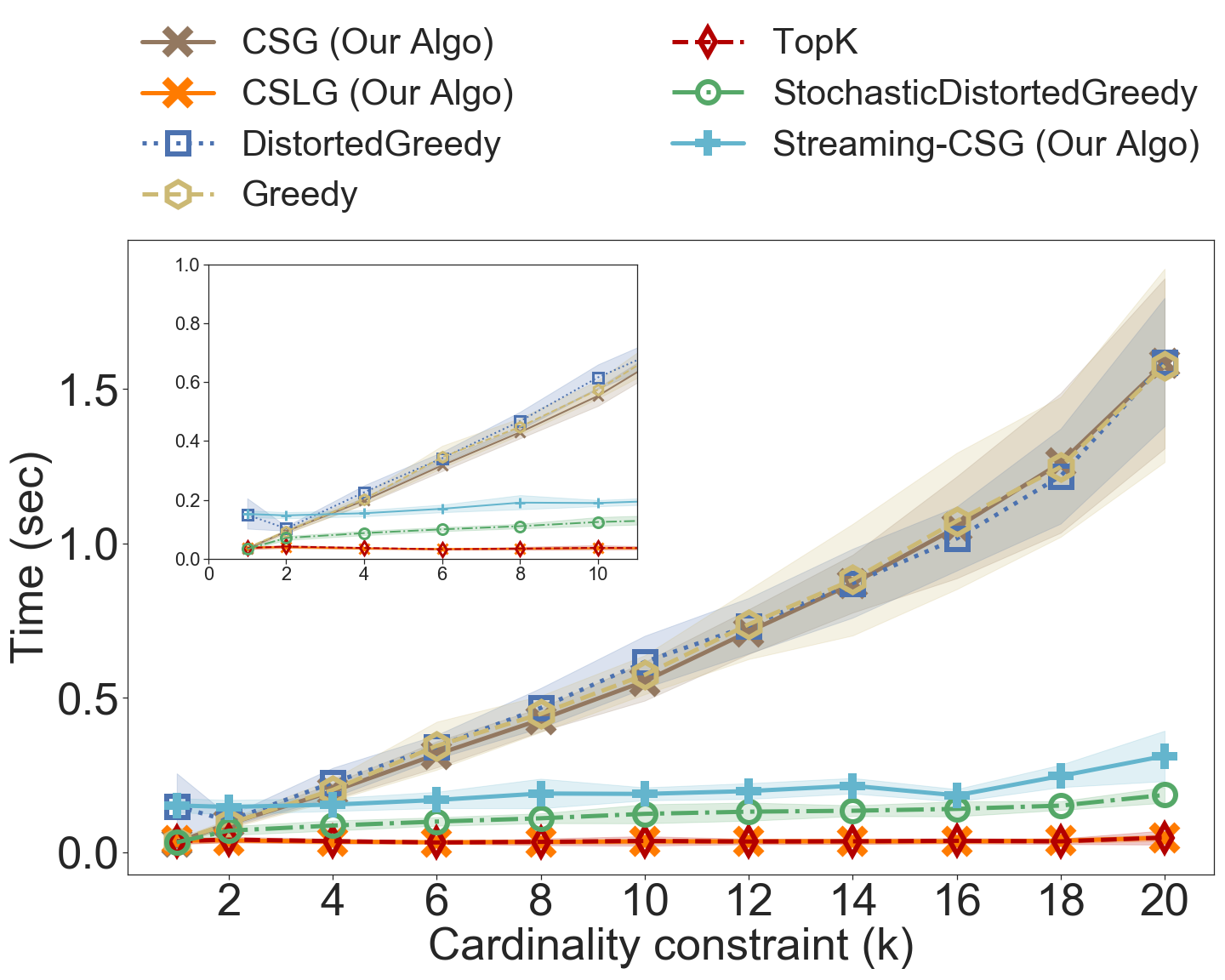}
    \subcaption{\label{fig:time_movie}\scriptsize{{\MovielensDataset}}}
\end{minipage}
\caption{\label{fig:time_constrained} Running time (sec) comparisons of all algorithms for the  {\constrained} problem.}
\end{figure*}
\begin{figure*}
\centering
\begin{minipage}[t]{0.24\linewidth}
    \includegraphics[width=\linewidth]{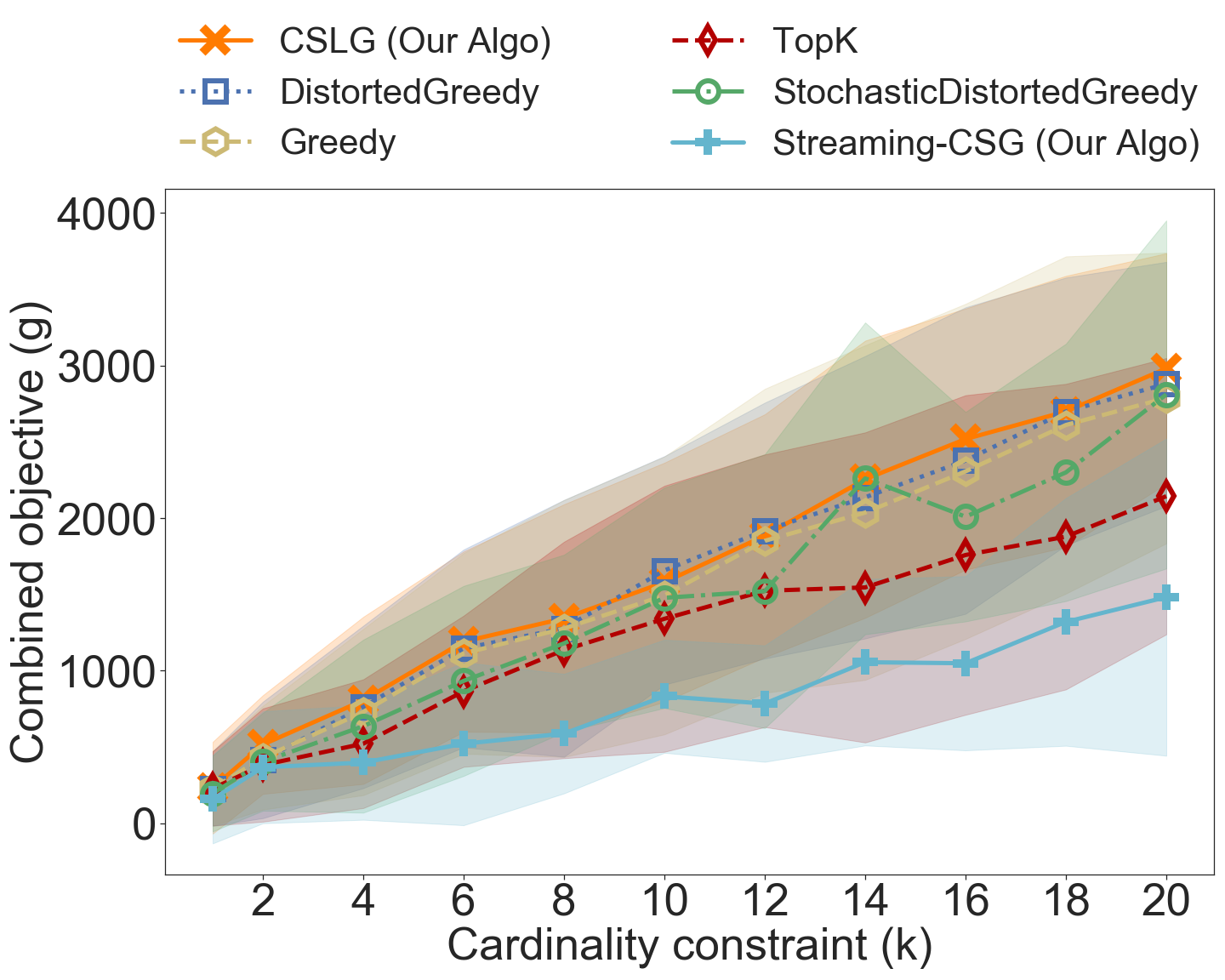}
    \subcaption{\label{fig:score_inf}\scriptsize{{\InfluenceDataset}}}
\end{minipage}
\begin{minipage}[t]{0.24\linewidth}
    \includegraphics[width=\linewidth]{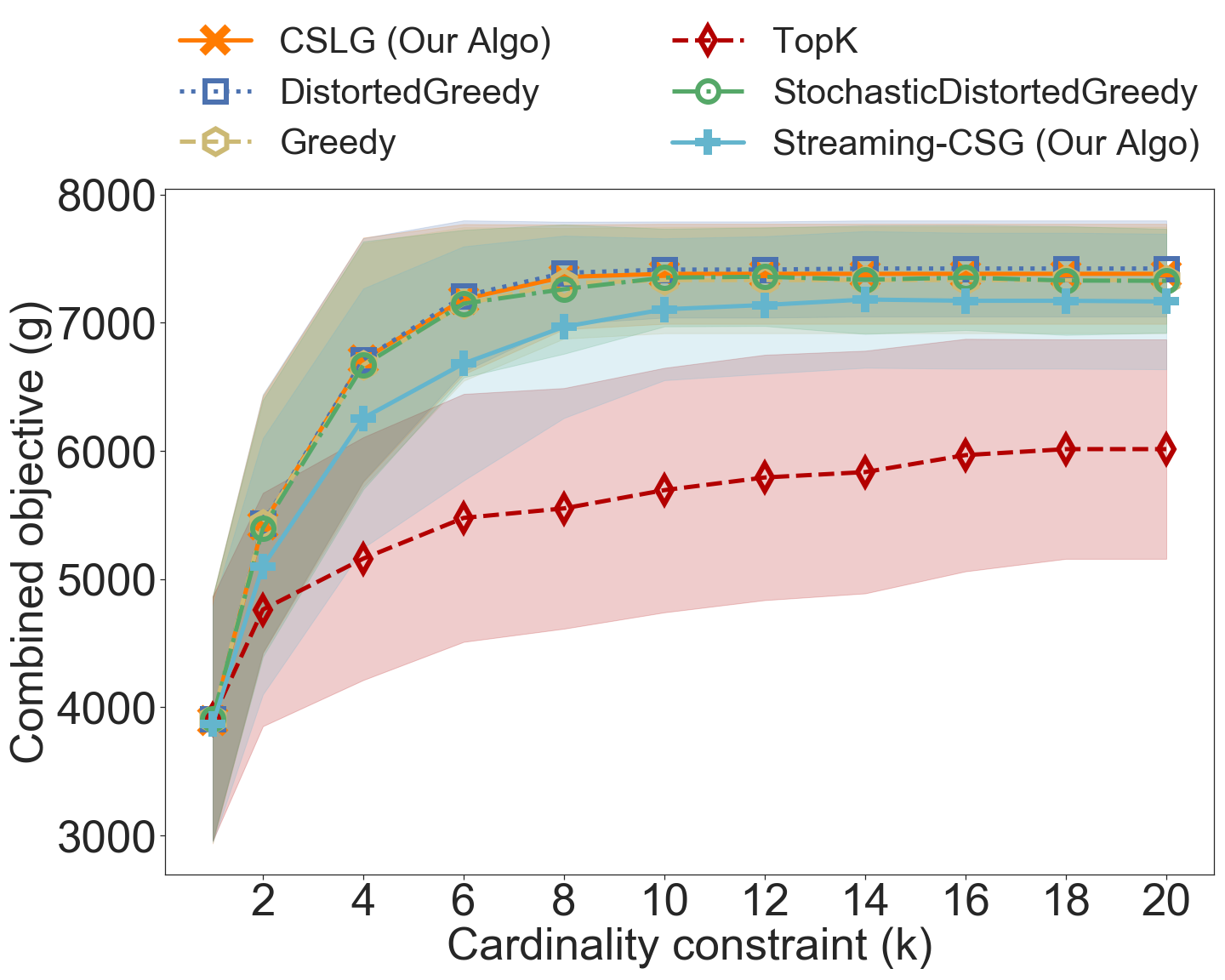}
    \subcaption{\label{fig:score_guru}\scriptsize{{\GuruDataset}}}
\end{minipage}
\begin{minipage}[t]{0.24\linewidth}
    \includegraphics[width=\linewidth]{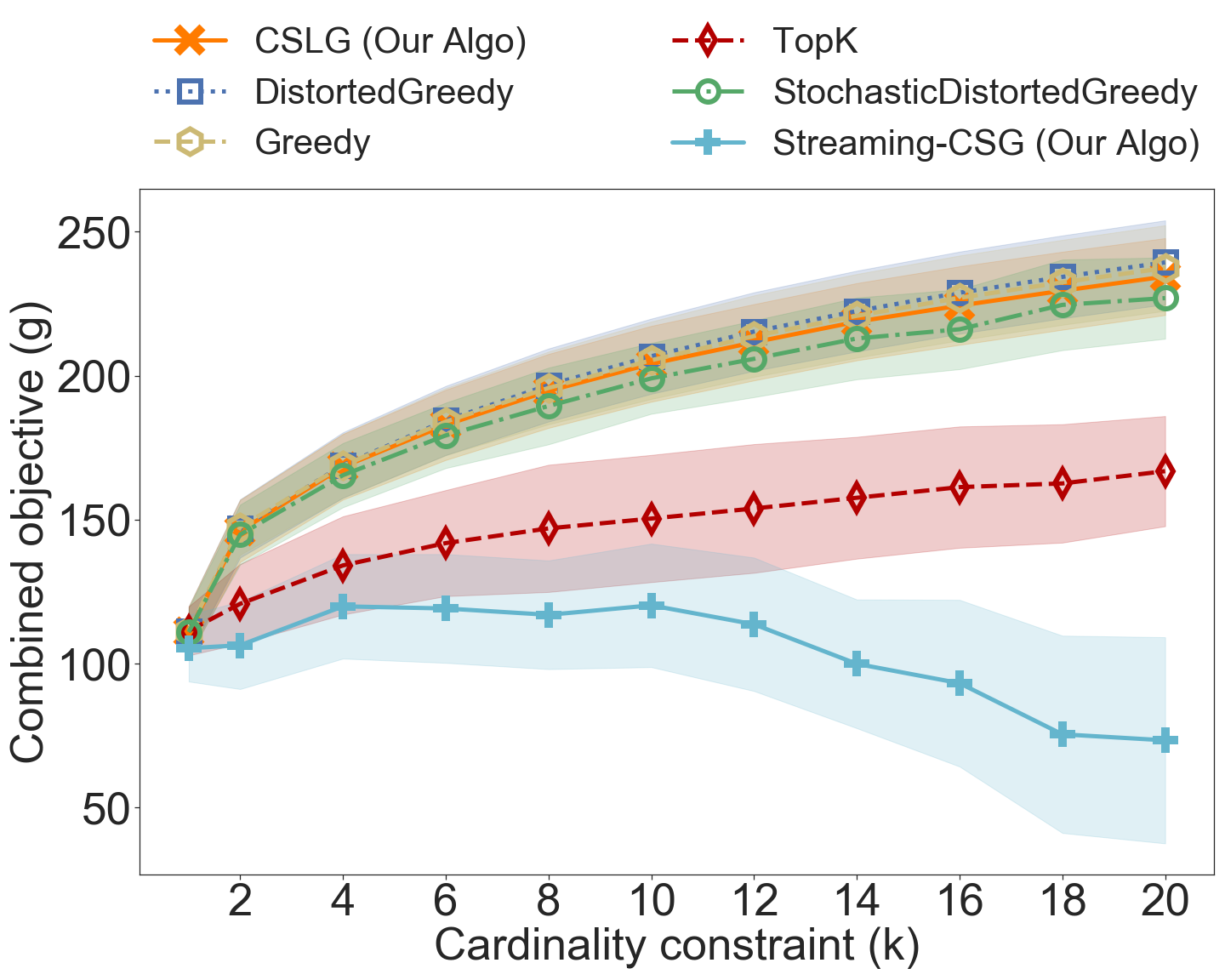}
    \subcaption{\label{fig:score_yelp}\scriptsize{{\YelpDataset}}}
\end{minipage}
\begin{minipage}[t]{0.24\linewidth}
    \includegraphics[width=\linewidth]{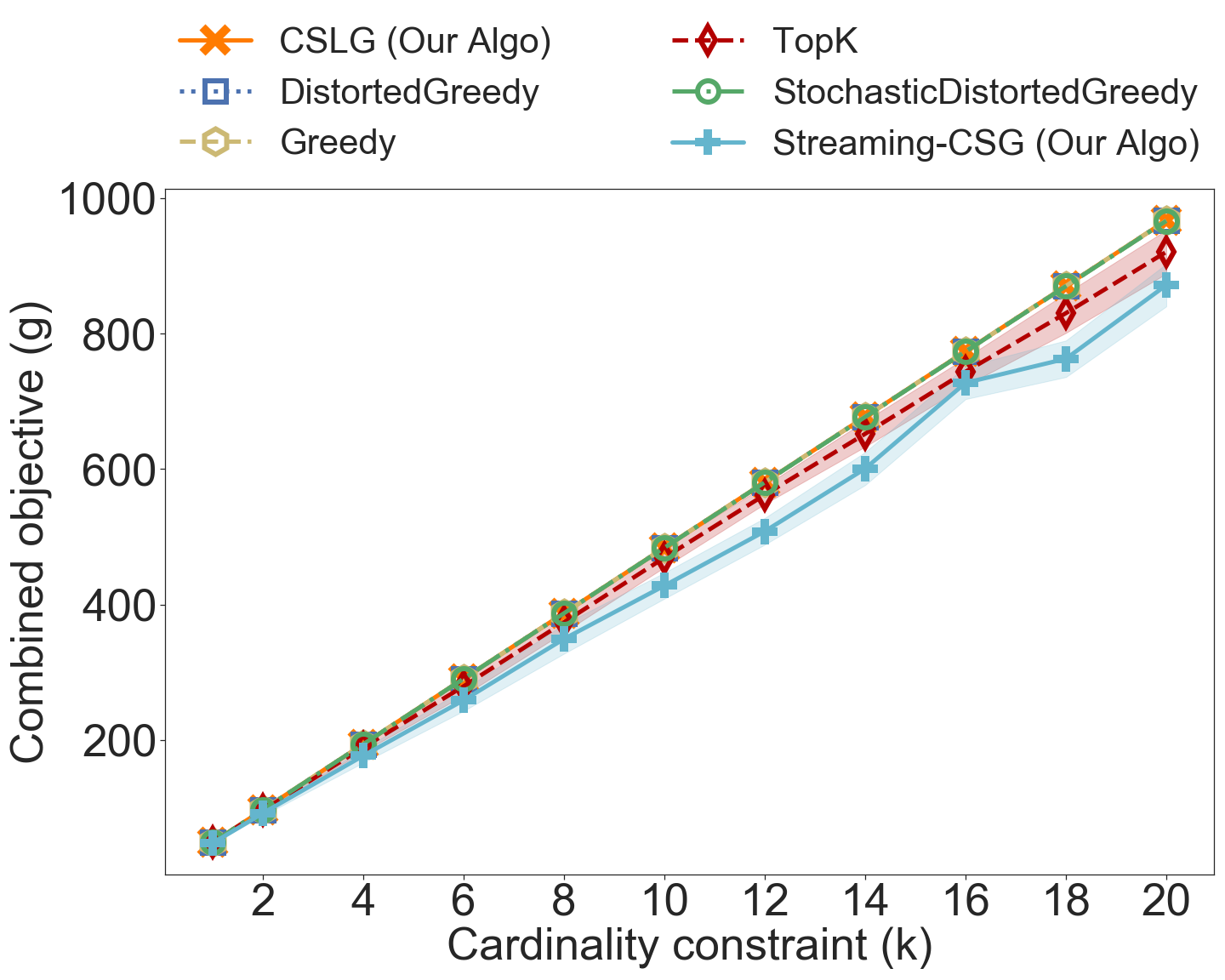}
    \subcaption{\label{fig:score_movie}\scriptsize{{\MovielensDataset}}}
\end{minipage}
\caption{\label{fig:perf_constrained} Combined objective value ($g$) comparisons of all algorithms for the {\constrained} problem.}
\end{figure*}

Here we evaluate the running time and empirical performance of the algorithms for the {\constrained} problem. 
In summary, we demonstrate that using our  algorithm {\costscaledlazygreedy}  can lead to faster running time without sacrificing the quality of the solution.

\spara{Runtime analysis:}
We start by evaluating the scalability of our methods. We vary the cardinality parameter $k$ and compute the running time of each algorithm.
The results of their running time performance are shown in Figure \ref{fig:time_constrained}.
The $y$-axis represents the running time (in sec), and the $x$-axis represents the cardinality $k$.

We note that {\distortedgreedy} and {\costscaledgreedy} require the most time to run and have very close performances.
Next, we consider the  {\stochasticdistortedgreedy} algorithm,  which is faster than the aforementioned algorithms because in each iteration it only evaluates the marginal gain of a subset of the elements.
We now draw the attention to the computational savings when using our proposed cost scaled greedy with lazy evaluations. 
In all datasets and specifically for larger values of $k$, a reasonable setting in all of our applications, {\costscaledlazygreedy} is more than 100x faster than both {\distortedgreedy} and {\costscaledgreedy}, and 10x faster than {\stochasticdistortedgreedy}. 
The only algorithm whose running time is comparable to {\costscaledlazygreedy} is {\topkexperts}, but the latter algorithm achieves lower objective value.

\spara{Performance evaluation:} 
Here we show that the aforementioned computational gains are achieved without sacrificing the solution quality.
For the performance evaluation we vary the cardinality parameter $k$ and compute the combined objective function 
($g$) of the obtained solution.
We present the results  in Figure \ref{fig:perf_constrained}.

We observe that the performance trends of the algorithms are overall consistent between all datasets and applications.
Note that as $k$ increases so does the objective value of the solutions found by the algorithms.
For the case of {\GuruDataset} and {\YelpDataset} we notice that the performance of the algorithms increases up until some point where it seems to stabilize.
A possible explanation is that initially the algorithms benefit from adding more elements to the solution because increasing the submodular gain outweighs the cost.
However, for some cardinality $k$ the algorithms may reach a solution where adding more elements does not benefit them. This happens in two cases; (i) when we have reached the maximum possible submodular value (e.g. covering all the requirements of a task), and (ii) when the benefit from increasing the submodular value is smaller than paying the corresponding cost.
We see that this is less pronounced in {\InfluenceDataset} and is not observed at all in {\MovielensDataset}.

Comparison across algorithms reveals that
the baseline {\topkexperts} and {\onlinekcostscaledgreedy} have the worse performance.
Intuitively,  the latter has a lower performance because it is an online algorithm and  we expect it to perform worse compared to the offline algorithms.
Among the offline algorithms, the {\stochasticdistortedgreedy} is slightly outperformed by {\distortedgreedy}, {\costscaledlazygreedy} and {\greedy} in all datasets except from {\MovielensDataset} where it has the same performance.
Finally, {\distortedgreedy}, {\costscaledlazygreedy} and {\greedy} perform similarly, with the last performing slightly worse for larger $k$ values. 
We note that even though {\greedy} is a heuristic  without provable approximation guarantees it still performs well. The comparison between {\distortedgreedy} and {\costscaledlazygreedy} shows that in practice the two algorithms perform the same for the cardinality constraint problem.
Overall, we see that {\costscaledlazygreedy} can achieve solutions of the same value as {\distortedgreedy} but is orders of magnitude faster as discussed above.

\subsection{Evaluation for {\large{\unconstrained}}}
\label{exp:unconstrained}

\begin{figure*}
\centering
\begin{minipage}[t]{0.24\linewidth} 
    \includegraphics[width=\linewidth]{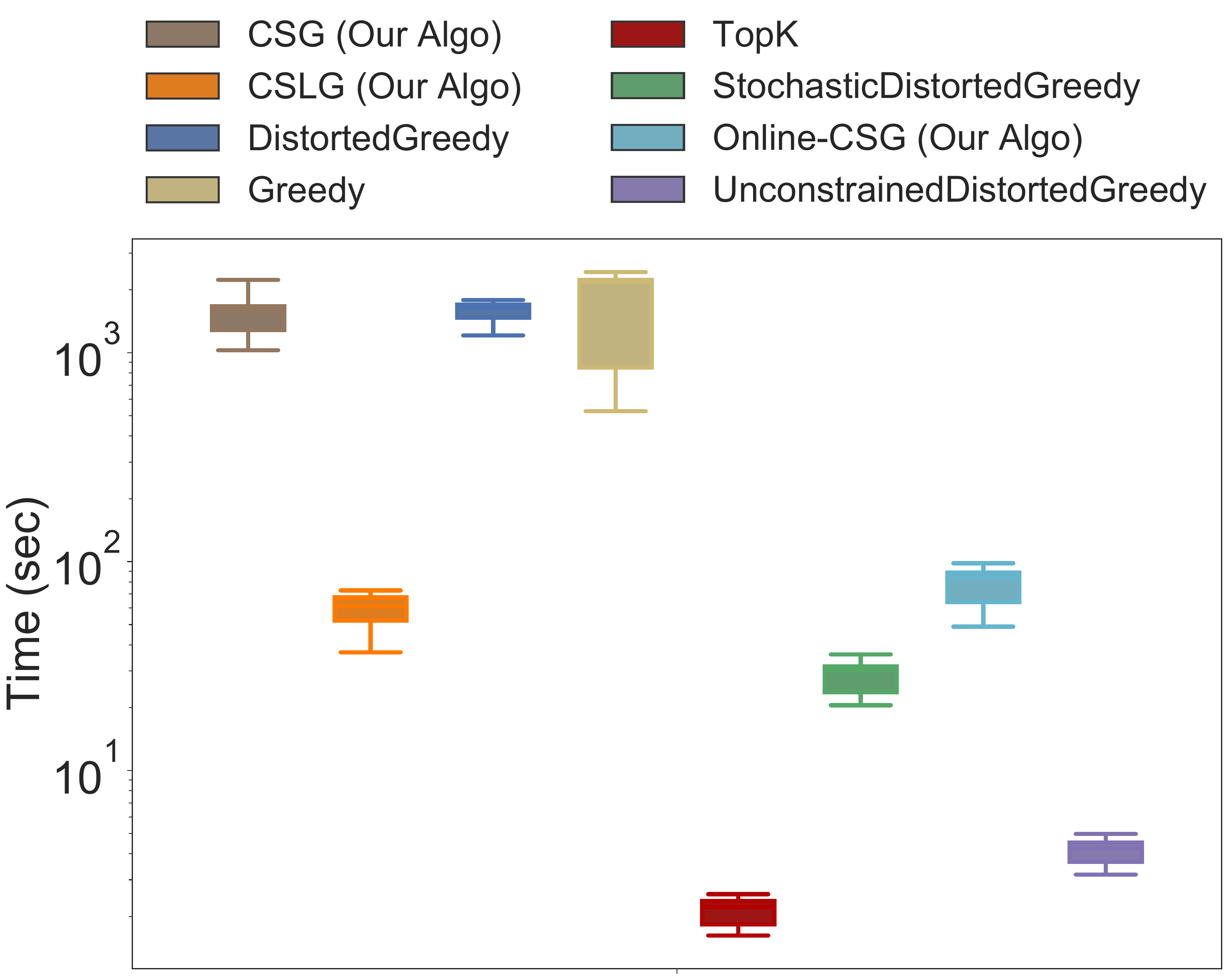}
    \subcaption{\label{fig:time_un_inf}\scriptsize{{\InfluenceDataset}}}
\end{minipage}
\begin{minipage}[t]{0.24\linewidth}
    \includegraphics[width=\linewidth]{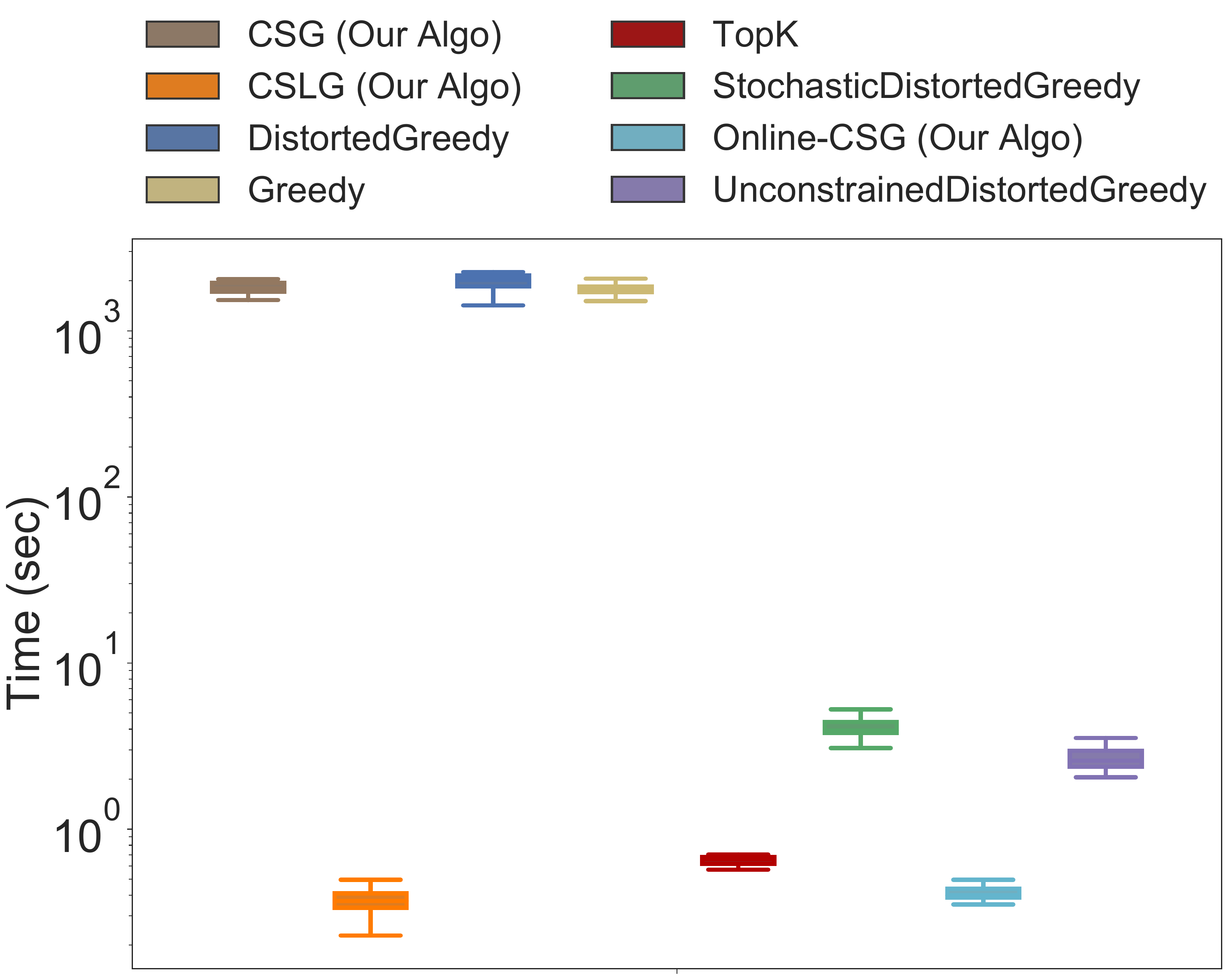}
    \subcaption{\label{fig:time_un_guru}\scriptsize{{\GuruDataset}}}
\end{minipage}
\begin{minipage}[t]{0.24\linewidth}
    \includegraphics[width=\linewidth]{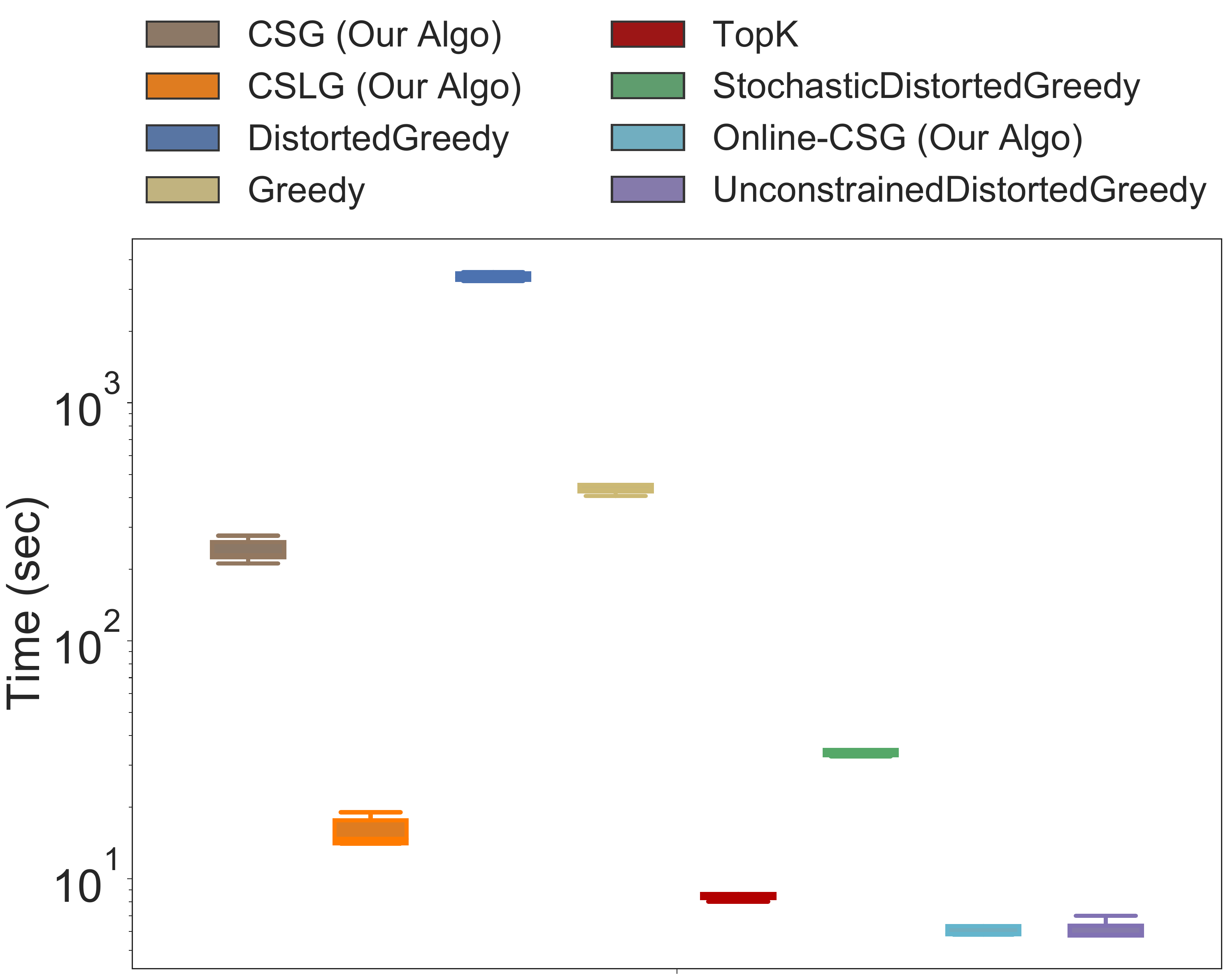}
    \subcaption{\label{fig:time_un_yelp}\scriptsize{{\YelpDataset}}}
\end{minipage}
\begin{minipage}[t]{0.24\linewidth}
    \includegraphics[width=\linewidth]{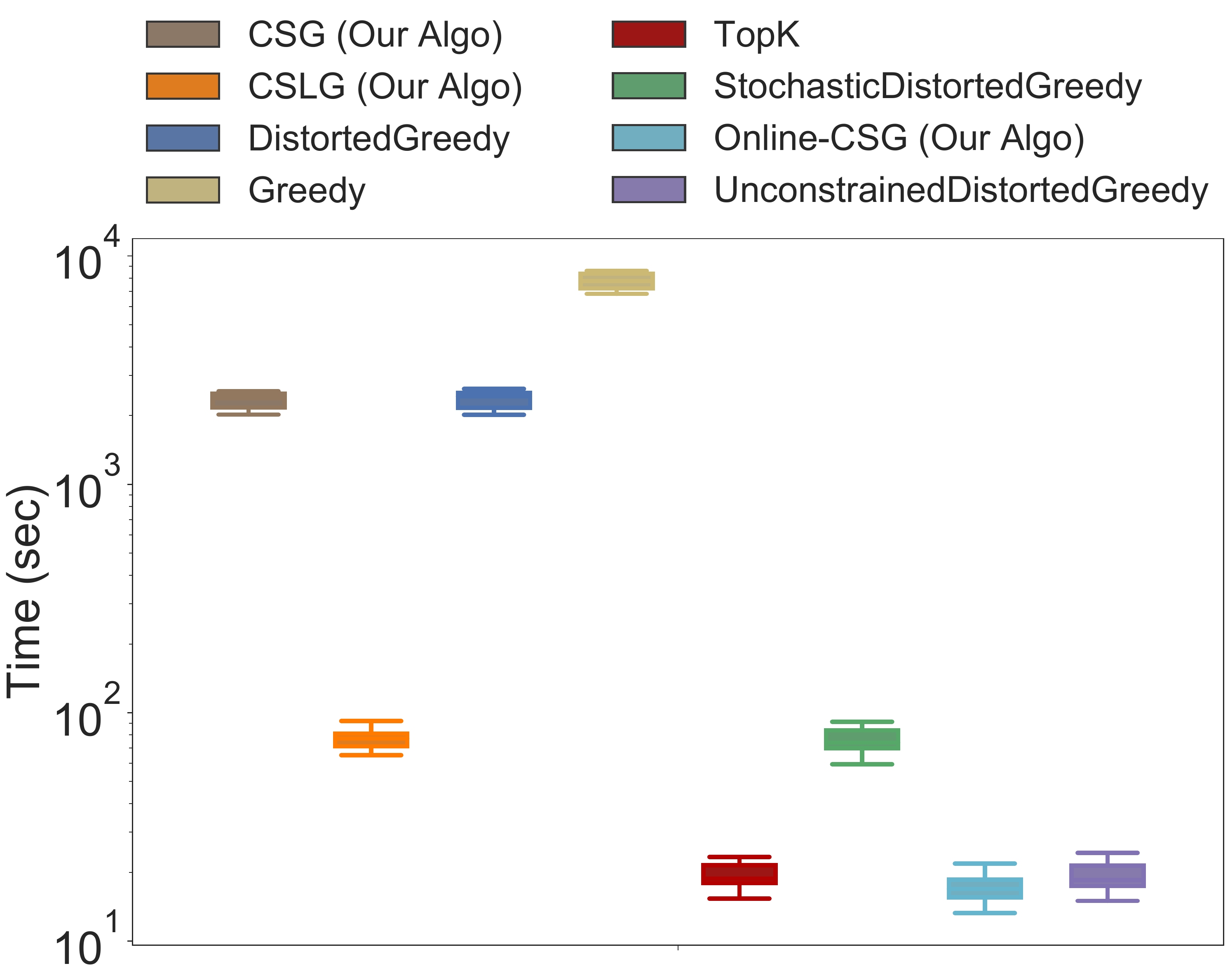}
    \subcaption{\label{fig:time_un_movie}\scriptsize{{\MovielensDataset}}}
\end{minipage}
\caption{\label{fig:time_unconstrained}  Running time (sec) comparisons of all algorithms for the {\unconstrained} problem.}
\end{figure*}
\begin{figure*}
\centering
\begin{minipage}[t]{0.24\linewidth}
    \includegraphics[width=\linewidth]{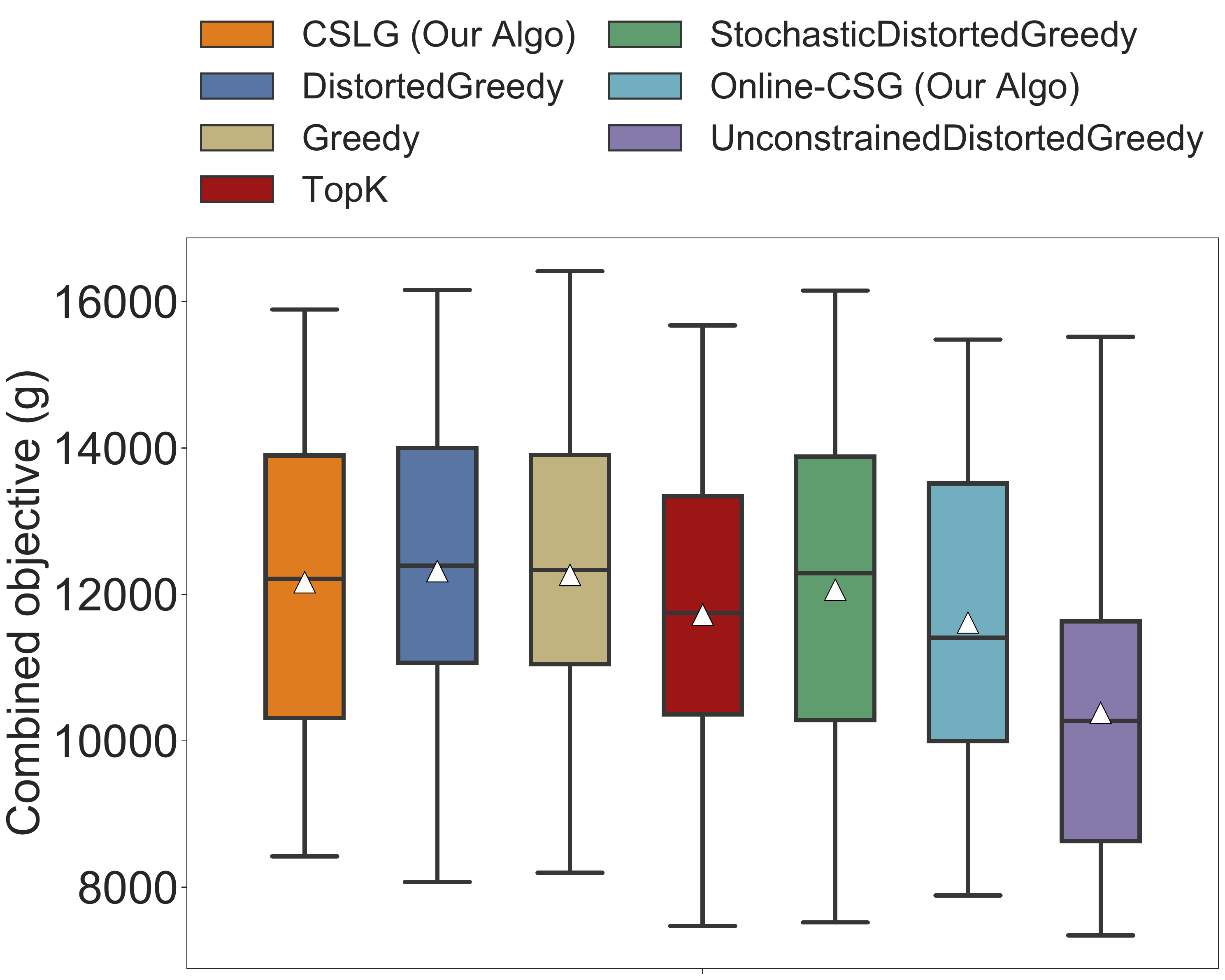}
    \subcaption{\label{fig:score_un_inf}\scriptsize{{\InfluenceDataset}}}
\end{minipage}
\begin{minipage}[t]{0.24\linewidth}
    \includegraphics[width=\linewidth]{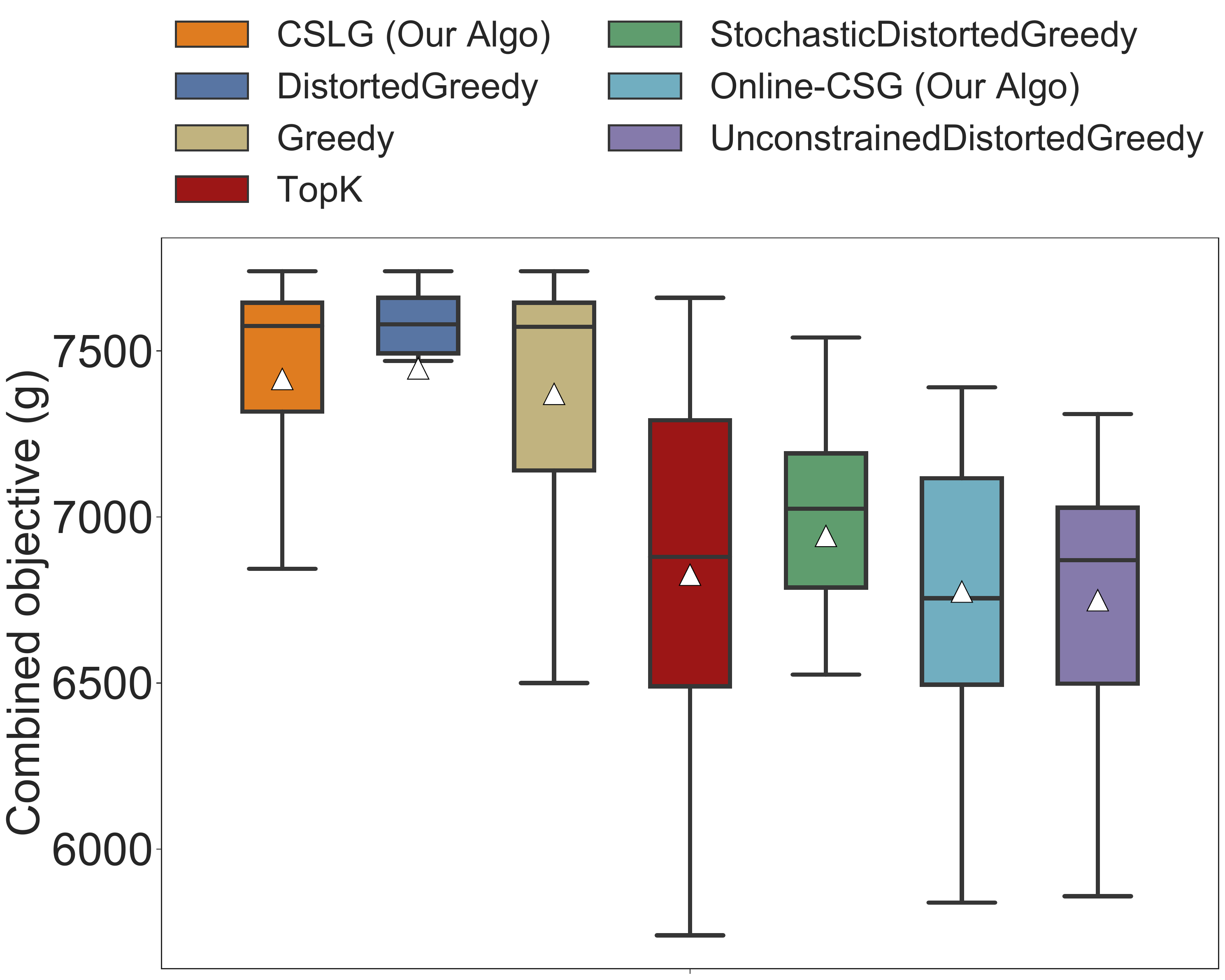}
    \subcaption{\label{fig:score_un_guru}\scriptsize{{\GuruDataset}}}
\end{minipage}
\begin{minipage}[t]{0.24\linewidth}
    \includegraphics[width=\linewidth]{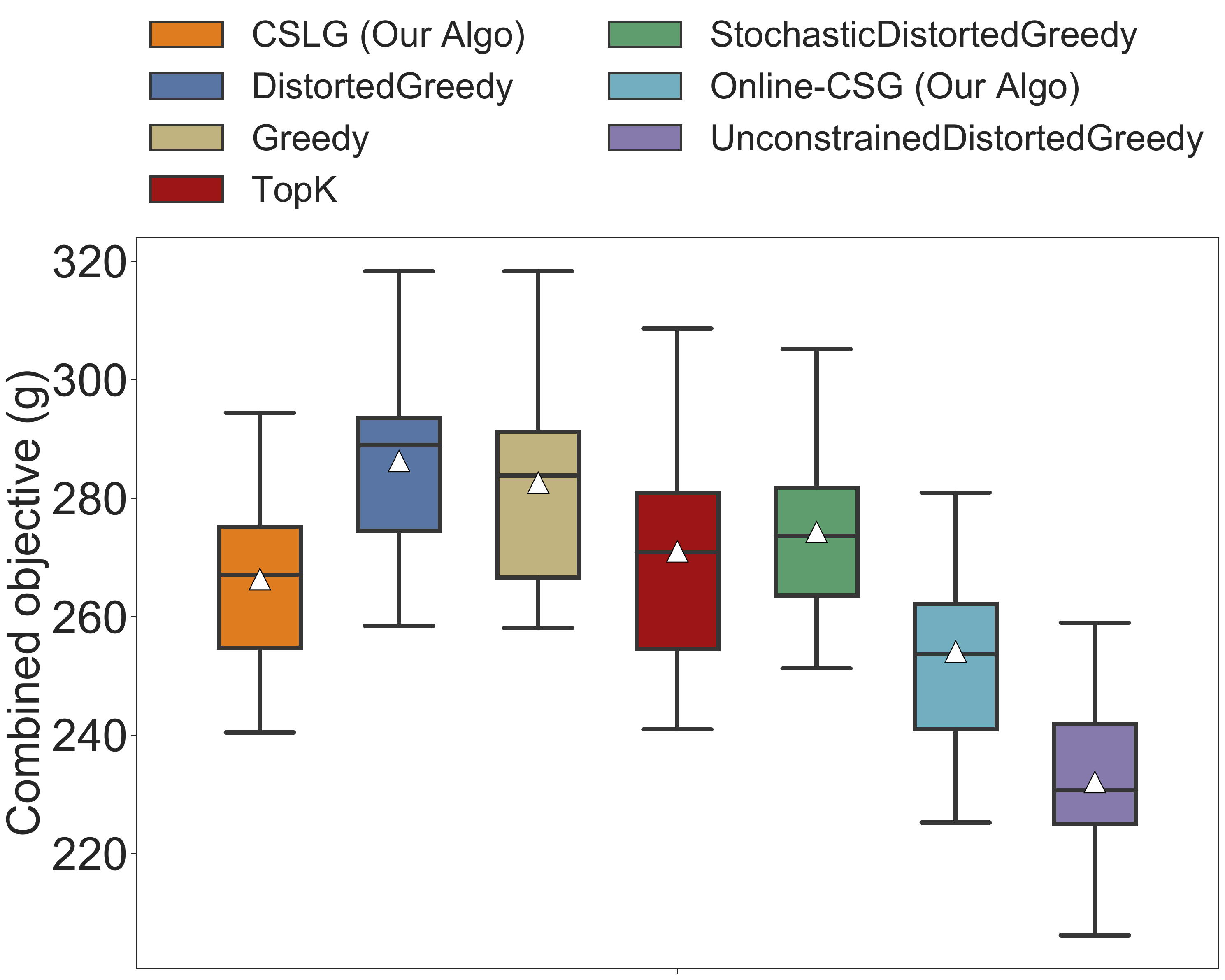}
    \subcaption{\label{fig:score_un_yelp}\scriptsize{{\YelpDataset}}}
\end{minipage}
\begin{minipage}[t]{0.24\linewidth}
    \includegraphics[width=\linewidth]{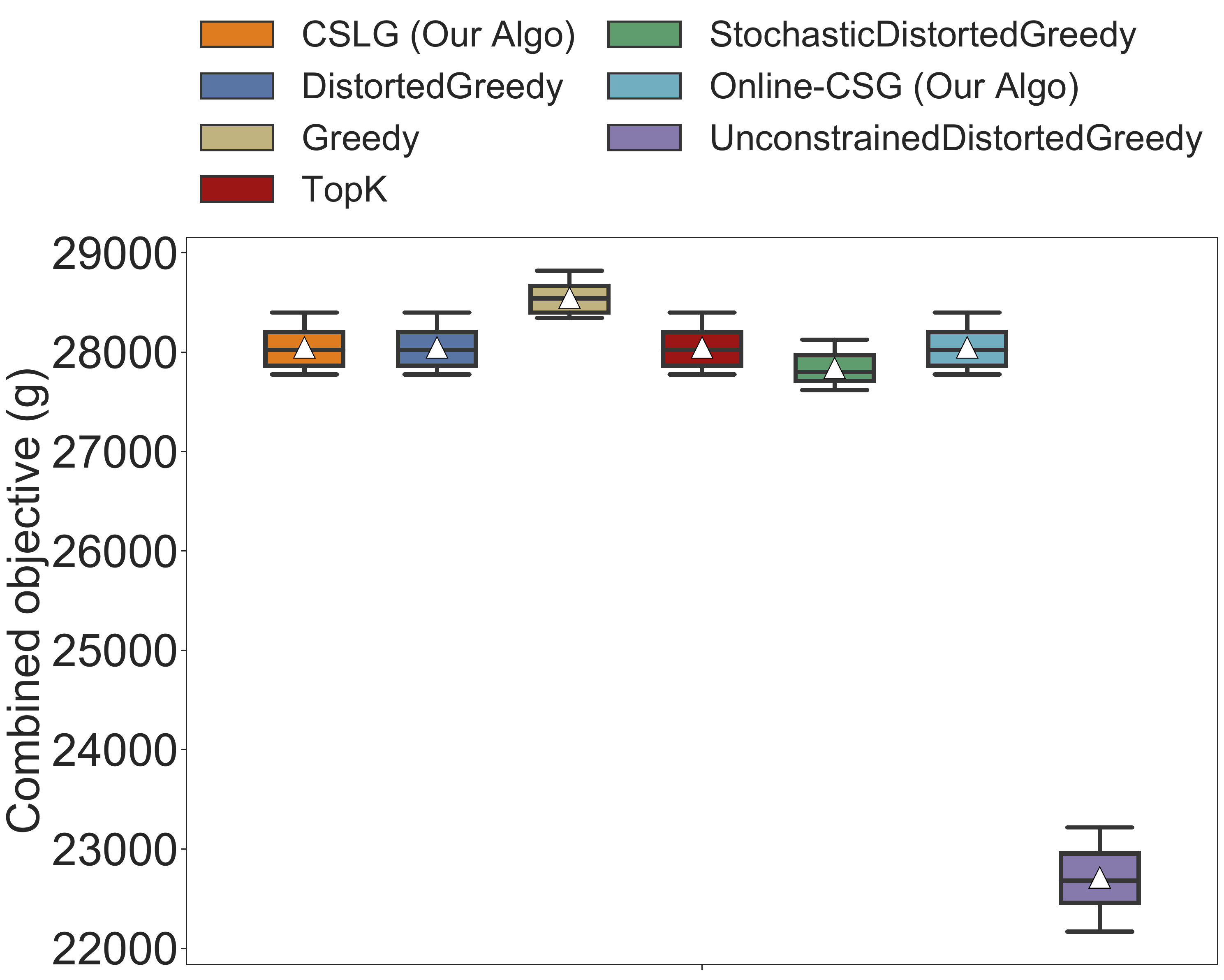}
    \subcaption{\label{fig:score_un_movie}\scriptsize{{\MovielensDataset}}}
\end{minipage}
\caption{\label{fig:perf_unconstrained} Combined objective value ($g$) comparisons of all algorithms for the {\unconstrained} problem.}
\end{figure*}
\begin{figure*}
\centering
\begin{minipage}[t]{0.24\linewidth}
    \includegraphics[width=\linewidth]{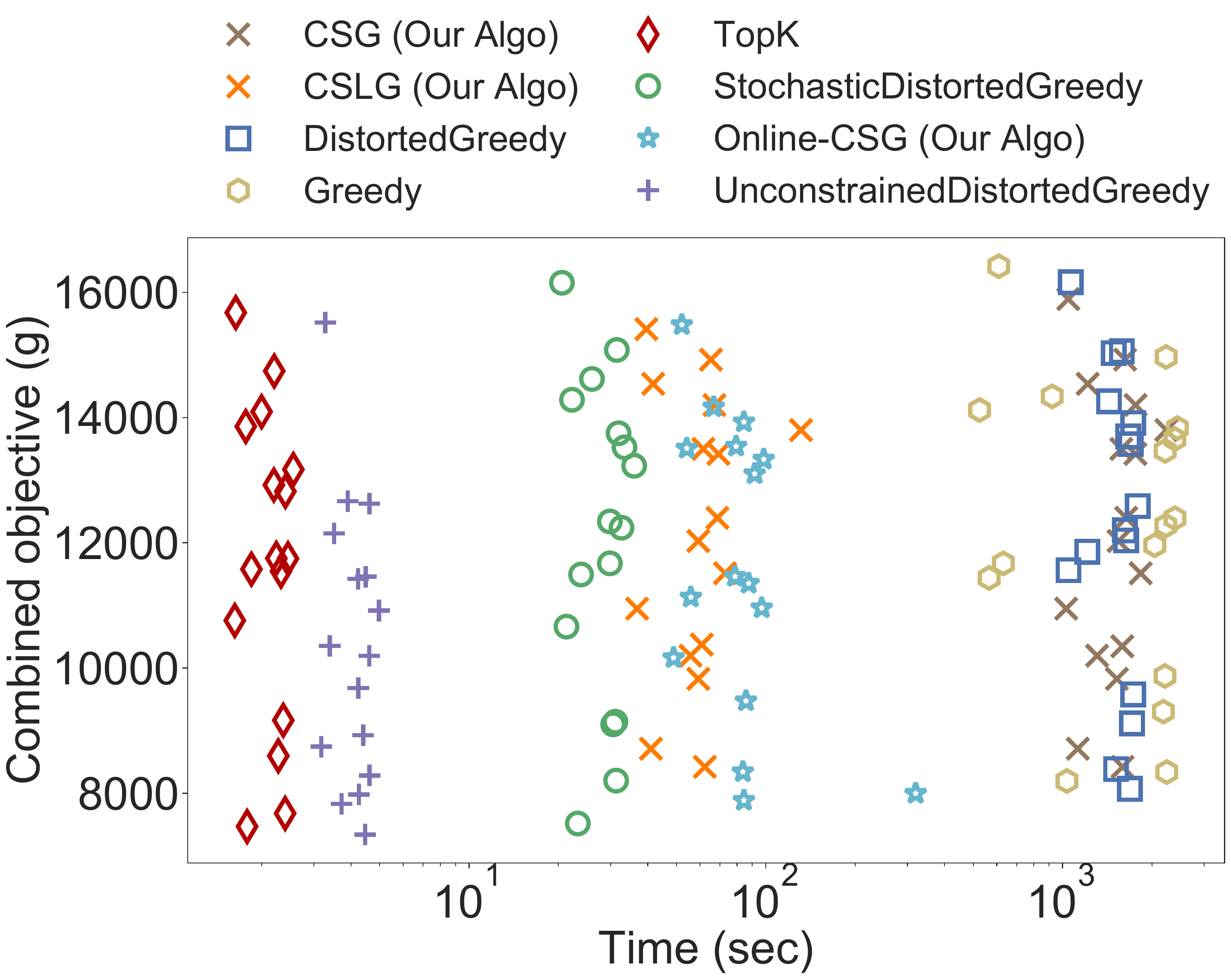}
    \subcaption{\label{fig:score_time_inf}\scriptsize{{\InfluenceDataset}}}
\end{minipage}
\begin{minipage}[t]{0.24\linewidth}
    \includegraphics[width=\linewidth]{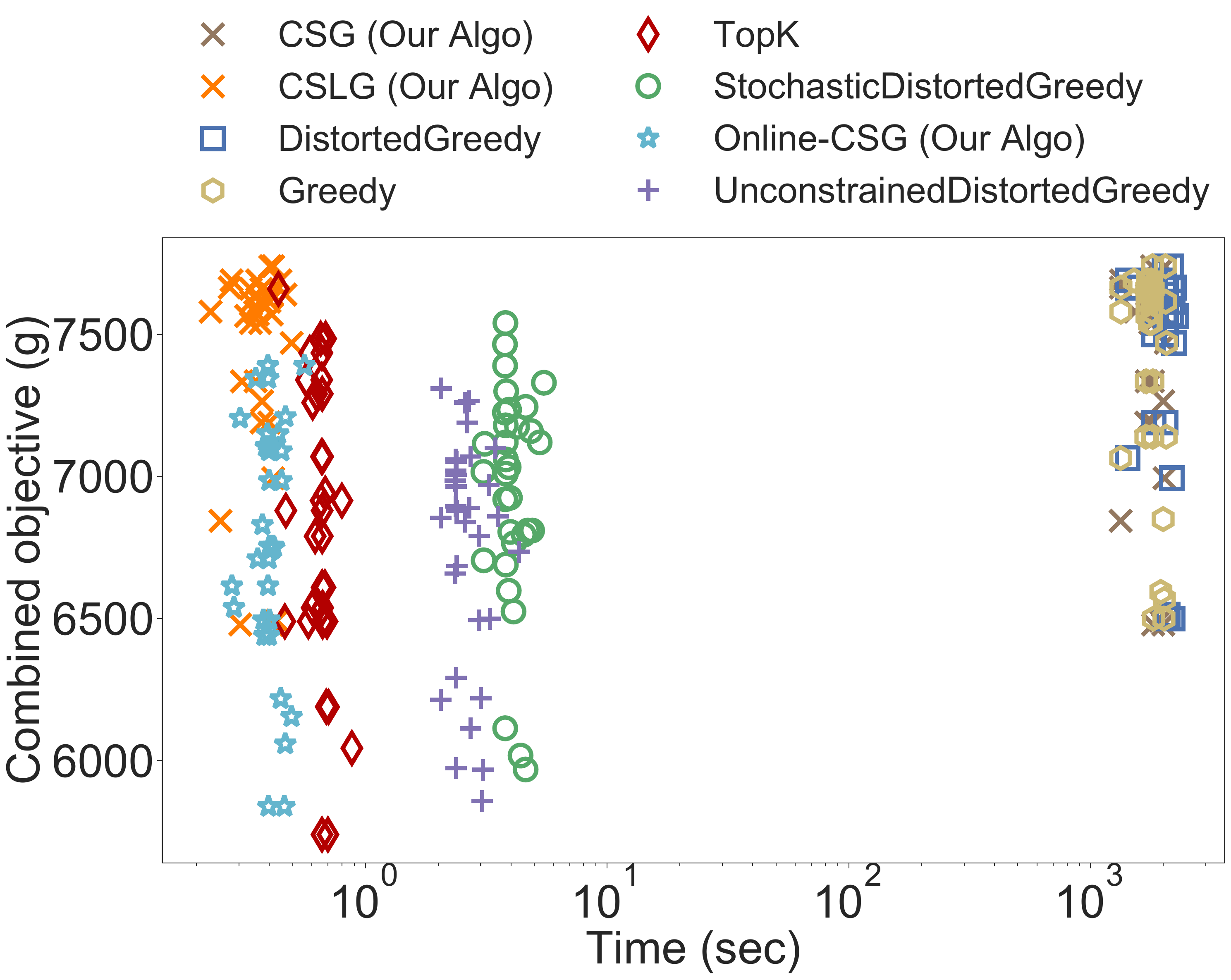}
    \subcaption{\label{fig:score_time_guru}\scriptsize{{\GuruDataset}}}
\end{minipage}
\begin{minipage}[t]{0.24\linewidth}
    \includegraphics[width=\linewidth]{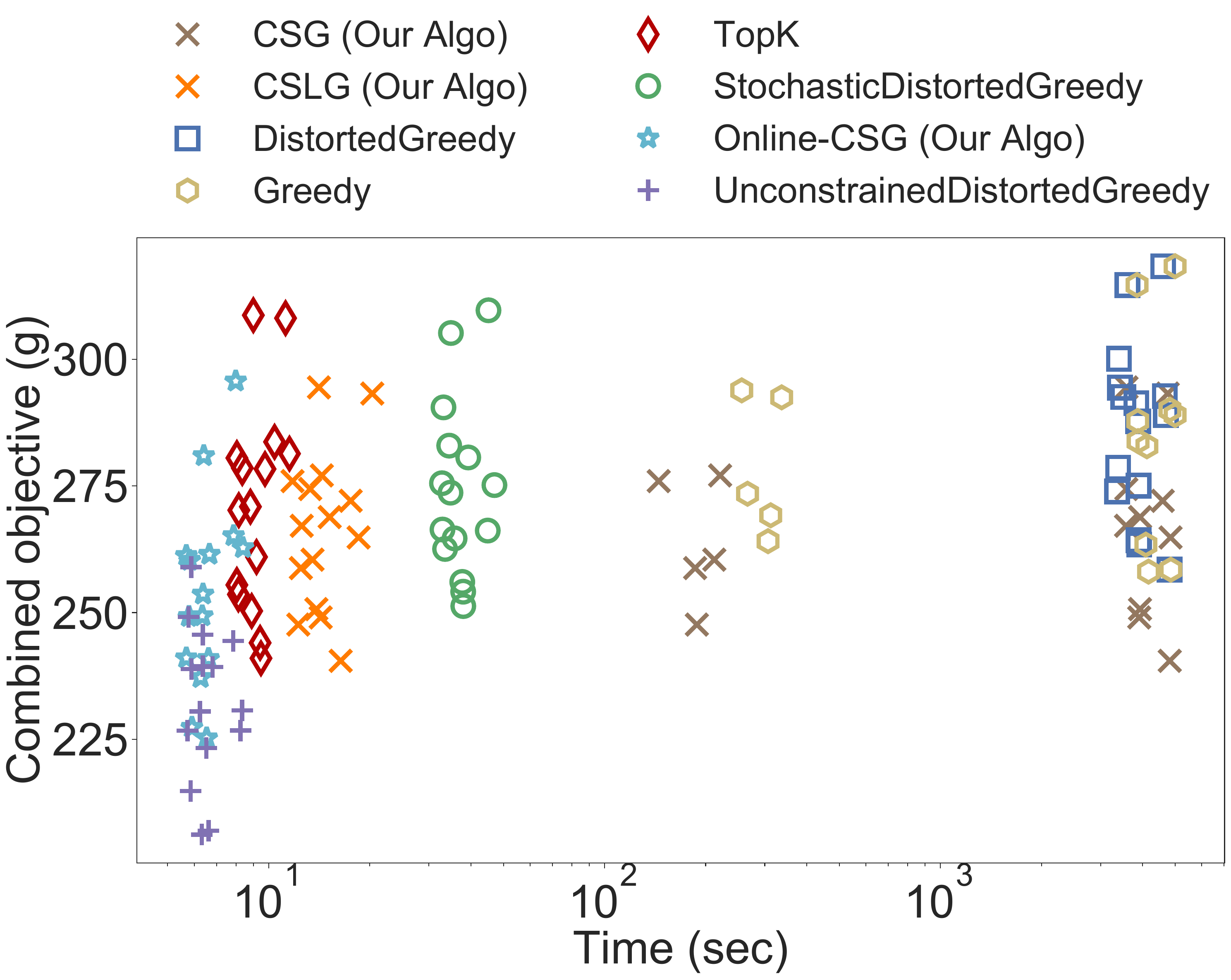}
    \subcaption{\label{fig:score_time_yelp}\scriptsize{{\YelpDataset}}}
\end{minipage}
\begin{minipage}[t]{0.24\linewidth}
    \includegraphics[width=\linewidth]{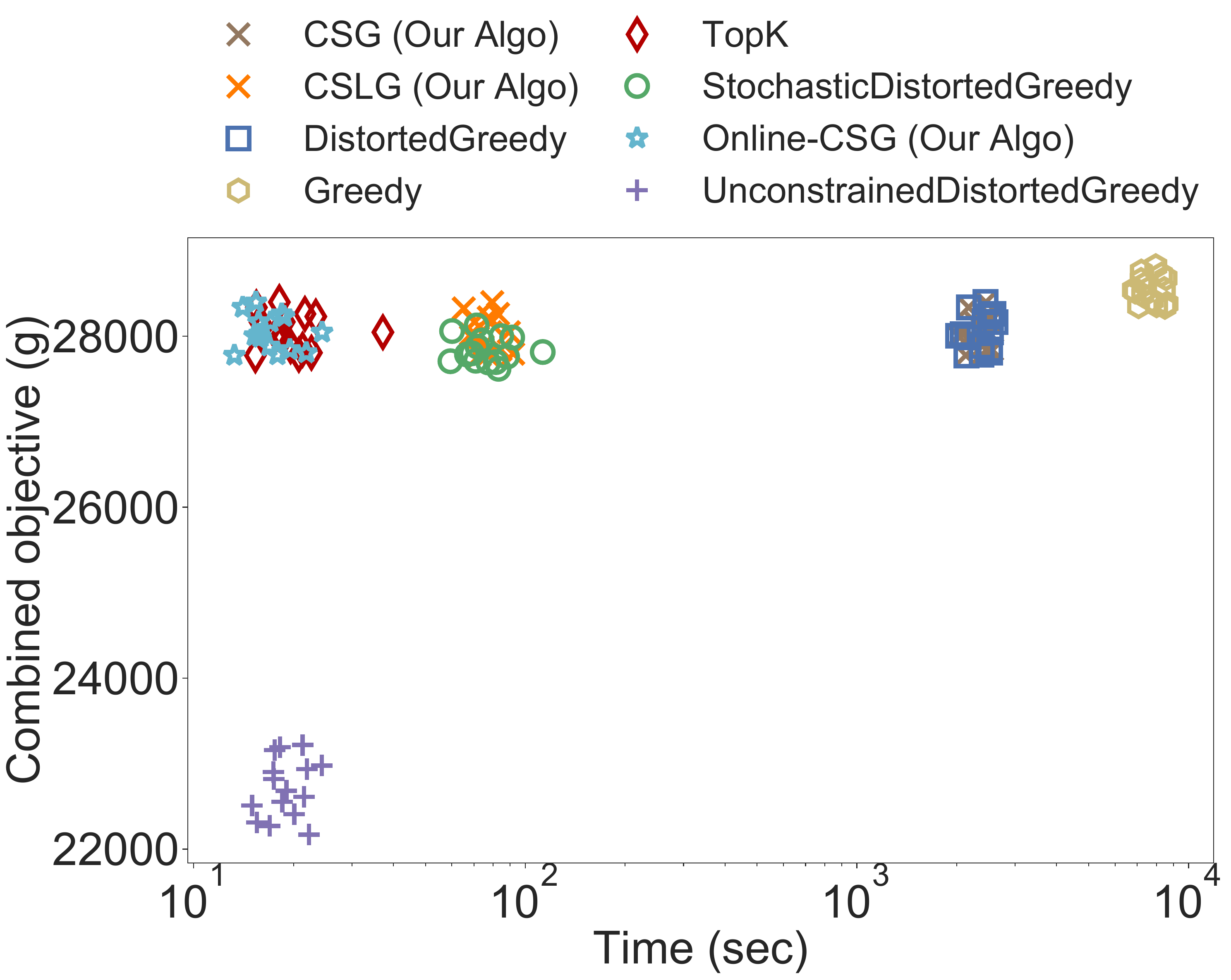}
    \subcaption{\label{fig:score_time_movie}\scriptsize{{\MovielensDataset}}}
\end{minipage}
\caption{\label{fig:perf_score_time_uncon} Combined objective value ($g$) and running time (sec) comparisons of all algorithms for the {\unconstrained} problem.}
\end{figure*}

Here we evaluate the algorithms for the {\unconstrained} problem.  
In this setting our algorithm can achieve even higher speedups at a slight performance cost.

\spara{Runtime analysis:}
We start by comparing the running times of each algorithm;
the results are shown in Figure \ref{fig:time_unconstrained}, where the $y$-axis is in the log-scale.
The box plots show the medians of the runtime performance 
of each algorithm and the short lengths of the box plots indicate small deviations from the mean; that is, the running time of the algorithms is consistent among all random samples.

We observe that {\costscaledgreedy} and {\distortedgreedy} have similar running times, with the former being slightly faster.
Overall however, their running times are orders of magnitude slower compared to the other algorithms.
Let us now investigate the gains we get by using lazy evaluations on {\costscaledgreedy}.
When considering our results at an application-wise level we see that the highest benefits are in the team-formation application (Figure~\ref{fig:time_un_guru}) where we see that {\costscaledlazygreedy} achieves 1000x of speedup, compared to the standard greedy based approach without lazy evaluations and to {\distortedgreedy}.
Slightly smaller gains are obtained in Figures~\ref{fig:time_un_inf},~\ref{fig:time_un_yelp} and~\ref{fig:time_un_movie} but they are still significant; i.e., in Figure~\ref{fig:time_un_inf} we see that while {\costscaledgreedy} and {\distortedgreedy} need half an hour to produce their results, {\costscaledlazygreedy} requires only 2 minutes.
In addition, we see that in these three cases a subset of the algorithms {\topkexperts}, {\onlinecostscaledgreedy}, {\distortedgreedy} and {\unconstraineddistortedgreedy} is faster than {\costscaledlazygreedy} but as we see next the performance of these algorithms with respect to the objective function is either similar to {\costscaledlazygreedy} or worse.

\vspace{1ex}
\noindent{\bf Performance evaluation:}
For the performance evaluation we compare the combined objective value ($g$) of the solution of each algorithm.
We present the results in Figure \ref{fig:perf_unconstrained}.
The box plots show medians (solid line), means (triangle) and interquartile ranges for the combined objective of each algorithm.

We note that {\costscaledlazygreedy} compares favorably to {\distortedgreedy}.  The performance of the two algorithms is similar in the team-formation (Figures~\ref{fig:score_un_guru}) and movie-recommendation (Figures~\ref{fig:score_un_movie}) applications. In the remaining applications (Figures~\ref{fig:score_un_inf} and~\ref{fig:score_un_yelp}), {\costscaledlazygreedy} has slightly worse performance but faster running time than  {\distortedgreedy}. Figure~\ref{fig:perf_score_time_uncon} summarizes the performance of all algorithms (across all iterations) for the two evaluation criteria: combined objective ($y$-axis) and running time ($x$-axis).

\section{Conclusions}\label{sec:conclusions}
In this paper, we focused on the problem of balancing between the goal of optimizing a submodular function by picking
a subset of elements from a collection with the actual cost of picking these elements. We formalized this problem
as the problem of optimizing a non-negative monotone submodular function minus a linear cost function, and  we designed effective and
efficient algorithms with provable approximation guarantees.  This framework enables us to generalize problem formulations that appear in many data-mining applications.
In our experiments we demonstrate that our proposed algorithms are highly efficient 
and, despite their slightly weaker theoretical bounds compared to existing work, their practical performance is equivalent to 
the latter in terms of the objective function.  
Finally, although we focused here on the cardinality-constraint version of the {\constrained} problem, we point out
that our results generalize to general matroid constraints, with significant practical and theoretical implications.

\spara{Acknowledgements:}  We acknowledge the support of NSF grants: III 1908510, III 1813406 and CAREER CCF-1750333.

\bibliographystyle{abbrv}
\bibliography{refs}

\begin{thebibliography}{10}

\bibitem{anagnostopoulos2012online}
A.~Anagnostopoulos, L.~Becchetti, C.~Castillo, A.~Gionis, and S.~Leonardi.
\newblock Online team formation in social networks.
\newblock In {\em WWW}, pages 839--848, 2012.

\bibitem{anagnostopoulos18algorithms}
A.~Anagnostopoulos, C.~Castillo, A.~Fazzone, S.~Leonardi, and E.~Terzi.
\newblock Algorithms for hiring and outsourcing in the online labor market.
\newblock In {\em {ACM} {SIGKDD}}, pages 1109--1118, 2018.

\bibitem{ashkan2014diversified}
A.~Ashkan, B.~Kveton, S.~Berkovsky, and Z.~Wen.
\newblock Diversified utility maximization for recommendations.
\newblock In {\em RecSys Posters}. Citeseer, 2014.

\bibitem{badanidiyuru2014streaming}
A.~Badanidiyuru, B.~Mirzasoleiman, A.~Karbasi, and A.~Krause.
\newblock Streaming submodular maximization: Massive data summarization on the
  fly.
\newblock In {\em ACM SIGKDD}, pages 671--680, 2014.

\bibitem{badanidiyuru2014fast}
A.~Badanidiyuru and J.~Vondr{\'a}k.
\newblock Fast algorithms for maximizing submodular functions.
\newblock In {\em SODA}, pages 1497--1514, 2014.

\bibitem{bhowmik2014submodularity}
A.~Bhowmik, V.~Borkar, D.~Garg, and M.~Pallan.
\newblock Submodularity in team formation problem.
\newblock In {\em SDM}, 2014.

\bibitem{borgs2014maximizing}
C.~Borgs, M.~Brautbar, J.~Chayes, and B.~Lucier.
\newblock Maximizing social influence in nearly optimal time.
\newblock In {\em Proceedings of the twenty-fifth annual ACM-SIAM symposium on
  Discrete algorithms}, pages 946--957. SIAM, 2014.

\bibitem{borodin2012max}
A.~Borodin, H.~C. Lee, and Y.~Ye.
\newblock Max-sum diversification, monotone submodular functions and dynamic
  updates.
\newblock In {\em Proceedings of the 31st ACM SIGMOD-SIGACT-SIGAI symposium on
  Principles of Database Systems}, pages 155--166, 2012.

\bibitem{brualdi1969comments}
R.~A. Brualdi.
\newblock Comments on bases in dependence structures.
\newblock {\em Bulletin of the Australian Mathematical Society}, 1(2):161--167,
  1969.

\bibitem{chen2009efficient}
W.~Chen, Y.~Wang, and S.~Yang.
\newblock Efficient influence maximization in social networks.
\newblock In {\em Proceedings of the 15th ACM SIGKDD international conference
  on Knowledge discovery and data mining}, pages 199--208, 2009.

\bibitem{dasgupta2013summarization}
A.~Dasgupta, R.~Kumar, and S.~Ravi.
\newblock Summarization through submodularity and dispersion.
\newblock In {\em Proceedings of the 51st Annual Meeting of the Association for
  Computational Linguistics (Volume 1: Long Papers)}, pages 1014--1022, 2013.

\bibitem{ene2019nearly}
A.~Ene and H.~L. Nguyen.
\newblock A nearly-linear time algorithm for submodular maximization with a
  knapsack constraint.
\newblock In {\em ICALP}. Schloss Dagstuhl-Leibniz-Zentrum fuer Informatik,
  2019.

\bibitem{ene2020team}
A.~Ene, S.~M. Nikolakaki, and E.~Terzi.
\newblock Team formation: Striking a balance between coverage and cost.
\newblock {\em arXiv preprint arXiv:2002.07782}, 2020.

\bibitem{Feige1998}
U.~Feige.
\newblock A threshold of ln n for approximating set cover.
\newblock {\em J.~ACM}, 45:634--652, 1998.

\bibitem{feldman2019guess}
M.~Feldman.
\newblock Guess free maximization of submodular and linear sums.
\newblock In {\em Workshop on Algorithms and Data Structures}, pages 380--394.
  Springer, 2019.

\bibitem{golshan14profit}
B.~Golshan, T.~Lappas, and E.~Terzi.
\newblock Profit-maximizing cluster hires.
\newblock In {\em SIGKDD}, 2014.

\bibitem{harper2015movielens}
F.~M. Harper and J.~A. Konstan.
\newblock The movielens datasets: History and context.
\newblock {\em Acm transactions on interactive intelligent systems (tiis)},
  5(4):1--19, 2015.

\bibitem{harshaw2019submodular}
C.~Harshaw, M.~Feldman, J.~Ward, and A.~Karbasi.
\newblock Submodular maximization beyond non-negativity: Guarantees, fast
  algorithms, and applications.
\newblock In {\em ICML}, pages 2634--2643, 2019.

\bibitem{kargar2013finding}
M.~Kargar, M.~Zihayat, and A.~An.
\newblock Finding affordable and collaborative teams from a network of experts.
\newblock In {\em SDM}, 2013.

\bibitem{kazemi2020regularized}
E.~Kazemi, S.~Minaee, M.~Feldman, and A.~Karbasi.
\newblock Regularized submodular maximization at scale.
\newblock {\em arXiv preprint arXiv:2002.03503}, 2020.

\bibitem{kazemi2019submodular}
E.~Kazemi, M.~Mitrovic, M.~Zadimoghaddam, S.~Lattanzi, and A.~Karbasi.
\newblock Submodular streaming in all its glory: Tight approximation, minimum
  memory and low adaptive complexity.
\newblock In {\em ICML}, pages 3311--3320, 2019.

\bibitem{kempe2003maximizing}
D.~Kempe, J.~Kleinberg, and {\'E}.~Tardos.
\newblock Maximizing the spread of influence through a social network.
\newblock In {\em Proceedings of the ninth ACM SIGKDD international conference
  on Knowledge discovery and data mining}, pages 137--146, 2003.

\bibitem{koren2009matrix}
Y.~Koren, R.~Bell, and C.~Volinsky.
\newblock Matrix factorization techniques for recommender systems.
\newblock {\em Computer}, 42(8):30--37, 2009.

\bibitem{kumar2015fast}
R.~Kumar, B.~Moseley, S.~Vassilvitskii, and A.~Vattani.
\newblock Fast greedy algorithms in mapreduce and streaming.
\newblock {\em ACM Transactions on Parallel Computing (TOPC)}, 2(3):1--22,
  2015.

\bibitem{lappas2009finding}
T.~Lappas, K.~Liu, and E.~Terzi.
\newblock Finding a team of experts in social networks.
\newblock In {\em {ACM} {SIGKDD}}, 2009.

\bibitem{leskovec2007cost}
J.~Leskovec, A.~Krause, C.~Guestrin, C.~Faloutsos, J.~VanBriesen, and
  N.~Glance.
\newblock Cost-effective outbreak detection in networks.
\newblock In {\em Proceedings of the 13th ACM SIGKDD international conference
  on Knowledge discovery and data mining}, pages 420--429, 2007.

\bibitem{li2018influence}
Y.~Li, J.~Fan, Y.~Wang, and K.-L. Tan.
\newblock Influence maximization on social graphs: A survey.
\newblock {\em IEEE Transactions on Knowledge and Data Engineering},
  30(10):1852--1872, 2018.

\bibitem{li2015real}
Y.~Li, D.~Zhang, and K.-L. Tan.
\newblock Real-time targeted influence maximization for online advertisements.
\newblock 2015.

\bibitem{minoux1978accelerated}
M.~Minoux.
\newblock Accelerated greedy algorithms for maximizing submodular set
  functions.
\newblock In {\em Optimization techniques}, pages 234--243. Springer, 1978.

\bibitem{mirzasoleiman2016fast}
B.~Mirzasoleiman, A.~Badanidiyuru, and A.~Karbasi.
\newblock Fast constrained submodular maximization: Personalized data
  summarization.
\newblock In {\em ICML}, pages 1358--1367. PMLR, 2016.

\bibitem{Nemhauser1978}
G.~L. Nemhauser, L.~A. Wolsey, and M.~L. Fisher.
\newblock An analysis of approximations for maximizing submodular set
  functions---{I}.
\newblock {\em Mathematical Programming}, 14(1):265--294, 1978.

\bibitem{nemhauser1978analysis}
G.~L. Nemhauser, L.~A. Wolsey, and M.~L. Fisher.
\newblock An analysis of approximations for maximizing submodular set
  functions?i.
\newblock {\em Mathematical programming}, 14(1):265--294, 1978.

\bibitem{papadimitriou1991optimization}
C.~H. Papadimitriou and M.~Yannakakis.
\newblock Optimization, approximation, and complexity classes.
\newblock {\em Journal of computer and system sciences}, 43(3):425--440, 1991.

\bibitem{parambath2018saga}
S.~P. Parambath, N.~Vijayakumar, and S.~Chawla.
\newblock Saga: A submodular greedy algorithm for group recommendation.
\newblock In {\em Proceedings of the AAAI Conference on Artificial
  Intelligence}, volume~32, 2018.

\bibitem{Schrijver2003}
A.~Schrijver.
\newblock {\em Combinatorial Optimization: Polyhedra and Efficiency}.
\newblock Springer, 2003.

\bibitem{sviridenko2004note}
M.~Sviridenko.
\newblock A note on maximizing a submodular set function subject to a knapsack
  constraint.
\newblock {\em Operations Research Letters}, 32(1):41--43, 2004.

\bibitem{sviridenko2017optimal}
M.~Sviridenko, J.~Vondr{\'a}k, and J.~Ward.
\newblock Optimal approximation for submodular and supermodular optimization
  with bounded curvature.
\newblock {\em Mathematics of Operations Research}, 42(4):1197--1218, 2017.

\bibitem{yin2018social}
X.~Yin, C.~Qu, Q.~Wang, F.~Wu, B.~Liu, F.~Chen, X.~Chen, and D.~Fang.
\newblock Social connection aware team formation for participatory tasks.
\newblock {\em IEEE Access}, 2018.

\end{thebibliography}

\clearpage
\appendix

\section{Analysis of Algorithm \ref{algo:cardinality}}
\label{app:offline}

In this section, we analyze our algorithm for the cardinality-constrained problem {\constrained}. We show the following guarantee.
\begin{thm}
Algorithm \ref{algo:cardinality} returns a solution $Q$ of size at most $k$ satisfying $\cov(Q)-\cost(Q)\geq\frac{1}{2}\cov(\opt)-\cost(\opt)$.
\end{thm}
\begin{proof}
The starting point of our analysis is the following ordering of the elements in  $Q \cup \opt$, 
which we call the Greedy ordering. The Greedy ordering orders the elements in $Q \cup \opt$ as
\begin{equation}
\label{eq:greedy-ordering}
 e_1, e_2, \dots, e_{|Q\cup \opt|} \tag{GreedyOrdering}
\end{equation}
where $e_i \in \arg\max_{e \in (Q \cup \opt) \setminus \{e_1, \dots, e_{i-1} \}} \tilde{g}(e_i \vert \{e_1, \dots, e_{i-1}\})$  for each $i \in [|Q \cup \opt|]$.  That is, we select the next element $e_i$ in the ordering to be the element from the remaining set with maximum marginal gain on top of the previously selected elements $e_1, \dots, e_{i-1}$.

It follows from the execution of the algorithm that the first $|Q|$ elements in the Greedy ordering (\ref{eq:greedy-ordering}) are the elements of $Q$ in the order in which they were added to $Q$ by the algorithm.

In the remainder of the analysis we first identify a solution, which is a prefix of the Greedy ordering, that we will analyze and show that its value is competitive with that of $\opt$. This solution is simply the first $\ell = |\opt|$ elements in the Greedy ordering, and we denote it by $S^{(\ell)}$. We analyze this solution and relate its value to $\opt$. We then relate the value of the solution $Q$ returned by the algorithm to the value of $S^{(\ell})$.

Let $S^{(i)} = \{e_1, \dots, e_i\}$ for all $1\leq i \leq |Q\cup \opt|$. Let $\ell = |\opt|$. As noted above, the solution $S^{(\ell)}$ plays a key role in our analysis.

{\bf Relating $S^{(\ell)}$ to $\opt$.} We now analyze the solution $S^{(\ell)}$ and relate it to $\opt$. To this end, we construct an appropriate mapping between $S^{(\ell)}$ and $\opt$ as follows.  Since $S^{(\ell)}$ and $\opt$ have the same size, there is a bijection $\pi: \opt \to S^{(\ell)}$ such that, for every $i \leq \ell$, $\pi^{-1}(e_i)$ appears after or at the same position as $e_i$ in the Greedy ordering (\ref{eq:greedy-ordering}), i.e., $\pi^{-1}(e_i) = e_j$ for some index $j \geq i$. We can obtain such a mapping $\pi$ by iteratively matching each element of $\opt$ to the earliest element of $S^{(\ell)}$ that is still unmatched. Since $|\opt| = |S^{(\ell)}|$ and $S^{(\ell)}$ is comprised of the first $\ell$ elements in the Greedy ordering, every element $o \in \opt$ will be matched to exactly one element $\pi(o) \in S^{(\ell)}$ such that $\pi(o)$ appears no later than $o$ in the Greedy ordering, as needed.

We can use this bijective mapping $\pi$ to ``charge'' $\opt$ to $S^{(\ell)}$ as follows. By construction of the Greedy ordering and $\pi$, for every $i \leq \ell$, we have
\begin{equation}
 \tilde{g}(e_i \vert S^{(i-1)}) \geq \tilde{g}(\pi^{-1}(e_i) \vert S^{(i-1)}) \label{eq1}
 \end{equation}
Let $\opt^{(i)} = \pi^{-1}(S^{(i)})$ for all $i \leq \ell$. By submodularity and the fact that $\opt^{(i)} = \opt^{(i-1)} \cup \{\pi^{-1}(e_i)\}$, we have
\begin{align}
 \tilde{g}(\pi^{-1}(e_i) \vert S^{(i-1)}) 
&\geq \tilde{g}(\pi^{-1}(e_i) \vert S^{(\ell)} \cup \opt^{(i-1)}) \nonumber\\
& = \tilde{g}(S^{(\ell)}\cup \opt^{(i-1)}) - \tilde{g}(S^{(\ell)} \cup \opt^{(i-1)}) \label{eq2}
\end{align}
By combining (\ref{eq1}) and (\ref{eq2}), we obtain
\[ \tilde{g}(e_i \vert S^{(i-1)}) \geq \tilde{g}(S^{(\ell)}\cup \opt^{(i-1)}) - \tilde{g}(S^{(\ell)} \cup \opt^{(i-1)})\]
We sum up the above inequalities over all $i \leq \ell$. Note that the sums telescope. Additionally, we have $\opt^{(\ell)} = \pi^{-1}(S^{(\ell)}) = \opt$. Thus we obtain
\[ \tilde{g}(S^{(\ell)}) - \tilde{g}(\emptyset) \geq \tilde{g}(S^{(\ell)} \cup \opt) - \tilde{g}(S^{(\ell)}) \] 
and thus
\begin{equation}
\label{eq:Sell}
\tilde{g}(S^{(\ell)}) \geq \frac{1}{2} \tilde{g}(S^{(\ell)} \cup \opt) 
\end{equation}

{\bf Relating $Q$ to $S^{(\ell)}$.} We now relate the solution $Q$ constructed by the algorithm to the solution $S^{(\ell)}$. Recall that it follows from the execution of the algorithm that $Q$ is a prefix of the Greedy ordering. By definition, $S^{(\ell)}$ is also a prefix of the Greedy ordering. However, $Q$ and $S^{(\ell)}$ may be different prefixes and one may be included in the other, and we consider each of these cases in turn. To relate their values, we crucially use the following properties ensured by the algorithm: each element of $Q$ has positive marginal gain with respect to the scaled objective $\tilde{g}$ on top of the elements that come before it in the Greedy ordering; additionally, if $Q$ has less than $k$ elements, all of the remaining elements have non-positive marginal gain with respect to $\tilde{g}$ on top of $Q$. These properties follow from the fact that, when each element is added to $Q$, it has positive marginal gain with respect to $\tilde{g}$. Moreover, the algorithm terminates when either it reaches the size constraint $k$ or it terminates early on line~\ref{line:terminate} since the marginal gains of the remaining elements are non-positive with respect $\tilde{g}$.

We now give the precise  analysis.  We will show that $\tilde{g}(Q) \geq \tilde{g}(S^{(\ell)})$ by considering two cases: $|Q| \geq \ell$ and $|Q| < \ell$. 

Suppose $|Q| \geq \ell$. We have $S^{(\ell)} \subseteq Q$. Since the algorithm only adds elements with positive marginal gain, we have
\[ \tilde{g}(Q) - \tilde{g}(S^{(\ell)}) = \sum_{i=\ell+1}^{|Q|} \tilde{g}(e_i \vert S^{(i-1)}) \geq 0 \]
Suppose $|Q| < \ell$. We have $Q \subseteq S^{(\ell)}$. Since the algorithm terminates when the marginal gain of every element becomes non-positive, we have
\[ \tilde{g}(S^{(\ell)}) - \tilde{g}(Q) = \sum_{i = |Q|+1}^{\ell} \tilde{g}(e_i \vert S^{(i-1)}) \leq  \sum_{i = |Q|+1}^{\ell} \tilde{g}(e_i \vert Q) \leq 0 \]
Thus, in either case, we have that
\begin{equation}
\label{eq:scaled-obj}
\tilde{g}(Q) \geq \tilde{g}(S^{(\ell)})
\end{equation}
{\bf Relating $Q$ to $\opt$.}  We now put everything together and establish the approximation guarantee. By (\ref{eq:Sell}) and (\ref{eq:scaled-obj}), we have
\[ \tilde{g}(Q) \geq  \frac{1}{2} \tilde{g}(S^{(\ell)} \cup \opt) \]
Recall that $\tilde{g} = \cov - 2\cost$. Thus
\[ \cov(Q) - \cost(Q) \geq  \frac{1}{2} \cov(S^{(\ell)} \cup \opt) - \cost(S^{(\ell)} \cup \opt) + \cost(Q) \]
Since $\cov$ is monotone, we have $\cov(S^{(\ell)} \cup \opt) \geq \cov(\opt)$. Thus
\[ \cov(Q) - \cost(Q) \geq  \frac{1}{2} \cov(\opt) - \cost(S^{(\ell)} \cup \opt) + \cost(Q) \]
Thus, to finish the proof, it only remains to verify that
\[ \cost(\opt) + \cost(Q) \geq \cost(S^{(\ell)} \cup \opt) \]
As before, we consider two cases: $|Q| \geq \ell$ and $|Q| < \ell$. Suppose that $|Q| \geq \ell$. Then $S^{(\ell)} \subseteq Q$ and thus $\cost(Q) \geq \cost(S^{(\ell)})$, since $\cost$ is non-negative. Thus
\[ \cost(\opt) + \cost(Q) \geq \cost(\opt) + \cost(S^{(\ell)}) \geq \cost(\opt \cup S^{(\ell)}) \]
Suppose that $|Q| < \ell$. Then $Q \subseteq S^{(\ell)}$ and $S^{(\ell)} \setminus Q \subseteq \opt$. Thus $\opt \cup Q = \opt \cup S^{(\ell)}$ and hence
\[ \cost(\opt) + \cost(Q) \geq \cost(\opt \cup Q) = \cost(\opt \cup S^{(\ell)}) \]
Putting everything together, we have
\[ \cov(Q) - \cost(Q) \geq \frac{1}{2} \cov(\opt) - \cost(\opt) \]
\end{proof}

\section{Analysis of Algorithm~\ref{algo:online}}
\label{app:online}

In this section, we analyze our online algorithm for the  {\unconstrained}  problem. We show the following guarantee:
\begin{thm}
Algorithm~\ref{algo:online} returns a solution $Q$ satisfying $\cov(Q)-\cost(Q)\geq\frac{1}{2}\cov(\opt)-\cost(\opt)$.
\end{thm}
\begin{proof}
For every item $o\in\opt\setminus Q$, $\tilde{g}(o\vert Q)\leq0$ holds. This is due to the fact that $o$ had non-positive marginal gain when it arrived and the marginal gains can only decrease, since $\tilde{g}$ is submodular. Therefore
\begin{align*}
0 & \geq\sum_{o\in\opt\setminus Q}\tilde{g}(o\vert Q)\\
 & \geq\tilde{g}(Q\cup\opt)-\tilde{g}(Q)\\
 & =\Big(\underbrace{\cov(Q\cup\opt)}_{\geq \cov(\opt)}-\cov(Q)\Big)-2\Big(\underbrace{\cost(Q\cup\opt)-\cost(Q)}_{=\cost(\opt\setminus Q)\leq \cost(\opt)}\Big)\\
 & \geq \cov(\opt)-\cov(Q)-2\cost(\opt)
\end{align*}

The third inequality is by monotonicity of $\cov$ and non-negativity and linearity of $\cost$. The second inequality follows from submodularity as follows. Let $O=\opt\setminus Q$ and let $o_{1},o_{2},\dots,o_{|O|}$ be an arbitrary ordering of $O$. Let $O^{(i)}=\{o_{1},\dots,o_{i}\}$.
Then,
\begin{align*}
\tilde{g}(Q\cup O)-\tilde{g}(Q)
&=\sum_{i=1}^{|O|}\left(\tilde{g}(Q\cup O^{(i)})-\tilde{g}(Q\cup O^{(i-1)})\right)\\
=\sum_{i=1}^{|O|}\tilde{g}(o_{i}\vert Q\cup O^{(i-1)})
& \leq\sum_{i=1}^{|O|}\tilde{g}(o_{i}\vert Q)
\end{align*}
where the inequality is by submodularity.

Rearranging, we obtain
\[
\cov(Q)\geq \cov(\opt)-2\cost(\opt)
\]
On the other hand, since the algorithm only added elements with positive marginal gain with respect to $\tilde{g}$, we have
\[
\tilde{g}(Q)>0
\]
Indeed, let $e_{1},e_{2},\dots,e_{|Q|}$ be the elements of $Q$ in the order in which they were added. Let $Q^{(i)}=\{e_{1},\dots,e_{i}\}$. We have
\begin{align*}
\tilde{g}(Q)-\tilde{g}(\emptyset)
&=\sum_{i=1}^{|Q|}\left(\tilde{g}(Q^{(i)})-\tilde{g}(Q^{(i-1)})\right)\\
&=\sum_{i=1}^{|Q|}\tilde{g}(e_{i}\vert Q^{(i-1)})
>0
\end{align*}

 Since $\tilde{g}(\emptyset)=\cov(\emptyset)-\cost(\emptyset)=\cov(\emptyset)\geq0$,
we have $\tilde{g}(Q)>0$. Therefore,
\[
\cov(Q)-2\cost(Q)>0\Rightarrow \cost(Q)<\frac{1}{2}\cov(Q)\Rightarrow \cov(Q)-\cost(Q)>\frac{1}{2}\cov(Q)
\]
By combining with the previous inequality, we obtain
\begin{align*}
\cov(Q)-\cost(Q)
&> \frac{1}{2}\cov(Q)\\
&\geq\frac{1}{2}\left(\cov(\opt)-2\cost(\opt)\right)\\
&=\frac{1}{2}\cov(\opt)-\cost(\opt)
\end{align*}
\end{proof}

\section{Analysis of Algorithm~\ref{algo:streaming}}
\label{app:streaming}

In this section, we analyze our streaming algorithm for the  {\constrained} problem. 
The following theorem assumes that the parameters can be set
appropriately if we know the value of the optimal solution. This assumption can be removed using a technique due to  \cite{badanidiyuru2014streaming}. 

\begin{thm}
When run with scaling constant $\const=\frac{1}{2}\left(3+\sqrt{5}\right)$ and threshold $\tau=\frac{1}{k}\left(\frac{1}{2}(3-\sqrt{5})\cov(\opt)-\cost(\opt)\right)$, Algorithm \ref{algo:streaming} returns a solution $Q$ such that $|Q| \leq k$ and
\[ \cov(Q)-\cost(Q)\geq \frac{1}{2}\left(3-\sqrt{5}\right)\cov(\opt)-\cost(\opt) \]
\end{thm}
\begin{proof}
It is clear from the execution of the algorithm that $|Q| \leq k$. Therefore we focus on analyzing the function value. We consider two cases, depending on whether $|Q|=k$ or $|Q|<k$.

\spara{Case 1: $|Q|=k$.}
We have
\[
\tilde{g}(Q)\geq\tau k\Rightarrow \cov(Q)-\const\cdot \cost(Q)\geq\tau k
\]
\spara{Case 2: $|Q|<k$.}
For every item $o\in\opt\setminus Q$, we have
$
\tilde{g}(o\vert Q)\leq\tau
$. 
This is due to the fact that $o$ had marginal gain less than $\tau$ when it arrived and the marginal gains can only decrease due to submodularity of $\tilde{g}$. Therefore
\begin{align*}
\tau|\opt\setminus Q| & \geq\sum_{o\in\opt\setminus Q}\tilde{g}(o\vert Q)\\
 & \geq\tilde{g}(Q\cup\opt)-\tilde{g}(Q)\\
 & =\Big(\underbrace{\cov(Q\cup\opt)}_{\geq \cov(\opt)}-\cov(Q)\Big)-\const\Big(\underbrace{\cost(Q\cup\opt)-\cost(Q)}_{=\cost(\opt\setminus Q)\leq \cost(\opt)}\Big)\\
 & \geq \cov(\opt)-\cov(Q)-\const\cdot \cost(\opt)
\end{align*}
The third inequality is by monotonicity of $\cov$ and non-negativity and linearity of $c$. The second inequality follows from submodularity as follows. Let $O=\opt\setminus Q$ and let $o_{1},o_{2},\dots,o_{|O|}$ be an arbitrary ordering of $O$. Let $O^{(i)}=\{o_{1},\dots,o_{i}\}$. Then
\begin{align*}
\tilde{g}(Q\cup O)-\tilde{g}(Q)
&=\sum_{i=1}^{|O|}\left(\tilde{g}(Q\cup O^{(i)})-\tilde{g}(Q\cup O^{(i-1)})\right)\\
=\sum_{i=1}^{|O|}\tilde{g}(o_{i}\vert Q\cup O^{(i-1)})
&\leq\sum_{i=1}^{|O|}\tilde{g}(o_{i}\vert Q)
\end{align*}
where the inequality is by submodularity.

Rearranging, we obtain
\begin{align*}
\cov(Q)
&\geq \cov(\opt)-\const\cdot \cost(\opt)-\tau\underbrace{|\opt\setminus Q|}_{\leq k}\\
&\geq \cov(\opt)-\const\cdot \cost(\opt)-\tau k
\end{align*}
On the other hand, since the algorithm only added elements with marginal gain at least the threshold, we can show that
\[
\tilde{g}(Q)\geq\tau|Q|
\]
Indeed, let $e_{1},e_{2},\dots,e_{|Q|}$ be the elements of $Q$ in the order in which they were added. Let $Q^{(i)}=\{e_{1},\dots,e_{i}\}$. We have
\begin{align*}
\tilde{g}(Q)-\tilde{g}(\emptyset)
=\sum_{i=1}^{|Q|}\left(\tilde{g}(Q^{(i)})-\tilde{g}(Q^{(i-1)})\right)
=\sum_{i=1}^{|Q|}\tilde{g}(e_{i}\vert Q^{(i-1)})\geq\tau|Q|
\end{align*}
Since $\tilde{g}(\emptyset)=\cov(\emptyset)-\cost(\emptyset)=\cov(\emptyset)\geq0$,
we have $\tilde{g}(Q)\geq\tau|Q|$. Thus
\[ \cov(Q)-\const\cdot \cost(Q)\geq\tau|Q|\geq0 \]
To summarize, we showed the following two inequalities:
\begin{align*}
\cov(Q) & \geq \cov(\opt)-\const\cdot \cost(\opt)-\tau k\\
\cov(Q)-s\cdot \cost(Q) & \geq0
\end{align*}
Combining the two inequalities with coefficients $\const-1$ and $1$ gives
\begin{align*}
\cov(Q)-\cost(Q) & \geq\frac{\const-1}{\const}\left(\cov(\opt)-\const\cdot \cost(\opt)-\tau k\right)
\end{align*}
 \textbf{Setting $\const,\tau$. }We now put together the two cases and
set the two parameters $\const\geq1$ and $\tau$.

In case 1, we obtain a solution $Q$ with value
\[
\cov(Q)-\cost(Q)\geq \cov(Q)-\const\cdot \cost(Q)\geq\tau k
\]
where the first inequality is due to $c\geq0$ and $\const\geq1$, and
the second inequality is by our analysis above.

In case 2, we obtain a solution $Q$ with value
\[
\cov(Q)-\cost(Q)\geq\frac{\const-1}{\const}\left(\cov(\opt)-\const\cdot \cost(\opt)-\tau k\right)
\]
Thus overall we get a solution with value at least
\[
\min\left\{ \tau k,\frac{\const-1}{\const}\left(\cov(\opt)-\const\cdot \cost(\opt)-\tau k\right)\right\} 
\]
 We set $\tau$ to balance the two terms:
\begin{align*}
\tau k &=\frac{\const-1}{\const}\left(\cov(\opt)-\const\cdot \cost(\opt)-\tau k\right)\\
\Rightarrow
\tau k &=\frac{\const-1}{2\const-1}\left(\cov(\opt)-\const\cdot \cost(\opt)\right)
\end{align*}
We set $\const$ so that the coefficient of $\cost(\opt)$ becomes $1$:
\[
\frac{\const(\const-1)}{2\const-1}=1\Rightarrow \const^{2}-3\const+1=0
\]
The above equation has two solutions: $\const_{1}=\frac{1}{2}\left(3-\sqrt{5}\right)$ and $\const_{2}=\frac{1}{2}\left(3+\sqrt{5}\right)$. We want $\const\geq1$, so we pick the latter.
For this choice, the threshold $\tau$ and the objective value obtained are
\begin{align*}
\tau & =\frac{1}{k}\left(\frac{1}{2}\left(3-\sqrt{5}\right)\cov(\opt)-\cost(\opt)\right)\\
\cov(Q)-\cost(Q) & \geq\frac{1}{2}\left(3-\sqrt{5}\right)\cov(\opt)-\cost(\opt)
\end{align*}
\end{proof}

\spara{Guessing $\tau$ and the {\onlinekcostscaledgreedy} algorithm.}
Setting the threshold as suggested by the above theorem requires knowing $\hat{g}(\opt)$, where $\hat{g}(Q):=\frac{1}{2}(3-\sqrt{5})\cov(Q)-\cost(Q)$. To remove this assumption, we use the standard technique introduced by \cite{badanidiyuru2014streaming}, which we now sketch. The largest singleton value $v=\max_{e}\hat{g}(\{e\})$ gives us a $k$-approximation to $\hat{g}(\opt)$. Given this approximation, we guess a $1+\epsilon$ approximation to $\hat{g}(\opt)$ from a set of $O(\log k/\epsilon)$ values ranging from $v$ to $kv$. The final streaming algorithm is simply $O(\log k/\epsilon)$ copies of the basic algorithm running in parallel with different guesses. As new elements appear in the stream, the value $v=\max_{e}\hat{g}(\{e\})$ also increases over time and thus, existing copies of the basic algorithm with small guesses are dropped and new copies with higher guesses are added. Observe that when we introduce a new copy with a large guess, starting it from mid-stream has exactly the same outcome as if we started it from the beginning of the stream: all previous elements have marginal gain much smaller than the guess and smaller than the threshold so they would have been rejected anyway. We refer to \cite{badanidiyuru2014streaming} for the full details. We only lose $\epsilon$ in the approximation due to guessing and we use $O\left(k \log{k} / \epsilon \right)$ total space to store the $O(\log{k}/\epsilon)$ solutions. Thus, we have the following:

\begin{thm}
There is a streaming algorithm {\onlinekcostscaledgreedy} for the cardinality-constrained problem $\max_{|Q| \leq k} \cov(Q) - \cost(Q)$ that takes as input any $\epsilon > 0$ and it returns a solution $Q$ satisfying
\[ \cov(Q)-\cost(Q)\geq \left(\frac{1}{2}\left(3-\sqrt{5}\right) - \epsilon \right)\cov(\opt)-\cost(\opt) \]
The algorithm uses $O\left(k \log{k} / \epsilon \right)$ space.
\end{thm}

One can further improve the space usage by a $\log{k}$ factor using a modification of this technique due to Kazemi \etal~\cite{kazemi2019submodular} that approximates the value of the optimal solution by the maximum among the maximum singleton value and the best solution constructed so far, instead of only the maximum singleton value.   

\section{Algorithm for the Matroid-constrained Problem}
\label{app:matroid}

In this section, we extend the {\costscaledgreedy} algorithm from Section~\ref{sec:alg-cardinality} to the more general setting of a matroid constraint, i.e., the {\matroidconstrained} problem $\max_{Q \in \mathcal{I}} f(Q)-c(Q)$ where $\mathcal{I}$ is the collection of independent sets in a matroid $\mathcal{M} = (V, \mathcal{I})$. As before, the algorithm is the standard Greedy algorithm applied to the scaled objective $\tilde{g}(Q) = f(Q) - 2c(Q)$. We refer to this algorithm as {\matroidcostscaledgreedy} (Matroid Cost Scaled Greedy) and present its outline in Algorithm \ref{algo:matroid}.

\spara{Running time.}
The worst-case running time of {\matroidcostscaledgreedy} is $O(nk)$ evaluations of the functions $\cov$ and $\cost$ and $O(nk)$ matroid feasibility checks, where $n = |V|$ and $k$ is the rank of the matroid (the size of the largest independent set). The lazy evaluation technique that we described in Section~\ref{sec:alg-cardinality} can be used to speed up the matroid algorithm as well.

\begin{algorithm}[t]
\textbf{Input:} Ground set $V$, scaled objective $\tilde{g}(Q)=\cov(Q)-2\cost(Q)$, matroid $\mathcal{M} = (V, \mathcal{I})$. \\
\textbf{Output:} Solution $Q$.
\begin{algorithmic}[1]
\STATE $Q\gets\emptyset$, $N\gets V$
  \FOR{$i=1,\ldots,n$}
    \IF{$N=\emptyset$}
	    \STATE break
		\ENDIF
    \STATE $e_{i} = \arg\max_{e\in N} \tilde{g}(e|Q)$\label{line:condition}
    \IF{$\tilde{g}(e_i \vert Q) \leq 0$}
      \STATE break
    \ENDIF
	  \STATE $Q\gets Q\cup\{e_{i}\}$
		\STATE remove from $N$ every element $e$ s.t. $Q\cup \{e\}\notin \mathcal{I}$ \label{line:remove-infeasible}
  \ENDFOR
\RETURN{$Q$}
\end{algorithmic}
\caption{\label{algo:matroid}The {\matroidcostscaledgreedy} algorithm for the matroid-constrained problem {\matroidconstrained}.}
\end{algorithm}

\spara{Approximation guarantee.}
The following theorem states our approximation guarantee.
\begin{thm}
\label{thm:matroid}
Algorithm \ref{algo:matroid} returns a solution $Q\in\mathcal{I}$ satisfying $\cov(Q)-\cost(Q)\geq\frac{1}{2}\cov(\opt)-\cost(\opt)$.
\end{thm}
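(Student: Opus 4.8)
The plan is to reduce the theorem to a single structural inequality about the scaled objective, namely $\tilde{g}(Q) \geq \frac{1}{2}\tilde{g}(Q \cup \opt)$, and then recover the claimed bound by the same cost-bookkeeping used in the cardinality case. Indeed, if this inequality holds, then using monotonicity of $\cov$ we get $\tilde{g}(Q \cup \opt) = \cov(Q \cup \opt) - 2\cost(Q \cup \opt) \geq \cov(\opt) - 2\cost(Q \cup \opt)$, so that $\cov(Q) - 2\cost(Q) \geq \frac{1}{2}\cov(\opt) - \cost(Q \cup \opt)$. Adding $\cost(Q)$ to both sides and using $\cost(Q \cup \opt) = \cost(Q) + \cost(\opt \setminus Q) \leq \cost(Q) + \cost(\opt)$ (here we use that $\cost$ is non-negative and linear) yields exactly $\cov(Q) - \cost(Q) \geq \frac{1}{2}\cov(\opt) - \cost(\opt)$. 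So the whole proof rests on establishing $\tilde{g}(Q) \geq \frac{1}{2}\tilde{g}(Q \cup \opt)$, and since $\tilde{g}(\emptyset) = 0$ this in turn follows, via submodularity ($\tilde{g}(Q \cup \opt) - \tilde{g}(Q) \leq \sum_{o \in \opt \setminus Q}\tilde{g}(o \mid Q)$), from the key claim $\sum_{o \in \opt \setminus Q}\tilde{g}(o \mid Q) \leq \tilde{g}(Q)$.

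To prove the key claim I would charge the elements of $\opt \setminus Q$ to the greedy elements, splitting $\opt \setminus Q$ according to feasibility with respect to the final set $Q$. Write $e_1, \dots, e_t$ for the elements of $Q$ in the order the algorithm added them and $S^{(j)} = \{e_1, \dots, e_j\}$. For the elements $o \in \opt \setminus Q$ with $Q \cup \{o\} \in \mathcal{I}$, such an $o$ is still a feasible candidate at termination, so the stopping rule (either $N = \emptyset$, which makes this set empty, or maximal marginal gain $\leq 0$) forces $\tilde{g}(o \mid Q) \leq 0$; these elements contribute nothing positive to the sum. For the elements $o \in \opt \setminus Q$ with $Q \cup \{o\} \notin \mathcal{I}$ --- equivalently, $o$ lies in the span of $Q$ --- I would build an injection $\pi$ into $Q$ by a matroid exchange argument: $Q$ is a basis of the matroid restricted to $\mathrm{span}(Q)$, I extend $\opt \cap \mathrm{span}(Q)$ to a second basis $B$ of $\mathrm{span}(Q)$, and apply the Brualdi bijective exchange between the bases $B$ and $Q$ to obtain distinct elements $\pi(o) = e_j \in Q$ with $(Q \setminus \{e_j\}) \cup \{o\} \in \mathcal{I}$. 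The hereditary property then gives prefix feasibility $S^{(j-1)} \cup \{o\} \in \mathcal{I}$, so $o$ was available when the algorithm chose $e_j$; greedy optimality and submodularity give $\tilde{g}(o \mid Q) \leq \tilde{g}(o \mid S^{(j-1)}) \leq \tilde{g}(e_j \mid S^{(j-1)})$. Summing over these $o$ and using that the $\pi(o)$ are distinct greedy elements, each with strictly positive marginal gain, the sum is bounded by $\sum_{j=1}^{t}\tilde{g}(e_j \mid S^{(j-1)}) = \tilde{g}(Q)$, which establishes the key claim.

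The main obstacle is the construction of the injection $\pi$ with prefix feasibility, together with the complication --- absent in the standard monotone matroid greedy analysis --- that our objective is non-monotone, so the algorithm may halt before $Q$ becomes a basis of $\mathcal{M}$. This is precisely why I split $\opt \setminus Q$ by feasibility and restrict the exchange argument to the matroid on $\mathrm{span}(Q)$: inside $\mathrm{span}(Q)$ the set $Q$ is genuinely a basis, so Brualdi's theorem applies, while the elements outside $\mathrm{span}(Q)$ are exactly the feasible ones handled by the non-positive-gain stopping condition. A secondary point to get right is that prefix feasibility $S^{(j-1)} \cup \{o\} \in \mathcal{I}$ really does follow from the single-swap feasibility $(Q \setminus \{e_j\}) \cup \{o\} \in \mathcal{I}$, which it does because $S^{(j-1)} \subseteq Q \setminus \{e_j\}$; and that $o$ was never removed from the candidate set $N$ before step $j$, which again follows from the hereditary property. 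Note that this exchange-based charging replaces the linear ``earliest unmatched element'' matching used in the cardinality proof, which relied on being able to interleave $Q$ and $\opt$ in a single greedy order --- something no longer available once matroid feasibility constrains which elements the algorithm may pick.
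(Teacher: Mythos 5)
Your proposal is correct, and while it runs on the same engine as the paper's proof --- Brualdi's exchange theorem, prefix feasibility via the hereditary property, the greedy choice at insertion time, and submodularity --- the decomposition is genuinely different. The paper partitions $\opt$ into $O_1$ (elements infeasible to add to the final $Q$) and $O_2 = \opt \cap N$, first proves $|Q| \geq |O_1|$ via the augmentation property, applies Brualdi to $O_1$ and the prefix $Q^{(\ell)}$ with $\ell = |O_1|$, and then assembles the bound through a chain of four lemmas, telescoping coverage over the hybrid sets $Q^{(\ell)} \cup O_1^{(i)}$ and finally transferring from $Q^{(\ell)}$ to $Q$ using positivity of the accepted gains. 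You instead reduce everything to the single inequality $\sum_{o \in \opt \setminus Q} \tilde{g}(o \mid Q) \leq \tilde{g}(Q)$ --- the same inequality that drives the online analysis in Section~\ref{sec:alg-online} --- and prove it by charging the infeasible elements of $\opt \setminus Q$ to distinct elements of the \emph{full} solution $Q$, via Brualdi applied to $Q$ and a basis extension $B$ of $\opt \cap \mathrm{span}(Q)$ inside the restriction to $\mathrm{span}(Q)$. This buys a shorter and more unified argument: the feasible elements of $\opt \setminus Q$ are handled exactly as in the online case, and the paper's auxiliary claim $|Q| \geq |O_1|$ becomes unnecessary, since your extension automatically lives in a set of rank $|Q|$. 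The price is slightly heavier matroid machinery (restriction, span, basis extension), whereas the paper only ever invokes the augmentation property and Brualdi's theorem as stated. Two small points to make explicit when writing it up: the algorithm can also terminate by exhausting all $n$ iterations, but then $Q = V$ and $\opt \setminus Q = \emptyset$, so your case analysis is vacuous there; and in the general setting one only has $\tilde{g}(\emptyset) = \cov(\emptyset) \geq 0$ rather than $\tilde{g}(\emptyset) = 0$, but the resulting inequality $\sum_{j} \tilde{g}(e_j \mid S^{(j-1)}) = \tilde{g}(Q) - \tilde{g}(\emptyset) \leq \tilde{g}(Q)$ still goes the direction you need.
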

Note that the above guarantee matches the approximation of the standard Greedy algorithm for monotone functions (the special case when the costs are equal to $0$). There are simple examples of monotone submodular maximization with a partition matroid constraint for which Greedy only achieves a $\frac{1}{2}$ approximation. There are algorithms that achieve the optimal $1-\frac{1}{e}$ approximation for monotone submodular maximization with a matroid constraint, but these algorithms are generally very inefficient.

\subsection{Proof of Theorem~\ref{thm:matroid}}

In the remainder of this section, we prove Theorem~\ref{thm:matroid}. Note that the algorithm maintains the invariant that $Q \in \mathcal{I}$. Thus we focus on analyzing the objective value. The analysis is similar to the cardinality constraint. The main difference is in constructing the mapping between the solution $Q$ constructed by the algorithm and the optimal solution $\opt$. To this end, we use the following standard result for matroids due to Brualdi \cite{brualdi1969comments} (see also Chapter 39 of the textbook \cite{Schrijver2003}).

\begin{thm}[Brualdi's theorem]
\label{thm:mapping}
Let $I$ and $J$ be two independent sets in a matroid such that $|I|=|J|$. There is a bijection  $\pi\colon I\setminus J\to J\setminus I$ such that $(J\setminus\pi(e))\cup\{e\}$ is independent for every $e\in I\setminus J$.
\end{thm}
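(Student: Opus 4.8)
The plan is to realize the desired bijection as a perfect matching in an auxiliary bipartite ``exchange graph'' and to establish its existence via Hall's marriage theorem. Write $A = I\setminus J$ and $B = J\setminus I$; since $|I|=|J|$, subtracting $|I\cap J|$ gives $|A|=|B|$. Form the bipartite graph $G$ with vertex classes $A$ and $B$, placing an edge between $e\in A$ and $f\in B$ exactly when $(J\setminus\{f\})\cup\{e\}\in\mathcal{I}$. A perfect matching in $G$ is precisely a bijection $\pi\colon A\to B$ such that $(J\setminus\{\pi(e)\})\cup\{e\}$ is independent for every $e$, which is the statement of the theorem. Hence it suffices to verify Hall's condition, namely $|N_G(A')|\geq|A'|$ for every $A'\subseteq A$, where $N_G(A')\subseteq B$ denotes the neighbourhood.

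I would phrase the edge relation through the closure operator $\mathrm{cl}$ of the matroid (the smallest rank-preserving superset). For $f\in B\subseteq J$, the set $J\setminus\{f\}$ is independent of size $|J|-1$, so $(J\setminus\{f\})\cup\{e\}$, which has size $|J|$, is independent if and only if $e\notin\mathrm{cl}(J\setminus\{f\})$. Now fix $A'\subseteq A$ and let $\bar{N}=B\setminus N_G(A')$ be the set of its non-neighbours. By definition of $G$, every $f\in\bar{N}$ and every $e\in A'$ satisfy $e\in\mathrm{cl}(J\setminus\{f\})$, and therefore $A'\subseteq\bigcap_{f\in\bar{N}}\mathrm{cl}(J\setminus\{f\})$.

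The crux, and the step I expect to be the main obstacle, is to show that this intersection of closures collapses, namely that $A'\subseteq\mathrm{cl}(J\setminus\bar{N})$. For this I use fundamental circuits, relying crucially on the fact that $J$ is independent. We may assume $\bar{N}\neq\emptyset$ (otherwise $N_G(A')=B$ and Hall's condition holds trivially). Then for each $e\in A'$ we have $e\in\mathrm{cl}(J\setminus\{f\})\subseteq\mathrm{cl}(J)$, so $J\cup\{e\}$ is dependent and contains a unique circuit $C(e,J)\ni e$, the fundamental circuit. Its defining property is that, for $f\in J$, the set $(J\setminus\{f\})\cup\{e\}$ is independent if and only if $f\in C(e,J)$; equivalently $e\in\mathrm{cl}(J\setminus\{f\})$ if and only if $f\notin C(e,J)$. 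Since $e\in\mathrm{cl}(J\setminus\{f\})$ for every $f\in\bar{N}$, the circuit $C(e,J)$ avoids $\bar{N}$, whence $C(e,J)\setminus\{e\}\subseteq J\setminus\bar{N}$; as $e\in\mathrm{cl}(C(e,J)\setminus\{e\})$, we conclude $e\in\mathrm{cl}(J\setminus\bar{N})$, as claimed.

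It then remains to convert the inclusion $A'\subseteq\mathrm{cl}(J\setminus\bar{N})$ into the counting bound. Observe that $I\cap J\subseteq J\setminus\bar{N}$, since $\bar{N}\subseteq B$ is disjoint from $I$; hence $I\cap J\subseteq\mathrm{cl}(J\setminus\bar{N})$ as well. Consequently the set $A'\cup(I\cap J)$, which is independent because it is a subset of $I$, lies inside $\mathrm{cl}(J\setminus\bar{N})$, a flat of rank $|J\setminus\bar{N}|=|J|-|\bar{N}|$. An independent set contained in a flat cannot exceed its rank, so $|A'|+|I\cap J|\leq|J|-|\bar{N}|$. Substituting $|J|=|I\cap J|+|A|$ and $|\bar{N}|=|B|-|N_G(A')|=|A|-|N_G(A')|$ and simplifying yields $|A'|\leq|N_G(A')|$, which is exactly Hall's condition. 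Hall's theorem then guarantees a perfect matching of $G$, and the associated bijection $\pi$ has the required exchange property, completing the proof.
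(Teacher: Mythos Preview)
The paper does not prove this statement: it is quoted as a standard matroid result due to Brualdi and cited to \cite{brualdi1969comments} and Schrijver's textbook, then used as a black box in the analysis of Theorem~\ref{thm:matroid}. So there is no ``paper's own proof'' to compare against.

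Your argument is correct and is in fact one of the standard proofs of this exchange lemma. The reduction to Hall's theorem on the bipartite exchange graph is the natural route, and your verification of Hall's condition via fundamental circuits is clean: the key observation that $e\in\mathrm{cl}(J\setminus\{f\})$ iff $f\notin C(e,J)$ immediately localizes each $e\in A'$ to $\mathrm{cl}(J\setminus\bar{N})$, and the rank bound on that flat, combined with the independence of $A'\cup(I\cap J)\subseteq I$, gives the inequality. The edge cases (empty $\bar{N}$, ensuring $e\notin J$ so the fundamental circuit is well-defined) are handled. One minor stylistic point: the intermediate claim ``$A'\subseteq\bigcap_{f\in\bar{N}}\mathrm{cl}(J\setminus\{f\})$ collapses to $A'\subseteq\mathrm{cl}(J\setminus\bar{N})$'' is framed as an intersection-of-closures identity, but you do not actually use or prove that identity in general; you instead argue elementwise via circuits, which is fine and is all that is needed.
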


Let $Q^{(i)}$ be the solution $Q$ at the end of iteration $i$ of the algorithm. Let $Q$ and $N$ be the respective sets at the end of the algorithm. We partition $\opt$ into two sets $O_1$ and $O_2$, where $O_2 = \opt \cap N$ and $O_1 = \opt \setminus O_2$. Note that, if $N$ is non-empty, the algorithm terminated because $\tilde{g}(e \vert Q) \leq 0$ for all $e\in N$. 

We first show that $|Q| \geq |O_1|$. Suppose for contradiction that $|Q| < |O_1|$. By the augmentation property, there is an element $e \in O_1 \setminus Q$ such that $Q \cup \{e\} \in \mathcal{I}$. By the hereditary property, we have $Q^{(i)} \cup \{e\} \in \mathcal{I}$ for all $i \leq |Q|$. Thus $e$ could not have been removed from $N$ on line~\ref{line:remove-infeasible} and hence $e \in \opt \cap N = O_2$, contradicting the fact that $e \in O_1$.

Let $\ell = |O_1|$. We analyze $Q^{(\ell)}$ and show the following:

\begin{lem}
We have
\[ 2(\cov(Q^{(\ell)}) - \cost(Q^{(\ell)})) \geq \cov(Q^{(\ell)} \cup O_1) - 2\cost(O_1) \]
\end{lem}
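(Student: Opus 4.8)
The plan is to reduce the stated inequality to the scaled-objective statement
\[ \tilde{g}(Q^{(\ell)}) \geq \tfrac{1}{2}\,\tilde{g}(Q^{(\ell)} \cup O_1), \]
which is the exact matroid analogue of inequality~\eqref{eq:Sell} from the cardinality proof. Once this is established, I would unfold $\tilde{g} = \cov - 2\cost$, multiply through by $2$, and use that $\cost$ is modular and non-negative, so that $\cost(Q^{(\ell)} \cup O_1) - \cost(Q^{(\ell)}) = \cost(O_1 \setminus Q^{(\ell)}) \leq \cost(O_1)$. Rearranging then yields exactly $2(\cov(Q^{(\ell)}) - \cost(Q^{(\ell)})) \geq \cov(Q^{(\ell)} \cup O_1) - 2\cost(O_1)$. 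This last conversion is routine bookkeeping; the real work is the scaled inequality.

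To prove the scaled inequality, first note that since $|Q| \geq |O_1| = \ell$, the set $Q^{(\ell)} = \{e_1, \dots, e_\ell\}$ consists of the first $\ell$ greedily chosen elements and is independent, as is $O_1$, with $|Q^{(\ell)}| = |O_1|$. I would apply Brualdi's theorem (Theorem~\ref{thm:mapping}) with $I = O_1$ and $J = Q^{(\ell)}$ to obtain a bijection $\pi\colon O_1 \setminus Q^{(\ell)} \to Q^{(\ell)} \setminus O_1$ such that $(Q^{(\ell)} \setminus \{\pi(o)\}) \cup \{o\} \in \mathcal{I}$ for every $o \in O_1 \setminus Q^{(\ell)}$. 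By submodularity (subadditivity of marginal gains) we have $\tilde{g}(Q^{(\ell)} \cup O_1) - \tilde{g}(Q^{(\ell)}) \leq \sum_{o \in O_1 \setminus Q^{(\ell)}} \tilde{g}(o \vert Q^{(\ell)})$, so it suffices to bound the right-hand side by $\tilde{g}(Q^{(\ell)}) = \sum_{i=1}^{\ell} \tilde{g}(e_i \vert Q^{(i-1)})$ (using $\tilde g(\emptyset)=0$).

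The crux, and the step I expect to be the main obstacle, is charging each term $\tilde{g}(o \vert Q^{(\ell)})$ to the marginal gain of $\pi(o)$ at the moment it was selected. Writing $\pi(o) = e_j$, I would argue that $o$ was a feasible candidate at iteration $j$: since $Q^{(j-1)} \subseteq Q^{(\ell)} \setminus \{e_j\}$ and $(Q^{(\ell)} \setminus \{e_j\}) \cup \{o\} \in \mathcal{I}$, the hereditary property gives $Q^{(i)} \cup \{o\} \in \mathcal{I}$ for all $i \leq j-1$; hence $o$ was never removed from $N$ on line~\ref{line:remove-infeasible} before iteration $j$, and (as $o \notin Q^{(\ell)}$) it was never added, so $o \in N$ when $e_j$ was chosen. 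The greedy choice then gives $\tilde{g}(e_j \vert Q^{(j-1)}) \geq \tilde{g}(o \vert Q^{(j-1)})$, and submodularity together with $Q^{(j-1)} \subseteq Q^{(\ell)}$ gives $\tilde{g}(o \vert Q^{(j-1)}) \geq \tilde{g}(o \vert Q^{(\ell)})$. Summing over $o \in O_1 \setminus Q^{(\ell)}$, and using that $\pi$ maps bijectively onto $Q^{(\ell)} \setminus O_1$ while every greedy marginal gain $\tilde{g}(e_i \vert Q^{(i-1)})$ is positive (so dropping the terms with $e_i \in O_1 \cap Q^{(\ell)}$ only increases the bound), yields $\sum_{o} \tilde{g}(o \vert Q^{(\ell)}) \leq \sum_{i=1}^{\ell} \tilde{g}(e_i \vert Q^{(i-1)}) = \tilde{g}(Q^{(\ell)})$. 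Combining with the submodularity inequality above gives $\tilde{g}(Q^{(\ell)} \cup O_1) \leq 2\tilde{g}(Q^{(\ell)})$, as required. The delicate point throughout is aligning the Brualdi exchange property with the precise feasibility bookkeeping of the set $N$ across the iterations.
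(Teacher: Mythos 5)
Your proof is correct and follows essentially the same route as the paper's: both rest on Brualdi's theorem to build the exchange bijection between $O_1$ and $Q^{(\ell)}$, and on the identical crux argument that each $o \in O_1 \setminus Q^{(\ell)}$ survives in $N$ (via the hereditary property applied to $(Q^{(\ell)} \setminus \{\pi(o)\}) \cup \{o\} \in \mathcal{I}$) until its matched greedy element is selected, at which point the greedy choice and submodularity give the charging inequality. The only differences are in bookkeeping: you apply submodularity globally to bound $\tilde{g}(Q^{(\ell)} \cup O_1) - \tilde{g}(Q^{(\ell)})$ and discard the terms for elements of $O_1 \cap Q^{(\ell)}$ using positivity of the greedy gains, whereas the paper maps intersection elements to themselves, charges iteration by iteration, and telescopes; both arrangements deliver the same bound.
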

\begin{proof}
Recall that we have $|Q^{(\ell)}| = |O_1| = \ell$. Moreover, $Q^{(\ell)}$ and $O_1$ are both independent. We apply Theorem~\ref{thm:mapping} to obtain a bijective mapping $\pi \colon O_1 \to Q^{(\ell)}$ such that $(Q^{(\ell)} \setminus \pi(e)) \cup \{e\} \in \mathcal{I}$ for all $e \in O_1 \setminus Q^{(\ell)}$. We augment $\pi$ by setting $\pi(e) = e$ for all $e \in O_1 \cap Q^{(\ell)}$, and obtain a bijection from $O_1$ to $Q^{(\ell)}$. We let $O_1^{(i)} = \pi^{-1}(Q^{(i)})$. 

Consider an iteration $i\leq \ell$ in which $Q^{(i)} \neq Q^{(i-1)}$ and thus $Q^{(i)} = Q^{(i-1)} \cup \{e_i\}$. Let $o_i = \pi^{-1}(e_i)$. If $o_i = e_i$, then clearly
\[ \tilde{g}(e_i \vert Q^{(i-1)}) = \tilde{g}(o_i \vert Q^{(i-1)}) \]
Suppose that $o_i \neq e_i$. We have $o_i \in O_1 \setminus Q^{(\ell)}$ and, by the choice of $\pi$, we have $(Q^{(\ell)} \setminus \{e_i\}) \cup \{ o_i \} \in \mathcal{I}$. Since $(Q^{(\ell)} \setminus \{e_i\}) \supseteq Q^{(j)}$ for all $j < i$, the hereditary property implies that $Q^{(j)} \cup \{o_i\} \in \mathcal{I}$ and thus $o_i$ could not have been removed from $N$ on line~\ref{line:remove-infeasible} in any iteration $j < i$. Thus $o_i \in N$ at the beginning of iteration $i$ and thus $o_i$ is a candidate for $e_i$. Therefore, by the choice of $e_i$, we have 
\[ \tilde{g}(e_i \vert Q^{(i-1)}) \geq \tilde{g}(o_i \vert Q^{(i-1)}) \]

Thus, for every iteration $i \leq \ell$ for which $Q^{(i)} = Q^{(i-1)} \cup \{e_i\}$, we have
\[
\cov(Q^{(i)})-\cov(Q^{(i-1)})-2\cost(e_{i})\geq \cov(Q^{(i-1)}\cup\{o_{i}\})-\cov(Q^{(i-1)})-2\cost(o_{i})
\]
where $o_i = \pi^{-1}(e_i)$.

We have
\[
\cov(Q^{(i-1)}\cup\{o_{i}\})-\cov(Q^{(i-1)})
= \cov(o_i \vert Q^{(i-1)})
\geq \cov(o_i \vert Q^{(\ell)} \cup O_1^{(i-1)})
= \cov(Q^{(\ell)} \cup O_1^{(i)})-\cov(Q^{(\ell)} \cup O_1^{(i-1)})
\]
where the inequality is by submodularity and the last equality is by $O^{(i)} = O^{(i-1)} \cup \{o_i\}$. 

Thus
\[
\cov(Q^{(i)})-\cov(Q^{(i-1)})-2\cost(e_{i})\geq \cov(Q^{(\ell)} \cup O_1^{(i)}) -\cov(Q^{(\ell)} \cup O_1^{(i-1)}) - 2\cost(o_{i})
\]
Summing up over all $i \leq \ell$ and using that $O_1^{(\ell)}=\pi^{-1}(Q^{(\ell)})=O_1$, we obtain
\begin{align*}
\cov(Q^{(\ell)})-2\cost(Q^{(\ell)}) & \geq \cov(Q^{(\ell)} \cup O_1) - \cov(Q^{(\ell)}) - 2\cost(O_1)\\
\Rightarrow 2(\cov(Q^{(\ell)})-\cost(Q^{(\ell)})) & \geq \cov(Q^{(\ell)} \cup O_1)-2\cost(O_1)
\end{align*}
\end{proof}

\begin{lem}
We have
\[ \cov(Q) \geq \cov(Q \cup O_2) - 2\cost(O_2) \]
\end{lem}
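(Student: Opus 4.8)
The plan is to combine the termination condition of the algorithm with the standard subadditivity of marginal gains for submodular functions. First I would recall that $O_2 = \opt \cap N$, so every element of $O_2$ is still present in $N$ when the algorithm halts. If $O_2 = \emptyset$ (in particular if $N = \emptyset$ at termination) the claimed inequality reduces to $\cov(Q) \geq \cov(Q)$ and holds trivially, so I would dispatch this case immediately and assume $O_2 \neq \emptyset$. In that case $N$ is non-empty at termination, which (as already noted in the excerpt) can only happen because the algorithm stopped on the non-positive-gain condition, i.e. $\tilde{g}(e \mid Q) \leq 0$ for every $e \in N$. Since $\tilde{g} = \cov - 2\cost$, this says precisely that
\[
\cov(e \mid Q) \leq 2\cost(e) \qquad \text{for every } e \in O_2 .
\]

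Next I would upper-bound the joint marginal gain of the whole set $O_2$ by the sum of the individual marginal gains. This is the elementary consequence of submodularity (telescoping the marginals while adding the elements of $O_2$ one at a time and using that marginal gains only decrease as the base grows):
\[
\cov(Q \cup O_2) - \cov(Q) \leq \sum_{e \in O_2} \cov(e \mid Q) .
\]
Substituting the per-element bound $\cov(e \mid Q) \leq 2\cost(e)$ and summing gives $\cov(Q \cup O_2) - \cov(Q) \leq 2\sum_{e \in O_2}\cost(e) = 2\cost(O_2)$, since $\cost$ is linear. Rearranging yields exactly $\cov(Q) \geq \cov(Q \cup O_2) - 2\cost(O_2)$, as desired.

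I do not expect a genuine obstacle here; the lemma is the ``easy half'' that complements the preceding matroid-exchange lemma handling $O_1$. The only point that needs care is the logical justification of the termination fact: one must confirm that $O_2 \neq \emptyset$ forces $N \neq \emptyset$, and hence that the loop exited via the $\tilde{g}(e_i \mid Q) \leq 0$ break rather than silently — this is what licenses applying the non-positive-gain condition to each $e \in O_2 \subseteq N$. Once that is pinned down, the remaining work is just the two-line submodular subadditivity argument followed by the linearity of $\cost$.
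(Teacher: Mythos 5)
Your proof is correct and follows essentially the same route as the paper's: invoke the termination condition $\tilde{g}(e \mid Q) \leq 0$ for all $e \in N \supseteq O_2$ (established in the text preceding the lemma), convert it to the per-element bound $\cov(e \mid Q) \leq 2\cost(e)$, and then telescope via submodularity and use linearity of $\cost$ to conclude. Your extra care about why $O_2 \neq \emptyset$ licenses the non-positive-gain condition is a slightly more explicit rendering of a fact the paper states just before the lemma, not a different argument.
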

\begin{proof}
We may assume that $O_2 \neq \emptyset$, since otherwise the lemma is immediate. For every $o \in O_2$, we have $\tilde{g}(o \vert Q) \leq 0$ and thus
\[ 2\cost(o) \geq \cov(o \vert Q) \]
Summing up and using submodularity, we obtain
\[ 2\cost(O_2) \geq \sum_{o \in O_2} \cov(o \vert Q) \geq \cov(Q \cup O_2) - \cov(Q) \]
Rearranging, we obtain
\[ \cov(Q) \geq \cov(Q \cup O_2) - 2\cost(O_2) \]
\end{proof}
By combining the two lemmas and using submodularity, we obtain the following result.
\begin{lem}
We have
\[ \cov(Q^{(\ell)}) + \cov(Q) - 2\cost(Q^{(\ell)}) \geq \cov(\opt) - 2\cost(\opt)\]
\end{lem}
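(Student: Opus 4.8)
The plan is to combine the two preceding lemmas by a direct addition, convert the cost terms using linearity of $\cost$, and then apply submodularity once to the two coverage terms that appear on the right-hand side. First I would rewrite the first lemma as $2\cov(Q^{(\ell)}) - 2\cost(Q^{(\ell)}) \geq \cov(Q^{(\ell)} \cup O_1) - 2\cost(O_1)$ and add it to the second lemma, $\cov(Q) \geq \cov(Q \cup O_2) - 2\cost(O_2)$, to obtain
\[ 2\cov(Q^{(\ell)}) - 2\cost(Q^{(\ell)}) + \cov(Q) \geq \cov(Q^{(\ell)} \cup O_1) + \cov(Q \cup O_2) - 2\cost(O_1) - 2\cost(O_2). \]
Since $O_1$ and $O_2$ partition $\opt$ and $\cost$ is linear, the two cost terms on the right collapse into $-2\cost(\opt)$.

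The crux is to lower bound $\cov(Q^{(\ell)} \cup O_1) + \cov(Q \cup O_2)$. I would invoke the submodularity inequality $\cov(S) + \cov(T) \geq \cov(S \cup T) + \cov(S \cap T)$ with the choice $S = Q^{(\ell)} \cup O_1$ and $T = Q \cup O_2$. Because $Q^{(\ell)} \subseteq Q$ (the algorithm only grows $Q$), the union simplifies to $S \cup T = Q \cup O_1 \cup O_2 = Q \cup \opt$, while the intersection contains $Q^{(\ell)}$ since $Q^{(\ell)}$ lies in $S$ and also in $Q \subseteq T$. Monotonicity of $\cov$ then gives $\cov(S \cup T) \geq \cov(\opt)$ and $\cov(S \cap T) \geq \cov(Q^{(\ell)})$, hence
\[ \cov(Q^{(\ell)} \cup O_1) + \cov(Q \cup O_2) \geq \cov(\opt) + \cov(Q^{(\ell)}). \]

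Substituting this bound yields $2\cov(Q^{(\ell)}) - 2\cost(Q^{(\ell)}) + \cov(Q) \geq \cov(\opt) + \cov(Q^{(\ell)}) - 2\cost(\opt)$, and subtracting one copy of $\cov(Q^{(\ell)})$ from both sides gives exactly the claimed $\cov(Q^{(\ell)}) + \cov(Q) - 2\cost(Q^{(\ell)}) \geq \cov(\opt) - 2\cost(\opt)$. The only genuine obstacle is picking the correct pair of sets for the submodularity step and checking the set-algebra facts $S \cup T = Q \cup \opt$ and $Q^{(\ell)} \subseteq S \cap T$; the rest is bookkeeping with the linearity of the cost and the monotonicity of the coverage, with care taken to track which terms carry the factor of $2$.
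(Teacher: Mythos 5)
Your proof is correct and follows essentially the same route as the paper: add the two preceding lemmas, collapse the cost terms via linearity and the partition $\opt = O_1 \cup O_2$, apply submodularity to $S = Q^{(\ell)} \cup O_1$ and $T = Q \cup O_2$, and finish with monotonicity. The only (harmless, in fact slightly more careful) difference is that you bound $\cov(S \cap T) \geq \cov(Q^{(\ell)})$ via containment and monotonicity, whereas the paper asserts the equality $S \cap T = Q^{(\ell)}$, which implicitly uses that $O_1 \cap Q = \emptyset$.
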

\begin{proof}
By combining the two lemmas above, we obtain
\[ 
2(\cov(Q^{(\ell)}) - \cost(Q^{(\ell)})) + \cov(Q) \geq \cov(Q^{(\ell)} \cup O_1) + \cov(Q \cup O_2) - 2\cost(\opt)
\]
Recall that $O_1$ and $O_2$ is a partition of $\opt$ and $Q^{(\ell)} \subseteq Q$. By submodularity, we have
\[ \cov(Q^{(\ell)} \cup O_1) + \cov(Q \cup O_2)
\geq \cov((Q^{(\ell)} \cup O_1) \cap (Q \cup O_2)) + \cov((Q^{(\ell)} \cup O_1) \cup (Q \cup O_2))
= \cov(Q^{(\ell)}) + \cov(Q \cup \opt)
\]
Therefore
\[ 
\cov(Q^{(\ell)}) - 2\cost(Q^{(\ell)}) + \cov(Q) \geq \cov(Q \cup \opt) - 2\cost(\opt)
\]
Since $\cov$ is monotone, we have $\cov(Q \cup \opt) \geq \cov(\opt)$, and thus
\[ 
\cov(Q^{(\ell)}) - 2\cost(Q^{(\ell)}) + \cov(Q) \geq \cov(\opt) - 2\cost(\opt)
\]
\end{proof}
Since the algorithm adds elements with positive marginal gain, we obtain the following result.
\begin{lem}
We have
\[ 2(\cov(Q)-\cost(Q)) \geq \cov(Q^{(\ell)}) + \cov(Q) - 2\cost(Q^{(\ell)}) \] 
\end{lem}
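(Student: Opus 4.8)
The plan is to recognize that this inequality is nothing more than the statement that the scaled objective $\tilde{g} = \cov - 2\cost$ does not decrease as the algorithm runs past iteration $\ell$. First I would rearrange the claimed inequality into a cleaner equivalent form. Expanding the left-hand side as $2\cov(Q) - 2\cost(Q)$ and cancelling the common term $\cov(Q)$ that appears on both sides, the inequality becomes
\[ \cov(Q) - 2\cost(Q) \geq \cov(Q^{(\ell)}) - 2\cost(Q^{(\ell)}), \]
which is precisely $\tilde{g}(Q) \geq \tilde{g}(Q^{(\ell)})$. So the whole lemma reduces to showing that the final solution has scaled objective at least that of the intermediate solution $Q^{(\ell)}$.

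Next I would establish the containment $Q^{(\ell)} \subseteq Q$. Recall from the earlier part of the proof of Theorem~\ref{thm:matroid} that $|Q| \geq |O_1| = \ell$. Since each completed iteration of {\matroidcostscaledgreedy} adds exactly one element, the algorithm performed at least $\ell$ element-adding iterations, so $Q^{(\ell)}$ is well defined, consists of the first $\ell$ elements added, and is a subset of the final solution $Q$.

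Finally, I would write the difference as a telescoping sum over the elements added after iteration $\ell$, namely
\[ \tilde{g}(Q) - \tilde{g}(Q^{(\ell)}) = \sum_{i=\ell+1}^{|Q|} \tilde{g}(e_i \vert Q^{(i-1)}). \]
Every term in this sum is positive, because the algorithm adds an element $e_i$ only when its marginal gain $\tilde{g}(e_i \vert Q^{(i-1)})$ is strictly positive (otherwise it breaks). Hence the right-hand side is non-negative, giving $\tilde{g}(Q) \geq \tilde{g}(Q^{(\ell)})$ and therefore the desired inequality. There is no genuine obstacle in this step; the only point that requires care is justifying the containment $Q^{(\ell)} \subseteq Q$, which rests entirely on the previously established bound $|Q| \geq \ell$.
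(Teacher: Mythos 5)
Your proof is correct and follows essentially the same route as the paper's: both arguments reduce the claim to $\tilde{g}(Q) \geq \tilde{g}(Q^{(\ell)})$ and establish it by telescoping the non-negative marginal gains $\tilde{g}(e_i \vert Q^{(i-1)})$ over iterations $\ell+1,\dots,|Q|$. Your explicit justification that $Q^{(\ell)} \subseteq Q$ via the earlier bound $|Q| \geq |O_1| = \ell$ is a point the paper leaves implicit, but it is the same argument.
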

\begin{proof}
Since the algorithm adds elements with positive marginal gain, we have $\tilde{g}(Q^{(i)}) - \tilde{g}(Q^{(i-1)}) = \tilde{g}(e_i \vert Q^{i-1}) \geq 0$. Thus
\[ \tilde{g}(Q) - \tilde{g}(Q^{(\ell)}) = \sum_{i = \ell+1}^{|Q|} (\tilde{g}(Q^{(i)}) - \tilde{g}(Q^{i-1})) \geq 0 \]
Therefore
\[ \cov(Q) - 2\cost(Q) \geq \cov(Q^{(\ell)}) - 2\cost(Q^{(\ell)}) \]
and thus
\[ 2\cov(Q) - 2\cost(Q) \geq \cov(Q^{(\ell)}) + \cov(Q) - 2\cost(Q^{(\ell)}) \]
\end{proof}
Theorem~\ref{thm:matroid} now follows from the last two lemmas.

\section{Variants of {\costscaledgreedy} and {\matroidcostscaledgreedy}}
\label{sec:alg-variants}

In this section, we describe a small variant of {\costscaledgreedy} and its matroid counterpart {\matroidcostscaledgreedy}. As before, the algorithms run Greedy on the scaled objective $\tilde{g} = \cov - 2\cost$ but, instead of stopping as soon as the marginal gains with respect to $\tilde{g}$ become negative, they continue until the budget is reached and they return the prefix solution that has largest value with respect to the original objective $g$. Note that the solution of {\costscaledgreedy} (and {\matroidcostscaledgreedy}, respectively) is included as a prefix and thus the value of the solution constructed by these variants is at least as good as those of the original algorithms. Thus these variants retain the same approximation guarantee while at the same time being less conservative and potentially obtaining better objective. On the other hand, the original algorithms stop earlier and thus they may save on running time.

\begin{algorithm}[t]
\textbf{Input:} Ground set $V$, combined objective $g(Q)=\cov(Q)-\cost(Q)$, scaled objective $\tilde{g}(Q)=\cov(Q)-2\cost(Q)$, cardinality $k$. \\
\textbf{Output:} Solution $Q$.
\begin{algorithmic}[1]
\STATE $Q\gets\emptyset$
  \FOR{$i=1,\ldots,k$}
    \STATE $e_{i} = \arg\max_{e\in V \setminus Q}\tilde{g}(e|Q)$
	  \STATE $Q\gets Q\cup\{e_{i}\}$
  \ENDFOR
\RETURN{$\arg\max_{1\leq i \leq k} g(\{e_1, \dots, e_i\})$}
\end{algorithmic}
\caption{\label{algo:cardinality-variant} A variant of {\costscaledgreedy} algorithm for the cardinality-constrained problem {\constrained}.}
\end{algorithm}

\begin{algorithm}[t]
\textbf{Input:} Ground set $V$, combined objective $g(Q)=\cov(Q)-\cost(Q)$, scaled objective $\tilde{g}(Q)=\cov(Q)-2\cost(Q)$, matroid $\mathcal{M} = (V, \mathcal{I})$. \\
\textbf{Output:} Solution $Q$.
\begin{algorithmic}[1]
\STATE $Q\gets\emptyset$, $N\gets V$
  \FOR{$i=1,\ldots,n$}
    \IF{$N \setminus Q = \emptyset$}
	    \STATE break
		\ENDIF
    \STATE $e_{i} = \arg\max_{e\in N \setminus Q} \tilde{g}(e|Q)$
	  \STATE $Q\gets Q\cup\{e_{i}\}$
		\STATE remove from $N$ every element $e$ s.t. $Q\cup \{e\}\notin \mathcal{I}$
  \ENDFOR
\RETURN{$\arg\max_{1\leq i \leq |Q|} g(\{e_1, \dots, e_i\})$}
\end{algorithmic}
\caption{\label{algo:matroid-variant} A variant of {\matroidcostscaledgreedy} algorithm for the matroid-constrained problem {\matroidconstrained}.}
\end{algorithm}

These variants are simpler to analyze than the original algorithms. We include the simpler analysis for a cardinality constraint here.

\begin{thm}
Algorithm \ref{algo:cardinality-variant} returns a solution $Q$ of size at most $k$ satisfying $\cov(Q)-\cost(Q)\geq\frac{1}{2}\cov(\opt)-\cost(\opt)$.
\end{thm}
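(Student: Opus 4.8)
The plan is to exploit the fact that this variant runs Greedy on $\tilde{g} = \cov - 2\cost$ for the full budget and returns the best prefix with respect to $g$; this makes it enough to exhibit a \emph{single} good prefix, since the returned solution can only be better. Writing $e_1, \dots, e_k$ for the elements in the order Greedy selects them and $S^{(i)} = \{e_1, \dots, e_i\}$, I would set $\ell = |\opt|$ (note $\ell \le k$ since $\opt$ is feasible) and argue that the prefix $S^{(\ell)}$ already satisfies $g(S^{(\ell)}) \ge \frac{1}{2}\cov(\opt) - \cost(\opt)$. Because Algorithm~\ref{algo:cardinality-variant} returns $\arg\max_{1 \le i \le k} g(S^{(i)})$, the output inherits this bound. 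The key simplification over Theorem~\ref{thm:cardinality} is that, since Greedy never stops early here, the prefix $S^{(\ell)}$ of the right length always exists, so the case analysis on whether $|Q| \ge \ell$ or $|Q| < \ell$ disappears entirely.

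The heart of the argument is the bijection bound $\tilde{g}(S^{(\ell)}) \ge \frac{1}{2}\tilde{g}(S^{(\ell)} \cup \opt)$, which I would prove exactly as in the telescoping step of Theorem~\ref{thm:cardinality} but with an easier-to-construct bijection. Since $|S^{(\ell)}| = |\opt| = \ell$, define $\pi \colon \opt \to S^{(\ell)}$ to be the identity on $\opt \cap S^{(\ell)}$ and any bijection from $\opt \setminus S^{(\ell)}$ onto $S^{(\ell)} \setminus \opt$. For each $i \le \ell$, the element $\pi^{-1}(e_i)$ lies in $\opt$ and, by this choice of $\pi$, is never one of $e_1, \dots, e_{i-1}$: if $\pi^{-1}(e_i) \ne e_i$ then $\pi^{-1}(e_i) \in \opt \setminus S^{(\ell)}$, hence outside $S^{(i-1)} \subseteq S^{(\ell)}$. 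Thus $\pi^{-1}(e_i)$ is a valid candidate when Greedy picks $e_i$, giving $\tilde{g}(e_i \mid S^{(i-1)}) \ge \tilde{g}(\pi^{-1}(e_i) \mid S^{(i-1)})$. Submodularity then lets me enlarge the conditioning set from $S^{(i-1)}$ to $S^{(\ell)} \cup \pi^{-1}(S^{(i-1)})$, and summing over $i \le \ell$ telescopes to $\tilde{g}(S^{(\ell)}) - \tilde{g}(\emptyset) \ge \tilde{g}(S^{(\ell)} \cup \opt) - \tilde{g}(S^{(\ell)})$; using $\tilde{g}(\emptyset) = \cov(\emptyset) \ge 0$ yields the claimed factor-$\tfrac12$ inequality.

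The remaining step is routine bookkeeping to pass from the scaled objective back to $g$. Expanding $\tilde{g} = \cov - 2\cost$ in $\tilde{g}(S^{(\ell)}) \ge \frac{1}{2}\tilde{g}(S^{(\ell)} \cup \opt)$ and adding $\cost(S^{(\ell)})$ to both sides gives $g(S^{(\ell)}) \ge \frac{1}{2}\cov(S^{(\ell)} \cup \opt) - \cost(S^{(\ell)} \cup \opt) + \cost(S^{(\ell)})$. Monotonicity of $\cov$ gives $\cov(S^{(\ell)} \cup \opt) \ge \cov(\opt)$, and linearity with non-negativity of $\cost$ gives $\cost(S^{(\ell)} \cup \opt) - \cost(S^{(\ell)}) = \cost(\opt \setminus S^{(\ell)}) \le \cost(\opt)$; together these yield $g(S^{(\ell)}) \ge \frac{1}{2}\cov(\opt) - \cost(\opt)$.

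I expect no real obstacle here: the only point requiring care is constructing $\pi$ and verifying that $\pi^{-1}(e_i)$ is still available when Greedy chooses $e_i$, which is precisely where the full-budget, best-prefix structure pays off by eliminating the conservative early-stopping cases of the original proof.
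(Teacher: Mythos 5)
Your proposal is correct and follows essentially the same route as the paper's own proof: the same reduction to bounding the prefix $S^{(\ell)}$ with $\ell = |\opt|$, the same bijection $\pi$ (identity on $\opt \cap S^{(\ell)}$, arbitrary elsewhere), the same greedy-plus-submodularity telescoping to get $\tilde{g}(S^{(\ell)}) \geq \frac{1}{2}\tilde{g}(S^{(\ell)} \cup \opt)$, and the same final bookkeeping via monotonicity of $\cov$ and linearity and non-negativity of $\cost$. No gaps to report.
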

\begin{proof}
Let $Q^{(i)} = \{e_1, \dots, e_i\}$ be the elements selected in the first $i$ iterations. Let $\ell = |\opt|\leq k$. We will show that $g(Q^{(\ell)}) \geq \frac{1}{2} \cov(\opt) - \cost(\opt)$. Since the algorithm returns the prefix with maximum $g$ value, we obtain the same guarantee for the solution returned by the algorithm.

Let $\pi: \opt \to Q^{(\ell)}$ be an arbitrary bijective mapping that maps the elements of $\opt \cap Q^{(\ell)}$ to themselves. We now show that, for all $i \leq \ell$, we have
\[ \tilde{g}(e_i \vert Q^{(i-1)}) \geq \tilde{g}(\pi^{-1}(e_i) \vert Q^{(i-1)}) \]
The inequality trivially holds if $\pi^{-1}(e_i) = e_i$. Therefore we may assume that $\pi^{-1}(e_i) \neq e_i$ and thus $\pi^{-1}(e_i) \in \opt \setminus Q^{(\ell)} \subseteq V \setminus Q^{(i-1)}$. It follows that $\pi^{-1}(e_i)$ was a candidate for $e_i$. Thus, by the choice of $e_i$, we have
\[ \tilde{g}(e_i \vert Q^{(i-1)}) \geq \tilde{g}(\pi^{-1}(e_i) \vert Q^{(i-1)}) \]
Let $\opt^{(i)} = \pi^{-1}(Q^{(i)})$ for all $i \leq \ell$. By the above inequality and submodularity, for all $i \leq \ell$, we have
\[ \tilde{g}(e_i \vert Q^{(i-1)}) \geq \tilde{g}(\pi^{-1}(e_i) \vert Q^{(i-1)}) \geq \tilde{g}(\pi^{-1}(e_i) \vert Q^{(\ell)} \cup \opt^{(i-1)}) \]
Summing up over all $i \leq \ell$ and using that $\opt^{(i)} = \opt^{(i-1)} \cup \{\pi^{-1}(e_i)\}$ for all $i\leq \ell$ and $\opt^{(\ell)} = \pi^{-1}(Q^{(\ell)}) = \opt$, we obtain
\begin{align*}
\tilde{g}(Q^{(\ell)}) &\geq \tilde{g}(Q^{(\ell)}\cup \opt) - \tilde{g}(Q^{(\ell)})\\
\Rightarrow \tilde{g}(Q^{(\ell)}) &\geq \frac{1}{2} \tilde{g}(Q^{(\ell)}\cup \opt)
\end{align*}
Recall that $\tilde{g} = \cov - 2\cost$ and $g = \cov - \cost$. The above inequality gives
\begin{align*}
g(Q^{(\ell)}) 
&= \cov(Q^{(\ell)}) - \cost(Q^{(\ell)})\\
&\geq \frac{1}{2} \underbrace{\cov(Q^{(\ell)}\cup \opt)}_{\geq \cov(\opt)} - \underbrace{(\cost(Q^{(\ell)} \cup \opt) - \cost(Q^{(\ell)}))}_{=\cost(\opt \setminus Q^{(\ell)}) \leq \cost(\opt)}\\
&\geq \frac{1}{2} \cov(\opt) - \cost(\opt)
\end{align*}
where the last inequality is by monotonicity of $\cov$ and linearity and non-negativity of $\cov$.
\end{proof}

\end{document}